\title{$J$-states and quantum channels between indefinite metric spaces}
\author{ {\bf{ Ra\'{u}l Felipe-Sosa$^{a}$ and Ra\'ul Felipe$^{b}$}}\\ \\
        $^{a}$ BUAP, Puebla. M\'{e}xico. \\
        $^{b}$CIMAT \\
        Callej\'on Jalisco s/n Mineral de Valenciana \\
        Guanajuato, Gto, M\'exico.\\
        raulf@cimat.mx}
\date{}
\newtheorem{theorem}{\textbf{Theorem}}
\newtheorem{corollary}[theorem]{\textbf{Corollary}}
\newtheorem{example}{\textbf{Example}}
\newtheorem{definition}[theorem]{\textbf{Definition}}
\newtheorem{lemma}[theorem]{\textbf{Lemma}}
\newtheorem{proposition}[theorem]{\textbf{Proposition}}
\newtheorem{remark}[theorem]{Remark}
\newenvironment{proof}[1][Proof]{\noindent\textbf{#1.} }{\ \rule{0.5em}{0.5em}}
\begin{document}

\maketitle

\begin{abstract}
In the present work, we introduce and study the concepts of state and quantum channel on spaces equipped with an indefinite metric.
Exclusively, we will limit our analysis to the matricial framework. As it will be confirmed below, from our research it is noticed that,
when passing to the spaces with indefinite metric, the use of the adjoint of a matrix with respect to the indefinite metric is required in the construction of states and quantum channels; which prevents us to consider the space of matrices of certain order $M_{n}(\mathbb{C})$ as
a $C^{\ast}$-algebra. In our case, this adjoint is defined through a $J$-metric, where the matrix $J$ is a fundamental symmetry
of $M_{n}(\mathbb{C})$. In our paper, for quantum operators, we include the general setting in the which, these operators map
$J_{1}$-states into $J_{2}$-states, where $J_{2}\neq \pm J_{1}$ are two arbitrary fundamental symmetries. In the middle of this program, we carry
out a study of the completely positive maps between two different positive matrices spaces by considering two different indefinite metrics on $\mathbb{C}^{n}$.
\end{abstract}

\noindent{\it{2010 Mathematics Subject Classification (MSC2010):}}
Primary 81P16, 81P45; Secondary 15B48, 46C20.\\

\noindent{\it{Key words:}} Quantum channel, Fundamental symmetry matrix, Indefinite metric space.\\

\section{Introduction}

Quantum channels are the entities through which the encoded information of quantum systems is transmitted in the form of states.
All these objects have their mathematical interpretation, in which the theory of completely positive maps stands out.
Thus, quantum states, completely positive maps and quantum channels are basic tools of the functional analysis underlying in the quantum
information theory.

The aim of the present paper is to provide a more general setting for the theory of quantum information by means of tools arising
from operators theory on indefinite metric spaces. In particular, we introduce the notion of $J$-state, completely $J$-positive map
and quantum $J$-channel, where $J$ is a fundamental symmetry matrix of $M_{n}(\mathbb{C})$, that is, $J^{\ast}=J$ and $J^{2}=J$. We consider the
general situation in which the transmission of information is carried out between two different indefinite metric spaces with the same
underlying vector space (see section $4$). The results of the article generalize those of the case $J=I_{n}$, where $I_{n}$ is the identity
matrix of order $n$.

From a mathematical point of view, any of the problems that are studied for the usual quantum channels are susceptible of being transferred
to the new situation considered in this article.

The paper is organized as follows. In section $2$, we introduce and study the notion of $J$-state which will derive in an analysis
about the convenience or not of requiring or modifying the condition $Tr\,A=1$ for a $J$-state. In section $3$, we introduce and study the new
concepts of completely $J$-positive maps and quantum $J$-channels in which the Kraus type operators are revisited; this fact is
characterized by the extensive use of the $J$-adjoint of a matrix. In section $4$, we also consider quantum channels which transform
$J_{1}$-states in $J_{2}$-states belong to $M_{n}(\mathbb{C})$, here $J_{2}\neq \pm J_{1}$ are two fundamental symmetries. Theses kinds of
channels are called by us quantum $(J_{1},J_{2})$-channels.

\subsection{Motivation, indefinite quantum mechanics, the $J$-Bloch ball}

Early last century physicists$'$s need of computing probabilities of events concerning the subatomic world
of elementary particles led to the arise of quantum probability (the interested reader can consult \cite{partha}). On the other hand, in the
last century, there were some efforts to transfer quantum mechanics to spaces with indefinite metric which started with the work of P. M. Dirac, one
of the founders of this type of spaces together with S.L. Sobolev in the former Soviet Union (Sobolev discovered the Pontryagin spaces $\Pi_{\kappa}$
of order $1$, that is for $\kappa=1$, in the case that, the indefinite metric is defined on a underlying vector spaces of infinite dimension).
Some of these articles pointed out the possibility of using negative values for quantum probabilities (see \cite{ascoli}, \cite{gupta1},
\cite{miyatake}, \cite{schnitzer},). Below, we will try to approach our work in this context.

Let $p=(p_{1},p_{2})$ be a vector such that $p_{1}> 0$, $p_{2}< 0$ and $p_{1}-p_{2}=1$. Then, we do the following observations:
\begin{enumerate}
  \item It is clear that the vector $p_{1,1}=(p_{1},-p_{2})$ is a distribution of probability on the finite probability space $\{1,2\}$. Moreover,
        the vector $s_{p}=(\sqrt{p_{1}},\sqrt{-p_{2}})$ is a usual state vector of $\mathbb{C}^{2}$. Note that
        \begin{equation}\label{for weinsten1}
        p_{1}-p_{2}=Tr\,\left(
                          \begin{array}{cc}
                            1 & 0 \\
                            0 & -1 \\
                          \end{array}
                        \right)\left(
                                 \begin{array}{cc}
                                   p_{1} & 0 \\
                                   0 & p_{2} \\
                                 \end{array}
                               \right)
        =Tr\,JP_{J}=1,
        \end{equation}
        the matrix $P_{J}$ will be called $J$-density matrix, in this case. Note that $JP_{J}$ is a positive definite matrix for which
        $Tr\,JP_{J}=1$.
  \item Let $A=(a_{1},a_{2})$ be an observable of phase space $\mathbb{C}^{2}$ which is the result of a measurement, thus
  $A\in \mathbb{R}^{2}$. The generalized probability distribution $p$ is associated to $A$ which is apparent in the usual metric of $\mathbb{C}^{2}$
  but effective with respect to the indefinite metric induced by the matrix $J$. Then, the $J$-expectation of $A$ is, in this case
  \begin{equation}\label{for weinsten2}
  E_{J}(A)=p_{1}a_{1}-p_{2}a_{2}=Tr\,\left(
                          \begin{array}{cc}
                            1 & 0 \\
                            0 & -1 \\
                          \end{array}
                        \right)\left(
                                 \begin{array}{cc}
                                   p_{1} & 0 \\
                                   0 & p_{2} \\
                                 \end{array}
                               \right)\left(
                                        \begin{array}{cc}
                                          a_{1} & 0 \\
                                          0 & a_{2} \\
                                        \end{array}
                                      \right)=Tr\,P_{J}JA\,.
  \end{equation}
\end{enumerate}

Taking into account that the Bloch ball is a well known model of quantum mechanics, we will try to replicate this with respect to a certain indefinite
metric on $\mathbb{C}^{2}$. Let us stay in the space of matrices of order $2$ with complex entries which is denoted by $M_{2}(\mathbb{C})$.
Any matrix $S\in M_{2}(\mathbb{C})$ is a linear combination of the identity matrix $I_{2}$ and the Pauli matrices
\begin{equation*}
\sigma_{1}=\left(
             \begin{array}{cc}
               0 & 1 \\
               1 & 0 \\
             \end{array}
           \right),\,\,\,\,\,\,\,\,\,\,\,\,\,\sigma_{2}=\left(
                                                  \begin{array}{cc}
                                                    0 & -i \\
                                                    i & 0 \\
                                                  \end{array}
                                                \right),\,\,\,\,\,\,\,\,\,\,\,\,\,\sigma_{3}=\left(
                                                                                               \begin{array}{cc}
                                                                                                 1 & 0 \\
                                                                                                 0 & -1 \\
                                                                                               \end{array}
                                                                                             \right),
\end{equation*}
that is, $S=\frac{1}{2}\left [z_{0}I_{2}+z_{1}\sigma_{1}+z_{2}\sigma_{2}+z_{3}\sigma_{3} \right ]$, where $(z_{1},z_{2},z_{3},z_{4})\in \mathbb{C}^{4}$.
The crucial fact of this decomposition is that $Tr\,S=z_{0}$.

Denote by $J$ the following signature matrix $J=\left(
                                                \begin{array}{cc}
                                                 1 & 0 \\
                                                 0 & -1 \\
                                                 \end{array}
                                                 \right)$ which satisfies the following properties $J^{\ast}=J$ and
$J^{2}=I_{2}$. Next, we will consider the indefinite metric space $(\mathbb{C}^{2}, [.,.])$ where $[x,y]=\langle Jx,y\rangle$.
Then, it is easy to see that
\begin{equation}\label{editor3}
JS=\frac{1}{2}\left [z_{3}\left(
                            \begin{array}{cc}
                              1 & 0 \\
                              0 & 1 \\
                            \end{array}
                          \right)-iz_{2}\left(
                                          \begin{array}{cc}
                                            0 & 1 \\
                                            1 & 0 \\
                                          \end{array}
                                        \right)+iz_{1}\left(
                                                        \begin{array}{cc}
                                                          0 & -i \\
                                                          i & 0 \\
                                                        \end{array}
                                                      \right)+z_{0}\left(
                                                                     \begin{array}{cc}
                                                                       1 & 0 \\
                                                                       0 & -1 \\
                                                                     \end{array}
                                                                   \right)
 \right],
\end{equation}
thus $Tr\,JS=z_{3}$. Therefore, the next observations follow from (\ref{editor3})\,:
\begin{itemize}
  \item The matrix $JS$ is self-adjoint if and only if $z_{3}=x_{3}$ and $z_{0}=x_{0}$ are real, moreover, $z_{2}=ix_{2}$ and $z_{1}=-ix_{1}$ being $x_{1}, x_{2}$ both real. In other words, every $J$-selfadjoint matrix $S$ has the form
      \begin{equation}\label{editor31}
      S=\left(
          \begin{array}{cc}
            \frac{x_{3}+x_{0}}{2} & \frac{x_{2}-ix_{1}}{2} \\
            -\frac{x_{2}+ix_{1}}{2} & -\frac{x_{3}-x_{0}}{2} \\
          \end{array}
        \right).
      \end{equation}
  \item Now, $JS$ is a positive definite matrix if and only if $\|(x_0,x_{1},x_{2})\|_{\mathbb{R}^{3}}\leq x_{3}$. This happens because the characteristic polynomial of $JS$ is, in our case, $\lambda^{2}-x_{3}\lambda+\frac{x^{2}_{3}-x^{2}_{2}-x^{2}_{1}-x^{2}_{0}}{4}$, which will have two positive eigenvalues under the previous restriction.
\end{itemize}

One can ensure that if $z_{3}=x_{3}=1$, then, $Tr JS=1$ and we have $0\leq JS$ provided that $\|(x_0,x_{1},x_{2})\|_{\mathbb{R}^{3}}\leq 1$.
We call this set, the \textbf{Bloch $J$-ball}. The analysis for $J=I_{2}$ can be found in \cite{wolf}.
On the other hand, every $2\times 2$ matrix $A$ such that $A=JA^{\ast}J=A^{\natural}$ (this means that $A$ is $J$-selfadjoint) has the form
\begin{equation}\label{editor32}
A=\left(
          \begin{array}{cc}
            \frac{y_{0}+y_{3}}{2} & \frac{y_{2}+iy_{1}}{2} \\
            -\frac{y_{2}-iy_{1}}{2} & \frac{y_{0}-y_{3}}{2} \\
          \end{array}
        \right),
\end{equation}
where the $y_{i}$ are real for $i=1,2,3,4$. \\

On the stage $(M_{2}(\mathbb{C}),J)$, we define the following notions\,:
\begin{itemize}
  \item A $J$-observable is a matrix $A$ of type (\ref{editor32}), such that, each $y_{i}$ is real for all $i$. Since $A$ is a $J$-selfadjoint
        matrix on $\Pi_{1}=(\mathbb{C}^{2},J)$, the spectrum of $A$ is symmetric with respect to the real axis. The more general result
        in this sense, can be found in \cite{krein}, page  $135$. \textbf{Then, the possible values of $A$ (its outcomes) are defined as the
        real part of the eigenvalues}.

  Observe that in the classic quantum mechanics in which $J$ reduces to the identity matrix $I_{2}$, $A$ is
  a usual selfadjoint matrix, so the spectrum of $A$ is real, which is in accordance with the previously given definition.
  \item A \textbf{quantum $J$-state} is represented by a matrix of the form
  \begin{equation}\label{editor4}
  S=\left(
          \begin{array}{cc}
            \frac{1+x_{0}}{2} & \frac{x_{2}-ix_{1}}{2} \\
            -\frac{x_{2}+ix_{1}}{2} & -\frac{1-x_{0}}{2} \\
          \end{array}
        \right),
  \end{equation}
  where $\|(x_{0},x_{1},x_{2})\|_{\mathbb{R}^{3}}\leq 1$. Here, the variables $x_{i}$ denote real numbers for $i=0,1,2$. The eigenvalues of $S$ are necessarily real (the proof that both eigenvalues are on the real axis follows from theorem $3.27$ page $111$ of \cite{azizov}). Even more, we
  can calculate exactly these eigenvalues
  \begin{equation}\label{editor5}
  p_{1}(S)=\frac{x_{0}+1}{2},\,\,\,\,\,\,\,\,\,\, p_{2}(S)=\frac{x_{0}-1}{2}=-\frac{1-x_{0}}{2},
  \end{equation}
  and taking into account that $|x_{0}|< 1$, we can assume that $0< x_{0} < 1$ corresponding to case (\ref{for weinsten1}). In fact, it is clear
  that $p_{1}-p_{2}=1$. We say that $p(S)=(p_{1}(S),p_{2}(S))$ is the \textbf{generalized probability distribution} corresponding to $S$.
  \end{itemize}

Now, we proceed as in \cite{partha}. With this purpose, we return to the $J$-observable matrices (\ref{editor32}). Let us recall that $x_{0}, x_{1}, x_{2}, x_{3}\in \mathbb{R}$. Suppose next
$y^{2}_{1}+y^{2}_{2}< y^{2}_{3}$ further $y_{1}\neq 0$ or $y_{2}\neq 0$. Then, the eigenvalues of $A$, that is, its possible values, are real
and they have the following form (the calculations that follow were done with MATLAB)
\begin{equation}\label{editor6}
\lambda_{1}=\frac{y_{0}+\sqrt{y^{2}_{3}-y^{2}_{1}-y^{2}_{1}}}{2},\,\,\,\,\,\,\,\,\,\,\,\,\lambda_{2}=\frac{y_{0}-\sqrt{y^{2}_{3}-y^{2}_{1}-y^{2}_{1}}}{2},
\end{equation}
and its corresponding eigenvectors are
\begin{equation}\label{editor7}
  E_{1}=\left(
          \begin{array}{c}
            -\frac{y_{3}+\sqrt{y^{2}_{3}-y^{2}_{1}-y^{2}_{1}}}{y_{2}-iy_{1}} \\
            1 \\
          \end{array}
        \right),\,\,\,\,\,\,\,\,\,\,\,\,\,\,E_{2}=\left(
          \begin{array}{c}
            \frac{y_{3}-\sqrt{y^{2}_{3}-y^{2}_{1}-y^{2}_{1}}}{y_{2}-iy_{1}} \\
            1 \\
          \end{array}
        \right),
\end{equation}
which are two linearly independent vectors of $\mathbb{C}^{2}$ because
\begin{equation*}
|E_{1}\,\,E_{2}|=\frac{2\sqrt{y^{2}_{3}-y^{2}_{1}-y^{2}_{1}}}{y_{2}-iy_{1}}\neq 0,
\end{equation*}
it shows that $\mathbb{C}^{2}=\langle E_{1}\rangle\oplus \langle E_{2}\rangle$, where $\langle E_{k}\rangle$ is the subspace spanned by $E_{k}$
for $k=1,2$. On the other hand, since $\lambda_{1}\neq \lambda_{2}$ we have $[E_{1},E_{2}]=\langle JE_{1},E_{2} \rangle=0$, which implies that even
more $\mathbb{C}^{2}=\langle E_{1}\rangle[\dotplus] \langle E_{2}\rangle$. Here, the symbol $[\dotplus]$ means $J$-orthogonal direct sum. Define $V_{k}=\frac{E_{k}}{\|E_{k}\|}$ for $k=1,2$. Then, any vector $V\in \mathbb{C}^{2}$ can be written in the form $V=[.\,, V_{1}]V_{1}+[.\,, V_{2}]V_{2}$.
It shows that
\begin{equation}\label{editor8}
A\,V=\lambda_{1}[V\,, V_{1}]V_{1}+\lambda_{2}[V\,, V_{2}]V_{2}=\lambda_{1}\Pi_{1}V+\lambda_{2}\Pi_{2}V,
\end{equation}
for any $V\in \mathbb{C}^{2}$, where we have denoted $\Pi_{k}=[.\,, V_{k}]V_{k}=\langle J\,.\,, V_{k}\rangle V_{k}$
which is a $J$-orthogonal projector for each $k$, that is, $\Pi^{\natural}_{k}=J\Pi^{\ast}_{k}J=\Pi_{k}$ and $\Pi_{k} J\Pi_{k}=\Pi_{k}$.

\begin{definition}  Let us denote $H_{k}=\Pi_{k}\mathbb{C}^{2}$. The probability of observing a system in a subspace $H_{k}$ is defined as
$Tr\,(S\Pi_{k})$. The expectation of the $J$-observable $A$ relative to $S$ is $E(A)=Tr\, SA=\lambda_{1}Tr\,S\Pi_{1}+\lambda_{2}Tr\,S\Pi_{2}$.
It means that a single measurement of $A$ in state $S$ produces a values $\lambda_{k}$ with probability $Tr\,(S\Pi_{k})$. We call the pair
$(\mathbb{C}^{2}, S)$ a quantum probability $J$-space.
\end{definition}

\begin{remark}Consider now a $J$-unitary matrix $V$ belongs to $M_{2}(\mathbb{C})$, that is, a matrix for which $VV^{\natural}=V^{\natural}V=I_{2}$.
We recall again that $V^{\natural}=JV^{\ast}J$. This condition is equivalent to the following $V^{\ast}JV=VJV^{\ast}=J$. The $J$-unitary matrices
of $M_{2}(\mathbb{C})$ have the form
\begin{equation*}
V=\left(
    \begin{array}{cc}
      \alpha & \beta \\
      \overline{\beta} & \overline{\alpha} \\
    \end{array}
  \right),\,\,\,\,\,\,\,\,\,\,\,\,\,\,|\alpha|^{2}-|\beta|^{2}=1.
\end{equation*}

It is easy to show that if $A$ is a $J$-observable and $S$ is a $J$-state, then $VAV^{\natural}$ (respectively $V^{\natural}AV$) is a $J$-observable and $VSV^{\natural}$ (respectively $V^{\natural}SV$) is a $J$-state. Hence, it is convenient (and in occasions necessary. For example, when measurements cannot be made) to study the dynamic of the following matrix functions:
\begin{equation}\label{editor9}
A(t)=V_{1}(t)AV_{1}^{\natural}(t),\,\,\,\,\,\,\,\,\,\,\,S(t)=V_{2}(t)SV_{2}^{\natural}(t),
\end{equation}
where $V_{1}(t)$ and $V_{2}(t)$ are families of $J$-unitary matrices. It is well known that (\ref{editor9}) is equivalent to the equations of Lax
type
\begin{equation}\label{editor10}
\frac{dA}{dt}=[B_{1}(t),A(t)],\,\,\,\,\,\,\,\,\,\,\frac{dS}{dt}=[B_{2}(t),S(t)],
\end{equation}
where $B_{1}(t)=-V_{1}(t)\frac{dV^{\natural}_{1}(t)}{dt}$ and $B_{2}(t)=-V_{2}(t)\frac{dV^{\natural}_{2}(t)}{dt}$ are both skew $J$-selfadjoint
matrices, that is, $B^{\natural}_{k}(t)=-B_{k}(t)$ for $k=1,2$. We can take, as usual, $V_{1}(t)=V_{2}(t)=e^{-itM}$ where in our context $M$ is a
$J$-observable matrix, that is, $M^{\natural}=M$ of the form (\ref{editor32}).
\end{remark}

To finish this subsection, we wish to note that the theory of linear operators on spaces with indefinite metric is not restricted to
the habitual theory of linear operators on Hilbert spaces. This is revealed in the differences that appear, for instance, in the
spectral properties of the operators acting on these spaces. We show an example: it is well known that a selfadjoint operator
has all its eigenvalues on the real axis, while some conjugated pairs of eigenvalues of a $J$-selfadjoint operator can escape from
the real axis. The reader may consult \cite{azizov}, \cite{bognar} and \cite{krein} for more details.

\section{Overview on $J$-indefinite linear algebra. The notion of $J$-state}

In all quantum system, a state describes the current condition of that system. For instance, the states are relevant to study any important
quantum information experiment. In this section, we introduce a generalization of this notion from the point of view of the theory of indefinite metric spaces. Here, these new introduced states are called by us $J$-states and they live in space of $J$-positive matrices where $J$ is a fundamental symmetry of $M_{n}(\mathbb{C})$. Briefly, ew expose the bases of our indefinite quantum proposal presented in this section:
\begin{itemize}
  \item Our quantum system is described by $\mathbb{C}^{n}$ equipped with one or two indefinite metrics of the following form $[.,.]=\langle J.\,,. \rangle_{\mathbb{C}^{n}}$, where $J$ is a fundamental symmetry matrix (see below for details) and $\langle .\,,. \rangle_{\mathbb{C}^{n}}$ is the usual inner product in $\mathbb{C}^{n}$. This system shall be called \textbf{indefinite quantum system}.
  \item In our approach, the quantum states will be introduced following an analogous procedure to that of operator theory in spaces with indefinite metric (see \cite{azizov} and \cite{gohberg}). Specifically, in what follows, a $J$-state will be a matrix $B$ such that $JB$ is a usual state or density matrix in quantum mechanics.
\end{itemize}

The objective of this section is to introduce and study the notion of quantum state in indefinite metric spaces. According to the opinion
of the authors, the relevance of the present study is given by its usefulness in the security of the transmission of information through
quantum $J$-channels.  In this sense, we believe that send the information by a $J$-channel will provide bigger chances to encrypt
the information. \\

Next, we review some aspects of indefinite linear algebra.
We say that an indefinite inner product (or metric) is given in $\mathbb{C}^{n}$ if additionally to the usual inner product $\langle \cdot,\cdot \rangle_{\mathbb{C}^{n}}$ we have a function $[\cdot,\cdot]$ from $\mathbb{C}^{n}\times \mathbb{C}^{n}\longrightarrow \mathbb{C}$, which
satisfies the following axioms:
\begin{enumerate}
  \item $[\alpha x_{1}+\beta x_{2},y]=\alpha[x_{1},y]+\beta[x_{2},y]$,\,\,\,\,\,\,$\forall\,x_{1},x_{2},y\in \mathbb{C}^{n}$,\,\,\,\,\,\,
        $\forall\,\,\alpha,\beta \in\mathbb{C}$,
  \item $[x,y]=\overline{[y,x]}$,\,\,\,\,$\forall\,x,y\in \mathbb{C}^{n}$,
  \item if\,\,\,$[x,y]=0$,\,\,\,\,$\forall\,y\in \mathbb{C}^{n}$\,\,\,then\,\,\,\,$x=0$.
\end{enumerate}

One can check that all invertible $n\times n$ selfadjoint matrix $H$ induces an indefinite metric, through the formula
\begin{equation*}
[x,y]=[x,y]_{H}=\langle Hx,y \rangle_{\mathbb{C}^{n}},\,\,\,\,\,for\,\,\,all\,\,\,x,y\in \mathbb{C}^{n}.
\end{equation*}

In this paper, we will only concentrate in a special case of this type of indefinite metrics, when $H=J\in M_{n}(\mathbb{C})$ is a fundamental
symmetry matrix, which means that $J^{\ast}=J$ and $J^{2}=I_{n}$. One can easily exhibit the structure of every fundamental symmetry $J$. Define $P_{+}=\frac{I_{n}+J}{2}$, then $P^{\ast}_{+}=P_{+}$
and $P^{2}_{+}=P_{+}$, that is $P_{+}$ is an orthogonal projection matrix. Put $P_{-}=I_{n}-P_{+}=\frac{I_{n}-J}{2}$. It shows that $J=P_{+}-P_{-}$ with $P_{+}+P_{-}=I_{n}$. Conversely, suppose that $P,Q$ are two orthogonal projection matrices, such that, $P+Q=I_{n}$ then $J=P-Q$ is a fundamental symmetry.

We recall that the matrix $M^{\natural}$ is the $J$-adjoint of a matrix $M$ if by definition $[Mx,y]=[x,M^{\natural}y]$ for all
$x,y\in \mathbb{C}^{n}$ where $[x,y]=\langle Jx,y\rangle_{\mathbb{C}^{n}}$. The $J$-adjoint $M^{\natural}$ of $M$ is unique, even more $M^{\natural}=JM^{\ast}J$ for all $M\in M_{n}(\mathbb{C})$, where $M^{\ast}$ stands the usual adjoint of $M$. It is easy to see that
$(M^{\ast})^{\natural}=(M^{\natural})^{\ast}$.

A matrix $A$ is said to be $J$-positive, if $0\leq [Ax,x]$ for all $x\in \mathbb{C}^{n}$, in which case we have $A^{\natural}=A$. Next,
the set of all $J$-positive matrices is denoted by $M^{+}_{n}(\mathbb{C})(J)$. Notice that if $A$ is a $J$-positive matrix, then, it implies
that $JA$ is a positive matrix in the usual sense. An important property of the $J$-adjoint of matrices is the
following: $(AB)^{\natural}=B^{\natural}A^{\natural}$ which is easy of to prove. Indeed, $(AB)^{\natural}=J(AB)^{\ast}J=JB^{\ast}A^{\ast}J=JB^{\ast}J^{2}A^{\ast}J=B^{\natural}A^{\natural}$. For more on the indefinite metric spaces,
see \cite{azizov} and \cite{gohberg}.

Suppose that the matrix $A$ is $J$-positive, which as was mentioned before, it is equivalent to the fact that $JA$ is a positive matrix
(so self-adjoint and of trace class). Then,
\begin{equation}\label{s1 1}
JA=\sum_{i=1}^{n}\lambda_{i}\langle \cdot, e_{i}\rangle_{\mathbb{C}^{n}} \,e_{i}=\sum_{i=1}^{n}\lambda_{i}\,e_{i}\otimes e_{i},\,\,\,\,\,\,\,\,\,\,\,\,Tr\,JA=\sum_{i=1}^{n}\lambda_{i},
\end{equation}
where some of the $0\leq \lambda_{i}$ could be zero, and $\{e_{i}\}$ is an orthonormal basis of $\mathbb{C}^{n}$. That is,
\begin{equation}\label{s1 2}
I_{n}=\sum_{i=1}^{n}\langle \cdot, e_{i}\rangle_{\mathbb{C}^{n}}\,e_{i},\,\,\,\,\,\,\,\,\langle e_{i},e_{j} \rangle_{\mathbb{C}^{n}}=\delta_{ij},
\end{equation}
and, therefore, from (\ref{s1 1}) it follows that
\begin{equation}\label{s1 3}
A=\sum_{i=1}^{n}\lambda_{i}\langle \cdot, e_{i}\rangle_{\mathbb{C}^{n}} \,Je_{i}=\sum_{i=1}^{n}\lambda_{i}\,(Je_{i})\otimes e_{i}.
\end{equation}

Thus, all $J$-positive matrix $A$ can be written in the form (\ref{s1 3}), where $\{\lambda_{i}\}\subset\mathbb{R}_{+}$ and $\{e_{i}\}$ constitutes an
orthonormal basis. From (\ref{s1 1}) and (\ref{s1 2}) it follows that the first equation of (\ref{s1 1}) represents the spectral decomposition of $B=JA$.
Since $B^{\ast}=B$, then we obtain
\begin{equation}\label{s1 4}
Tr\, B^{\natural}=Tr\,JB^{\ast}J=Tr\,JBJ=Tr\,B=\sum_{i=1}^{n}\lambda_{i}.
\end{equation}

\begin{proposition}\label{proposicion1 s1}
A matrix $A$ is $J$-positive if and only if $A^{\natural}=A$ and there exists $B$ such that $A=B^{\natural}JB$.
\end{proposition}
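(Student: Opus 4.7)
The plan is to reduce everything to the classical fact that a positive semidefinite matrix admits a factorization $P = C^{\ast}C$. The key observation, already isolated in the text preceding the proposition, is that $A$ is $J$-positive if and only if $JA$ is positive semidefinite in the usual sense (this uses $J^{2}=I_{n}$ and $J^{\ast}=J$, so that $[Ax,x]=\langle JAx,x\rangle_{\mathbb{C}^{n}}$ and the self-adjointness of $JA$ coincides with $A^{\natural}=A$). Once we have this reduction, the rest is a matter of moving factors of $J$ around, remembering that $B^{\natural}=JB^{\ast}J$ and $J^{2}=I_{n}$.

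For the forward direction, I would start from $A^{\natural}=A$ with $JA \geq 0$, and invoke the classical factorization to obtain a matrix $B$ with $B^{\ast}B = JA$. Then I compute
\begin{equation*}
B^{\natural} J B \;=\; (JB^{\ast}J)\, J\, B \;=\; J B^{\ast} J^{2} B \;=\; J B^{\ast} B \;=\; J(JA) \;=\; A,
\end{equation*}
which produces the required $B$.

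For the converse, suppose $A = B^{\natural} J B$. I would first check $A^{\natural}=A$ by using the earlier-proved rule $(XY)^{\natural}=Y^{\natural}X^{\natural}$, together with $(B^{\natural})^{\natural}=B$ and $J^{\natural}=JJ^{\ast}J=J$, to get $A^{\natural} = B^{\natural} J^{\natural} (B^{\natural})^{\natural} = B^{\natural} J B = A$. Then $J$-positivity follows by the same kind of computation as above:
\begin{equation*}
[Ax,x] \;=\; \langle JAx,x\rangle_{\mathbb{C}^{n}} \;=\; \langle J B^{\natural} J B x, x\rangle_{\mathbb{C}^{n}} \;=\; \langle B^{\ast} B x, x\rangle_{\mathbb{C}^{n}} \;=\; \|Bx\|^{2} \;\geq\; 0,
\end{equation*}
where the middle simplification again uses $JB^{\natural}J = J(JB^{\ast}J)J = B^{\ast}$.

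There is no real obstacle here: the proposition is essentially the classical factorization theorem for positive semidefinite matrices translated into the $J$-world by conjugation with $J$. The only point requiring mild care is bookkeeping of the factors of $J$, which is controlled by the two identities $J^{2}=I_{n}$ and $B^{\natural}=JB^{\ast}J$.
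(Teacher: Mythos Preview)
Your proof is correct and follows essentially the same route as the paper's: reduce to the classical factorization $JA=B^{\ast}B$ and then shuffle factors of $J$ using $B^{\natural}=JB^{\ast}J$ and $J^{2}=I_{n}$. The paper handles the converse with the phrase ``proved similarly,'' whereas you spell it out (and in doing so actually show that the hypothesis $A^{\natural}=A$ on the converse side is redundant, since it already follows from $A=B^{\natural}JB$).
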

\begin{proof}Suppose first that $A$ is $J$-positive, then $JA$ is a positive matrix and so self-adjoint. Thus, $JA=(JA)^{\ast}=A^{\ast}J$
which implies that $A=A^{\natural}$. On the other hand, it is well known that there is a matrix $B$ such that $JA=B^{\ast}B$, that is,
$A=JB^{\ast}J^{2}B=B^{\natural}JB$. The other implication is proved similarly.
\end{proof}

An interesting and simple class of maps related with a fundamental symmetry $J$ is one constituted by linear $J$-positive functionals.

\begin{remark}
Let $\Theta$ be a linear $J$-positive functional on $M_{n}(\mathbb{C})$, that is, $\Theta(A)\geq 0$ for all matrix $A$ which is $J$-positive.
Then, one can see that $\Theta$ is necessarily bounded, even more $\|\Theta\|=\Theta(J)$. This result is a consequence of the fact that
$\widehat{\Theta}(A)=\Theta(JA)$ is a positive linear functional and that $X\longrightarrow JX$ is a bijective map. Moreover, $\Theta(A^{\natural})=\overline{\Theta(A)}$ for any $A\in M_{n}(\mathbb{C})$, it implies that if $A$ is a $J$-selfadjoint matrix then
$\Theta(A)$ is real. In this case, $\langle A,B \rangle_{\Theta}=\Theta (AJB^{\natural})$ constitutes a pre-inner product on $M_{n}(\mathbb{C})$
(really an indefinite metric). Observe that from (\ref{s1 3}) it follows that, in particular, $\Theta(A)=Tr\,AJ=Tr\,JA$ is a linear $J$-positive
functional.
\end{remark}

\subsection{The notion of $J$-state in $M_{n}(\mathbb{C})$ provided with an indefinite metric}

Now, we present a notion of quantum state in spaces with an indefinite metric and immediately after, we study some of its properties. In this point,
we recall that a usual quantum state of $\mathbb{C}^{n}$ is a matrix $S$ which is positive and moreover it satisfies that $Tr\,S=1$. The geometry of
the space of all quantum states can be found in \cite{beng}.

We directly propose

\begin{definition}A matrix $B$ will be called a quantum $J$-state once checked that $JB$ is a quantum state.
\end{definition}

We have

\begin{lemma}The matrix $B$ is a quantum $J$-state if and only if $B$ is a $J$-positive matrix and $Tr\,BJ=1$.
\end{lemma}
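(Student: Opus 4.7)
The proof is essentially a direct unpacking of the definition of $J$-state in light of what was established earlier in the section, so I would not expect any real obstacle; the main task is just to match the two conditions ``$JB$ is positive'' and ``$Tr\,JB = 1$'' against ``$B$ is $J$-positive'' and ``$Tr\,BJ = 1$''.

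My plan is to argue both implications in parallel by using two facts already available in the excerpt: first, that a matrix $A$ is $J$-positive if and only if $JA$ is a positive matrix in the usual sense (noted explicitly in the paragraph after the definition of $J$-positivity and also embedded in Proposition~\ref{proposicion1 s1} via $A = B^{\natural}JB$); second, the cyclicity of the trace, which yields $Tr\,JB = Tr\,BJ$. For the forward direction, I would start by assuming $B$ is a $J$-state, so by definition $JB$ is a usual quantum state, i.e., $JB \geq 0$ and $Tr\,JB = 1$. The positivity of $JB$ translates directly into $[Bx,x] = \langle JBx,x\rangle_{\mathbb{C}^{n}} \geq 0$ for all $x \in \mathbb{C}^{n}$, which is exactly $J$-positivity of $B$; and cyclicity converts the trace condition into $Tr\,BJ = 1$.

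For the converse, starting from $J$-positivity of $B$ and $Tr\,BJ=1$, I would reverse the same two steps: $J$-positivity gives $\langle JBx,x\rangle_{\mathbb{C}^{n}} \geq 0$ for all $x$, hence $JB$ is a positive matrix in the usual Hilbert-space sense, and cyclicity of the trace turns $Tr\,BJ=1$ into $Tr\,JB=1$. Together these two conclusions say that $JB$ is a usual quantum state, which by the definition just introduced means that $B$ is a quantum $J$-state. The whole argument is short and symmetric; the only thing worth pointing out, to keep the reader oriented, is the equality $Tr\,BJ = Tr\,JB$ and the identification $\langle JBx,x\rangle_{\mathbb{C}^{n}} = [Bx,x]$, both of which are immediate from the definitions set up earlier in the section.
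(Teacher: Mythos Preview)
Your proof is correct and matches the paper's own argument essentially step for step: both directions rely on the equivalence ``$B$ is $J$-positive $\Longleftrightarrow$ $JB$ is positive'' together with the trace identity $Tr\,BJ = Tr\,JB$ (the paper writes this as $Tr\,BJ = Tr\,J(JB)J = Tr\,JB$ via the substitution $A=JB$, which is the same thing). There is nothing to add.
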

\begin{proof}Suppose that $B$ is a quantum $J$-state, then $A=JB$ is a quantum state. Hence, $A$ is a positive matrix and $Tr\,A=1$.
First, it implies that $B=JA$ is a $J$-positive matrix and second $Tr\,BJ=Tr\,JAJ=Tr\,A=1$. Conversely, assume that $B$ is $J$-positive
and $Tr\,BJ=1$ then $A=JB$ is positive and also $Tr\,A=Tr\,JAJ=Tr\,BJ=1$, that is, $A$ is a quantum state.
\end{proof}

\begin{example}\label{olv 2 subsec 1}
Observe that the simplest quantum $J$-state is $\Pi =\langle \cdot, e\rangle Je=Je\otimes e$ where $\|e\|=1$. It is called
a pure quantum $J$-state. One can see that $\Pi^{\natural}=\Pi$ and $\Pi J\Pi=\Pi$. The set of all pure quantum $J$-states is denoted by
$\mathfrak{P}_{J}(\mathbb{C})$. A quantum $J$-state that is not pure is called mixed quantum $J$-state.
\end{example}

\begin{theorem}Let $\mathfrak{S}_{J}(\mathbb{C}^{n})$ be the set of all quantum $J$-states. Then, $\mathfrak{S}_{J}(\mathbb{C}^{n})$ is a convex set.
\end{theorem}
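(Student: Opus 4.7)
The plan is to verify convexity directly from the characterization given in the preceding lemma: a matrix $B$ lies in $\mathfrak{S}_J(\mathbb{C}^n)$ if and only if $B$ is $J$-positive and $\mathrm{Tr}\,BJ = 1$. Both conditions are preserved under convex combinations, so the argument should be short and essentially computational.

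Concretely, I would fix $B_1, B_2 \in \mathfrak{S}_J(\mathbb{C}^n)$ and $t \in [0,1]$, and set $B = tB_1 + (1-t)B_2$. First I would check $J$-positivity: for any $x \in \mathbb{C}^n$,
\begin{equation*}
[Bx,x] = t[B_1 x, x] + (1-t)[B_2 x, x] \geq 0,
\end{equation*}
since each summand is nonnegative and $t, 1-t \geq 0$. Next, using linearity of the trace,
\begin{equation*}
\mathrm{Tr}(BJ) = t\,\mathrm{Tr}(B_1 J) + (1-t)\,\mathrm{Tr}(B_2 J) = t + (1-t) = 1.
\end{equation*}
By the lemma, $B \in \mathfrak{S}_J(\mathbb{C}^n)$, establishing convexity.

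An alternative, more conceptual route is to observe that the map $\Phi \colon M_n(\mathbb{C}) \to M_n(\mathbb{C})$ defined by $\Phi(B) = JB$ is an $\mathbb{R}$-linear bijection that sends $\mathfrak{S}_J(\mathbb{C}^n)$ onto the set of ordinary quantum states of $\mathbb{C}^n$; since the latter set is well known to be convex and affine bijections preserve convexity, the same holds for $\mathfrak{S}_J(\mathbb{C}^n)$. I do not anticipate any real obstacle here: the result is almost immediate once one has the trace/$J$-positivity characterization of $J$-states, and the only thing worth emphasizing is that no use is made of $J^2 = I_n$ beyond what was already needed for the characterization itself.
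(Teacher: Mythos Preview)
Your proof is correct and follows essentially the same route as the paper: take a convex combination, verify $J$-positivity (the paper does this by noting $JB_1, JB_2$ are positive so $JB$ is, while you compute $[Bx,x]$ directly---an equivalent trivial step), and check $\mathrm{Tr}\,BJ=1$ by linearity. The alternative $\Phi(B)=JB$ argument you sketch is also fine and amounts to the same observation.
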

\begin{proof}Suppose that $B_{1}, B_{2}\in \mathfrak{S}_{J}(\mathbb{C}^{n})$ and $\beta\in (0,1)$.
From the previous lemma, we must see that $B=\beta B_{1}+(1-\beta)B_{2}$ is $J$-positive and also $Tr\,BJ=1$. Note that both $JB_{1}$ and
$JB_{2}$ are positive matrices, it shows that $JB$ is a positive matrix, in other words, $B$ is a $J$-positive. On the other hand, $Tr\,BJ=
\beta\, Tr\,(B_{1}J)+(1-\beta)\,Tr\,(B_{2}J)=1$.
\end{proof}

\begin{proposition}\label{las niñas 1}A quantum $J$-state is pure, if and only if it is not a convex combination of elements
of $\mathfrak{S}_{J}(\mathbb{C}^{n})$.
\end{proposition}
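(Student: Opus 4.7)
The plan is to reduce the statement to the classical characterization of extreme points of the set $\mathfrak{S}(\mathbb{C}^{n})$ of usual density matrices via the bijection $\Psi\colon \mathfrak{S}_{J}(\mathbb{C}^{n})\longrightarrow \mathfrak{S}(\mathbb{C}^{n})$ given by $\Psi(B)=JB$. This map is well defined by the lemma preceding the statement, since $B$ is a quantum $J$-state precisely when $JB$ is a usual quantum state; its inverse is $A\mapsto JA$, using $J^{2}=I_{n}$. Crucially, $\Psi$ is linear, so it sends convex combinations to convex combinations and nontrivial decompositions to nontrivial decompositions. Hence $B$ is an extreme point of $\mathfrak{S}_{J}(\mathbb{C}^{n})$ if and only if $JB$ is an extreme point of $\mathfrak{S}(\mathbb{C}^{n})$.

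First I would dispatch the easy direction. Assume $B=\Pi=Je\otimes e$ is pure, so $JB=e\otimes e$ is a rank-one orthogonal projection. Suppose $B=\beta B_{1}+(1-\beta)B_{2}$ with $B_{1},B_{2}\in \mathfrak{S}_{J}(\mathbb{C}^{n})$ and $\beta\in(0,1)$. Applying $\Psi$ one gets $e\otimes e=\beta\,JB_{1}+(1-\beta)\,JB_{2}$, a convex decomposition inside $\mathfrak{S}(\mathbb{C}^{n})$; since rank-one projections are extreme points of the standard set of density matrices, this forces $JB_{1}=JB_{2}=e\otimes e$, and then $B_{1}=B_{2}=B$.

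For the converse, assume $B$ is not pure. Then $A:=JB$ is a density matrix which is not of the form $e\otimes e$. Using the spectral decomposition $A=\sum_{i=1}^{n}\lambda_{i}\,e_{i}\otimes e_{i}$ from (\ref{s1 1}), together with $\sum_{i}\lambda_{i}=1$ and $\lambda_{i}\geq 0$, the fact that $A$ is not a rank-one projection means that either at least two eigenvalues are strictly positive or, equivalently, the spectrum is not a single $1$ with the rest zero. In either situation I can split the set of indices into two nonempty parts supporting probability weights $\beta$ and $1-\beta$ with $\beta\in(0,1)$, yielding $A=\beta A_{1}+(1-\beta)A_{2}$ with $A_{1}\neq A_{2}$ both in $\mathfrak{S}(\mathbb{C}^{n})$. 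Applying $J$ gives $B=\beta\,JA_{1}+(1-\beta)\,JA_{2}$, exhibiting $B$ as a nontrivial convex combination of two distinct elements of $\mathfrak{S}_{J}(\mathbb{C}^{n})$.

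The only subtle point, and the step I expect to require the most care, is the formal identification of ``pure'' (as defined in Example \ref{olv 2 subsec 1}) with ``extreme point of $\mathfrak{S}_{J}(\mathbb{C}^{n})$'', since the statement phrases the latter as ``not a convex combination of elements'', which must be read as not a nontrivial convex combination of two distinct $J$-states. Once the bijection $\Psi$ is recognized as an affine isomorphism between convex sets, the content of the proposition is just the classical fact that the extreme points of the set of density matrices are exactly the rank-one orthogonal projections $e\otimes e$ with $\|e\|=1$, which under $\Psi^{-1}$ correspond to the matrices $Je\otimes e$ of Example \ref{olv 2 subsec 1}.
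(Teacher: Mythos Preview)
Your proof is correct and follows essentially the same approach as the paper: both directions are handled by transferring the problem to ordinary density matrices via the affine bijection $B\mapsto JB$, invoking extremality of rank-one projections for one implication and the spectral decomposition (equation~(\ref{s1 3}) in the paper) for the other. The paper phrases the converse as a proof by contradiction while you use the contrapositive, but the substance is identical.
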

\begin{proof}
A pure quantum $J$-states $\Pi$ can not be a convex combination of elements in $\mathfrak{S}_{J}(\mathbb{C}^{n})$, otherwise,
$J\Pi$ should be a convex combination of two ordinary quantum states which is impossible. Conversely, if a quantum $J$-states $A$ is
not a convex combination of elements of $\mathfrak{S}_{J}(\mathbb{C}^{n})$, then, it shall be of the form
$\langle \cdot, e \rangle _{\mathbb{C}^{n}}Je$, because in the opposite case
\begin{equation*}
A=\sum_{i=1}^{l}\lambda_{i}\langle \cdot, e_{i}\rangle_{\mathbb{C}^{n}} \,Je_{i}=\sum_{i=1}^{l}\lambda_{i}\,(Je_{i})\otimes e_{i},
\end{equation*}
with $2\leq l$, $\sum \lambda_{i}=1$ and $\{e_{i}\}$ is an orthonormal set. Hence,
\begin{equation*}
A=\lambda_{1}\langle \cdot, e_{1}\rangle_{\mathbb{C}^{n}} \,Je_{1}+(1-\lambda_{1})\sum_{i=2}^{l}\frac{\lambda_{i}}{(1-\lambda_{1})}\langle \cdot, e_{i}\rangle_{\mathbb{C}^{n}} \,Je_{i},
\end{equation*}
which is a contradiction. It shows that the pure quantum $J$-states are the extreme points of convex set $\mathfrak{S}_{J}(\mathbb{C}^{n})$.
\end{proof}

Let $M\in M_{n}(\mathbb{C})$ be a fixed matrix. Then, there exist two orthonormal systems $\{f_{i}\}$ and $\{g_{i}\}$ such that
\begin{equation*}
Mx=\sum_{i=1}^{\nu(M)}s_{i}(M)\langle x, f_{i}\rangle_{\mathbb{C}^{n}} g_{i},\,\,\,\,\,\,\,\,\,\forall x\in \mathbb{C}^{n},
\end{equation*}
where the $s_{i}$ for $i=1,\cdots,\nu(M)$ are the singular values of $M$, that is, the nonzero eigenvalues of $(M^{\ast}M)^{\frac{1}{2}}$.
This is the so-called singular value decomposition of $M$. In the case, when $M^{\ast}=M$ then the $s_{i}(M)$ for $i=1,\cdots,\nu(M)$ are the
eigenvalues of $M$ and $f_{i}=g_{i}$ for all $i=1,\cdots,\nu(M)$. Note that (\ref{s1 3}) constitutes the Schmidt decomposition for a
$J$-positive matrix $A$. Similarly, from (\ref{s1 3}) it follows that $AJ$ is a positive operator because
\begin{equation}
AJ=\sum_{i=1}^{n}\lambda_{i}\langle \cdot, Je_{i}\rangle_{\mathbb{C}^{n}} \,Je_{i}=\sum_{i=1}^{n}\lambda_{i}\,(Je_{i})\otimes Je_{i},
\end{equation}
moreover, $Tr\,AJ=Tr\,JA=\sum_{1}^{n}\lambda_{i}$. Finally, observe that $\langle Je_{i}, Je_{j}\rangle_{\mathbb{C}^{n}}=\delta_{ij}$.

For $1\leq p < \infty$ one defines the following norms
\begin{equation*}
\|M\|_{p}=\left (\sum_{i=1}^{\nu(M)}(s_{i}(M))^{p} \right )^{\frac{1}{p}},
\end{equation*}
and $\|M\|_{\infty}=max_{i}\,s_{i}(M)=\|M\|$. Observe that if $A$ is positive then $\|A\|_{1}=\sum_{i=1}^{\nu(M)}s_{i}(M)=Tr\,A$.
Denote by $\mathcal{S}_{p}$ the Banach space $(M_{n}(\mathbb{C}),\|\cdot\|_{p})$. From now on, the norm $\|\cdot\|_{1}$ is called
the trace-norm. On the other hand, as it was seen before if $A$ is a $J$-positive matrix, then $\|JA\|_{1}=\|AJ\|_{1}$.

Next, we recall some facts related with these spaces $\mathcal{S}_{p}$:
\begin{itemize}
  \item $\|ASB\|_{1}\leq \|A\|\,\|S\|_{1}\,\|B\|,\,\,\,\,\,\,\,\,\,\,\,\,\forall A,S,B\in M_{n},$
  \item For any fixed matrix $T$, the function $F_{T}(\cdot):\,S\longrightarrow F_{T}(S)=Tr(ST)$ defines a continuous linear functional and
  \begin{equation}\label{s1 5}
  |F_{T}(S)|\leq \|S\|_{p}\|T\|_{q},
  \end{equation}
  where $\frac{1}{p}+\frac{1}{q}=1$. In the case $p=1$, this reduces to $|F_{T}(S)|\leq \|S\|_{1}\|T\|$.
\end{itemize}

Observe that if the matrix $T$ is a quantum effect which means that $0\leq T \leq I_{n}$ then $0\leq \|T\|\leq 1$ and so $0\leq F_{T}(S)\leq 1$ for all quantum state $S$. Hence, the value $P_{S}(T)=F_{T}(S)=Tr\, ST$ can be considered as the probability that the effect $T$ emerges in the quantum state $S$, giving rise to a new quantum state.

\begin{definition}A \textbf{quantum $J$-effect} is a matrix $E$ such that $JE$ is a usual quantum effect in the explained above sense.
\end{definition}

We shall denote by $\mathfrak{E}_{J}(\mathbb{C}^{n})$ the set of all $J$-effects which turns out to be a convex set. In fact, if $E_{1}$ and
$E_{2}$ are $J$-effects then $0\leq JE_{1}\leq I_{n}$ and $0\leq JE_{1}\leq I_{n}$ hence for all $\beta\in (0,1)$ we have $0\leq \beta JE_{1}+
(1-\beta)JE_{2}$ and moreover $\beta \langle JE_{1}x,x\rangle_{\mathbb{C}^{n}}+ (1-\beta)\langle JE_{2}x,x\rangle_{\mathbb{C}^{n}}\leq \|x\|^{2}$
for all $x\in\mathbb{C}^{n}$ which means that $0\leq \beta JE_{1}+(1-\beta)JE_{2}\leq I_{n}$. Thus, $\beta E_{1}+(1-\beta)E_{2}$ is a quantum
$J$-effect.

It is useful to introduce $J$-state automorphisms on $\mathfrak{S}_{J}(\mathbb{C}^{n})$.

\begin{definition}A function $\mathfrak{s}:\,\mathfrak{S}_{J}(\mathbb{C}^{n})\longrightarrow \mathfrak{S}_{J}(\mathbb{C}^{n})$ is a $J$-state automorphism if
\begin{enumerate}
  \item The function $\mathfrak{s}$ is a bijection,
  \item $\mathfrak{s}(\beta S_{1}+(1-\beta)S_{2})=\beta\mathfrak{s}(S_{1})+(1-\beta)\mathfrak{s}(S_{2})$ for all $S_{1},S_{2}\in \mathfrak{S}(\mathbb{C}^{n})$ and all $\beta\in (0,1)$.
\end{enumerate}

The case in which $J=I_{n}$, that is, $J$ is the identity matrix can be consulted in \cite{cass}.
\end{definition}

It is clear that the set $Aut_{J}=Aut_{J}(\mathfrak{S}(\mathbb{C}^{n}))$ of all $J$-state automorphisms is a group with respect to the composition of
functions. We recall that $\mathfrak{P}_{J}(\mathbb{C}^{n})=\{\langle \cdot, e \rangle_{\mathbb{C}^{n}}Je\,|\, \|e\|_{\mathbb{C}^{n}}=1\}$ is the set of all pure quantum $J$-states. \\

We already know that $S\in\mathfrak{S}_{J}(\mathbb{C}^{n})$, if and only if $S$ is a $J$-positive matrix and $Tr\, SJ=1$.

\begin{remark}\label{rauli}
From lemma \ref{proposicion1 s1}, we know that if a matrix $A$ is $J$-positive then it is $J$-selfadjoint, that is,
$A^{\natural}=A$. Hence, being $B=C-D$ where $C$ and $D$ are $J$-positive, we have $B^{\natural}=B$. In fact, $B^{\natural}=C^{\natural}-D^{\natural}
=C-D$.
\end{remark}

\begin{proposition}Let $\mathfrak{s}\in Aut_{J}$, then
\begin{itemize}
  \item $\mathfrak{s}$ is the restriction of a unique linear operator $\widetilde{\mathfrak{s}}$
on the real vector space $M_{n}(\mathbb{C})^{sa}(J)=\{A|\,A^{\natural}=A\}$ such that $Tr\,(\widetilde{\mathfrak{s}}(T)J)=Tr\,TJ$ for all
$T\in M_{n}(\mathbb{C})^{sa}(J)$. Moreover, $\widetilde{\mathfrak{s}}$ is a bijection of $M_{n}(\mathbb{C})^{sa}(J)$ on $M_{n}(\mathbb{C})^{sa}(J)$.
  \item moreover, $\mathfrak{s}(\mathfrak{P}_{J})\subset \mathfrak{P}_{J}$.
\end{itemize}
\end{proposition}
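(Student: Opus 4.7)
The plan is to build $\widetilde{\mathfrak{s}}$ in two stages: first on the cone of $J$-positive matrices by normalization against $Tr(\cdot J)$, then on all of $M_{n}(\mathbb{C})^{sa}(J)$ through the decomposition into a difference of $J$-positives recorded in Remark \ref{rauli}. Once this linearization is in hand, the second bullet follows quickly from the extreme-point characterization of pure $J$-states given in Proposition \ref{las niñas 1}.

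If $A$ is a nonzero $J$-positive matrix, then $JA$ is a nonzero positive matrix, so $\lambda(A) := Tr(AJ) > 0$ and $A/\lambda(A) \in \mathfrak{S}_{J}(\mathbb{C}^{n})$. Setting $\widetilde{\mathfrak{s}}(0) := 0$ and $\widetilde{\mathfrak{s}}(A) := \lambda(A)\, \mathfrak{s}(A/\lambda(A))$ otherwise, I would verify additivity on the cone by the convex-combination trick: for $J$-positive $A_1, A_2$ with $\lambda_i = \lambda(A_i)$,
\begin{equation*}
\frac{A_1 + A_2}{\lambda_1 + \lambda_2} = \frac{\lambda_1}{\lambda_1+\lambda_2}\cdot \frac{A_1}{\lambda_1} + \frac{\lambda_2}{\lambda_1+\lambda_2}\cdot \frac{A_2}{\lambda_2}
\end{equation*}
is a convex combination of $J$-states, so applying $\mathfrak{s}$ and rescaling by $\lambda_1 + \lambda_2$ yields $\widetilde{\mathfrak{s}}(A_1 + A_2) = \widetilde{\mathfrak{s}}(A_1) + \widetilde{\mathfrak{s}}(A_2)$. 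Positive homogeneity is immediate from the definition.

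Using Remark \ref{rauli}, every $T \in M_{n}(\mathbb{C})^{sa}(J)$ can be written $T = C - D$ with $C, D$ $J$-positive, for instance by diagonalizing the selfadjoint matrix $JT$ and splitting positive and negative spectral parts and then multiplying by $J$. I would then set $\widetilde{\mathfrak{s}}(T) := \widetilde{\mathfrak{s}}(C) - \widetilde{\mathfrak{s}}(D)$; well-definedness follows because $C_1 - D_1 = C_2 - D_2$ forces $C_1 + D_2 = C_2 + D_1$, to which the additivity just proved on the cone applies. Real-linearity is then routine bookkeeping, splitting scalars by sign. The trace identity $Tr(\widetilde{\mathfrak{s}}(T)J) = Tr(TJ)$ holds on $J$-states by the definition of $Aut_J$, extends by positive homogeneity to the cone, and by linearity to all of $M_{n}(\mathbb{C})^{sa}(J)$. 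Uniqueness is forced: any linear extension must coincide with $\mathfrak{s}$ on states, hence with $\widetilde{\mathfrak{s}}$ on the cone by homogeneity, hence with $\widetilde{\mathfrak{s}}$ everywhere through the $C - D$ decomposition. Bijectivity follows by running the same construction on $\mathfrak{s}^{-1} \in Aut_J$ and checking on the cone that $\widetilde{\mathfrak{s}^{-1}} \circ \widetilde{\mathfrak{s}} = \mathrm{id}$, which uses precisely $Tr(\widetilde{\mathfrak{s}}(A)J) = \lambda(A)$.

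For the second bullet, Proposition \ref{las niñas 1} identifies $\mathfrak{P}_{J}$ with the extreme points of $\mathfrak{S}_{J}(\mathbb{C}^{n})$, and $\mathfrak{s}$ is an affine bijection of that convex set onto itself, with affine inverse $\mathfrak{s}^{-1}$. Hence any nontrivial decomposition $\mathfrak{s}(\Pi) = \beta S_1 + (1-\beta)S_2$ would pull back under $\mathfrak{s}^{-1}$ to a nontrivial decomposition of $\Pi$, contradicting its extremality, so $\mathfrak{s}(\mathfrak{P}_{J}) \subset \mathfrak{P}_{J}$. The only step requiring real care is the additivity of $\widetilde{\mathfrak{s}}$ on the $J$-positive cone; once the normalization argument above is in place, both the trace-preservation and the preservation of purity become essentially automatic.
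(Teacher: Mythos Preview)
Your proposal is correct and follows essentially the same route as the paper: extend $\mathfrak{s}$ to the $J$-positive cone by the normalization $\widetilde{\mathfrak{s}}(A)=\lambda(A)\,\mathfrak{s}(A/\lambda(A))$ (the paper writes $\lambda(A)=\|AJ\|_{1}$, which equals $Tr(AJ)$ for $J$-positive $A$), verify additivity by the convex-combination trick, extend to $M_{n}(\mathbb{C})^{sa}(J)$ via the $C-D$ decomposition, and deduce purity from the extreme-point characterization in Proposition~\ref{las niñas 1}. The only noteworthy deviation is in the bijectivity step: you build $\widetilde{\mathfrak{s}^{-1}}$ from $\mathfrak{s}^{-1}\in Aut_{J}$ and check it inverts $\widetilde{\mathfrak{s}}$, whereas the paper argues injectivity and surjectivity separately; your route is a bit cleaner and uses exactly the trace identity you flagged.
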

\begin{proof}First, we will extend $\mathfrak{s}$ to the set $M^{+}_{n}(\mathbb{C})(J)$ of all $J$-positive matrices. We put
\begin{equation*}
\widetilde{\mathfrak{s}}(T)=\|TJ\|_{1}\,\mathfrak{s}\left(\frac{T}{\|TJ\|_{1}} \right ),\,\,\,\,\,\,\,\forall\, T\neq O_{n}\in M^{+}_{n}(\mathbb{C})(J),
\end{equation*}
and $\widetilde{\mathfrak{s}}(O_{n})=O_{n}$. It is convenient to indicate that the image of a $J$-positive matrix when applying $\widetilde{\mathfrak{s}}$ is, by definition, a $J$-positive matrix.

Now, when $0\leq \lambda$ and $T$ is a $J$-positive matrix, then $\lambda T$ is also $J$-positive. In this case, we obtain $\widetilde{\mathfrak{s}}(\lambda T)=\lambda \widetilde{\mathfrak{s}}( T)$ (the positive homogeneity of
$\widetilde{\mathfrak{s}}$). Indeed,
$$\widetilde{\mathfrak{s}}(\lambda T)=\|\lambda TJ\|_{1}\mathfrak{s}\left (\frac{\lambda T}{\|\lambda TJ\|_{1}} \right )=\lambda
\widetilde{\mathfrak{s}}\left (\frac{T}{\|TJ\|_{1}} \right ).$$

Suppose now that $T_{1}$ and $T_{2}$ are $J$-positive, then we can write the sum $T_{1}+T_{2}$ in the following form
\begin{equation*}
T_{1}+T_{2}=(\|T_{1}J\|_{1}+\|T_{2}J\|_{1})\left (\frac{\|T_{1}J\|_{1}}{(\|T_{1}J\|_{1}+\|T_{2}J\|_{1})}\frac{T_{1}}{\|T_{1}J\|_{1}}+\frac{\|T_{2}J\|_{1}}{(\|T_{1}J\|_{1}+\|T_{2}J\|_{1})}\frac{T_{2}}
{\|T_{2}J\|_{1}} \right ).
\end{equation*}

Let us pay attention to the following details of the previous equality
\begin{itemize}
  \item Observe that $\widehat{T}=\left (\frac{\|T_{1}J\|_{1}}{(\|T_{1}J\|_{1}+\|T_{2}J\|_{1})}\frac{T_{1}}{\|T_{1}J\|_{1}}+\frac{\|T_{2}J\|_{1}}{(\|T_{1}J\|_{1}+\|T_{2}J\|_{1})}\frac{T_{2}}
{\|T_{2}J\|_{1}} \right )\in \mathfrak{S}_{J}(\mathbb{C}^{n})$, because $\widehat{T}$ is a convex combination of two matrices of $\mathfrak{S}_{J}(\mathbb{C}^{n})$.
  \item $T_{1}+T_{2}=\lambda \widehat{T}$ where $\widehat{T}$ is $J$-positive and $0\leq \lambda=(\|T_{1}J\|_{1}+\|T_{2}J\|_{1})$.
\end{itemize}

Hence, the property $2.$ of $\mathfrak{s}$ and the positive homogeneity of $\widetilde{\mathfrak{s}}$ imply that $$\widetilde{\mathfrak{s}}(T_{1}+T_{2})=\widetilde{\mathfrak{s}}(T_{1})+\widetilde{\mathfrak{s}}(T_{2}). $$

Let us extend $\mathfrak{s}$ to $M_{n}(\mathbb{C})^{sa}(J)$. For this purpose, consider a $T\in M_{n}(\mathbb{C})^{sa}(J)$ arbitrary,
then $T^{\natural}=T$, that is, $JT^{\ast}J=T$ which implies that $JT=(JT)^{\ast}$. Hence, $JT=A_{+}-A_{-}$ where $A_{+}$ and $A_{-}$ are usual
positive matrices. It shows that $T=T_{+}-T_{-}$ where both matrices $T_{+}$ and $T_{-}$ are $J$-positive. Then, one defines $\widetilde{\mathfrak{s}}(T)=
\widetilde{\mathfrak{s}}(T_{+})-\widetilde{\mathfrak{s}}(T_{-})$. Now, taking into account that $\widetilde{\mathfrak{s}}(T_{+})$ and $\widetilde{\mathfrak{s}}(T_{-})$ are $J$-positive matrices from remark \ref{rauli}, it follows that $\widetilde{\mathfrak{s}}(T)\in M_{n}(\mathbb{C})^{sa}(J)$.

It is not hard to see that $\widetilde{\mathfrak{s}}$ defined in this form is linear.
Indeed, let $T=\kappa_{1}T_{1}+\kappa_{2}T_{2}$, where $k_{1},\kappa_{2}\in \mathbb{R}$ and $T_{1}, T_{2}\in M_{n}(\mathbb{C})^{sa}(J)$, then
\begin{equation*}
T=\kappa_{1}T_{1}+\kappa_{2}T_{2}=\kappa_{1}[(T_{1})_{+}-(T_{1})_{-}]+\kappa_{2}[(T_{2})_{+}-(T_{2})_{-}],
\end{equation*}
where each of the matrices $(T_{1})_{\pm}, (T_{2})_{\pm}$ is $J$-positive. On the other hand, without loss of generality, we can assume that
$0\leq \kappa_{1},\kappa_{2}$. In fact, otherwise, if for instance $\kappa_{i}< 0$ for some $i$ we have
\begin{equation*}
\kappa_{i}[(T_{i})_{+}-(T_{i})_{-}]=\,-\kappa_{i}[(T_{i})_{-}-(T_{i})_{+}]=\widetilde{\kappa}[(H_{i})_{+}-(H_{i})_{-}],
\end{equation*}
where $0< \widetilde{\kappa}$ and the $(H_{i})_{\pm}$ are $J$-positive matrices. Thus,
\begin{equation*}
T=[\kappa_{1}(T_{1})_{+}+\kappa_{2}(T_{2})_{+}]-[\kappa_{1}(T_{1})_{-}+\kappa_{2}(T_{2})_{-}],
\end{equation*}
note that each $\kappa_{i}(T_{i})_{\pm}$ is a $J$-positive matrix for $i=1,2$. Taking into account the way in which $\widetilde{\mathfrak{s}}$
has been defined, we obtain
\begin{align*}
\widetilde{\mathfrak{s}}(\kappa_{1}T_{1}+\kappa_{2}T_{2})&=\widetilde{\mathfrak{s}}(\kappa_{1}(T_{1})_{+}+\kappa_{2}(T_{2})_{+})-
\widetilde{\mathfrak{s}}
(\kappa_{1}(T_{1})_{-}+\kappa_{2}(T_{2})_{-}) \\
&=\left (\widetilde{\mathfrak{s}}(\kappa_{1}(T_{1})_{+})+\widetilde{\mathfrak{s}}(\kappa_{2}(T_{2})_{+}) \right )-
\left (\widetilde{\mathfrak{s}}(\kappa_{1}(T_{1})_{-})+\widetilde{\mathfrak{s}}(\kappa_{2}(T_{2})_{-}) \right ) \\
&=\left (\kappa_{1}\widetilde{\mathfrak{s}}((T_{1})_{+})+\kappa_{2}\widetilde{\mathfrak{s}}((T_{2})_{+}) \right )-
\left (\kappa_{1}\widetilde{\mathfrak{s}}((T_{1})_{-})+\widetilde{\mathfrak{s}}(\kappa_{2}(T_{2})_{-}) \right ) \\
&=\kappa_{1}\left(\widetilde{\mathfrak{s}}((T_{1})_{+})-\widetilde{\mathfrak{s}}((T_{1})_{-}) \right )+
\kappa_{2}\left(\widetilde{\mathfrak{s}}((T_{2})_{+})-\widetilde{\mathfrak{s}}((T_{2})_{-}) \right ) \\
&=\kappa_{1}\left(\widetilde{\mathfrak{s}}((T_{1})_{+}-(T_{1})_{-}) \right )+\kappa_{2}\left(\widetilde{\mathfrak{s}}((T_{2})_{+}-(T_{2})_{-}) \right ) \\
&=\kappa_{1}\widetilde{\mathfrak{s}}(T_{1})+\kappa_{2}\widetilde{\mathfrak{s}}(T_{2}).
\end{align*}

On the other hand, suppose that $T=T_{1}-T_{2}$ with both $T_{1}$, $T_{2}$ $J$-positive. Then $T_{+}+T_{2}=T_{1}+T_{-}$ from which follows that
$\widetilde{\mathfrak{s}}$ is well defined (because $\widetilde{\mathfrak{s}}(T_{+})-\widetilde{\mathfrak{s}}(T_{-})=\widetilde{\mathfrak{s}}(T_{1})-\widetilde{\mathfrak{s}}(T_{2})$). A main fact
in our construction is that $\widetilde{\mathfrak{s}}$ maps $M^{+}_{n}(\mathbb{C})(J)$ into $M^{+}_{n}(\mathbb{C})(J)$.

Now, let $T$ be an arbitrary $J$-selfadjoint matrix, that is, $T^{\natural}=T$. Then $T=T_{+}-T_{-}$ so
\begin{equation}\label{s1 6}
\widetilde{\mathfrak{s}}(T)=\|T_{+}J\|_{1}\mathfrak{s}\left (\frac{T_{+}}{\|T_{+}J\|_{1}} \right)-\|T_{-}J\|_{1}\mathfrak{s}\left (\frac{T_{-}}{\|T_{-}J\|_{1}} \right),
\end{equation}
observe that $\frac{T_{+}}{\|T_{+}J\|_{1}}$ and $\frac{T_{-}}{\|T_{-}J\|_{1}}$ are quantum $J$-states, hence $\mathfrak{s}(\frac{T_{+}}{\|T_{+}J\|_{1}})$
and $\mathfrak{s}(\frac{T_{-}}{\|T_{-}J\|_{1}})$ are quantum $J$-states and
\begin{equation}\label{s1 7}
Tr\,\left (\mathfrak{s}\left(\frac{T_{+}}{\|T_{+}J\|_{1}}\right)J \right )=1=Tr\,\left (\mathfrak{s}\left(\frac{T_{-}}{\|T_{-}J\|_{1}}\right)J \right ),
\end{equation}
therefore, combining (\ref{s1 6}) and (\ref{s1 7}), we find that
\begin{equation*}
Tr\,(\widetilde{\mathfrak{s}}(T)J)=\|T_{+}J\|_{1}-\|T_{-}J\|_{1}=Tr\,(T_{+}J)-Tr\,(T_{-}J)=Tr\,TJ.
\end{equation*}

Suppose now that $\widehat{\mathfrak{s}}$ is another linear operator which extends $\mathfrak{s}$ such that $\widehat{\mathfrak{s}}(M^{+}_{n}(\mathbb{C})(J)) \subset M^{+}_{n}(\mathbb{C})(J)$. Then, for any Matrix $T$
for which $T^{\natural}=T$, we obtain (using the linearity of $\widehat{\mathfrak{s}}$)
\begin{align*}
\widehat{\mathfrak{s}}(T)&=\widehat{\mathfrak{s}}(T_{+}-T_{-})=\widehat{\mathfrak{s}}(T_{+})-\widehat{\mathfrak{s}}(T_{-})=
\|T_{+}J\|_{1}\widehat{\mathfrak{s}}\left (\frac{T_{+}}{\|T_{+}J\|_{1}}\right )-\|T_{-}J\|_{1}\widehat{\mathfrak{s}}\left (\frac{T_{-}}
{\|T_{-}J\|_{1}} \right) \\
&=\|T_{+}J\|_{1}\mathfrak{s}\left (\frac{T_{+}}{\|T_{+}J\|_{1}}\right )-\|T_{-}J\|_{1}\mathfrak{s}\left (\frac{T_{-}}
{\|T_{-}J\|_{1}} \right)=\widetilde{\mathfrak{s}}(T).
\end{align*}

It shows that $\widetilde{\mathfrak{s}}$ is unique. Notice that if $T_{1}\neq O_{n}$ and $T_{2}\neq O_{n}$ are $J$-positive and $\widetilde{\mathfrak{s}}(T_{1})=\widetilde{\mathfrak{s}}(T_{2})$, then it implies
that $T_{1}=T_{2}$. In fact, if we suppose that
\begin{equation*}
\widetilde{\mathfrak{s}}(T_{1})=\|T_{1}J\|_{1}\mathfrak{s}\left (\frac{T_{1}}{\|T_{1}J\|_{1}} \right )=\|T_{2}J\|_{1}\mathfrak{s}\left (\frac{T_{2}}{\|T_{2}J\|_{1}} \right )=\widetilde{\mathfrak{s}}(T_{2}),
\end{equation*}
then
\begin{equation*}
\|T_{1}J\|_{1}=Tr\,(\widetilde{\mathfrak{s}}(T_{1})J)=Tr\,(\widetilde{\mathfrak{s}}(T_{2})J)=\|T_{2}J\|_{1},
\end{equation*}
and since $\mathfrak{s}$ is a bijection, it follows that $T_{1}=T_{2}$. Next, suppose that $T_{1},T_{2}\in M_{n}(\mathbb{C})^{sa}(J)$
such that $\widetilde{\mathfrak{s}}(T_{1})=\widetilde{\mathfrak{s}}(T_{2})$. Then,
\begin{equation*}
\widetilde{\mathfrak{s}}(T_{1})=\widetilde{\mathfrak{s}}(T^{1}_{+}-T^{1}_{-})=\widetilde{\mathfrak{s}}(T^{1}_{+})-\widetilde{\mathfrak{s}}(T^{1}_{-})
=\widetilde{\mathfrak{s}}(T^{2}_{+})-\widetilde{\mathfrak{s}}(T^{2}_{-})=\widetilde{\mathfrak{s}}(T^{2}_{+}-T^{2}_{-})=\widetilde{\mathfrak{s}}(T_{2}),
\end{equation*}
it shows that
$$\widetilde{\mathfrak{s}}(T^{1}_{+}+T^{2}_{-})=\widetilde{\mathfrak{s}}(T^{1}_{+})+\widetilde{\mathfrak{s}}(T^{2}_{-})
=\widetilde{\mathfrak{s}}(T^{2}_{+})+\widetilde{\mathfrak{s}}(T^{1}_{-})=\widetilde{\mathfrak{s}}(T^{2}_{+}+ T^{1}_{-}),$$
and from this, we obtain $T^{1}_{+}+T^{2}_{-}=T^{2}_{+}+ T^{1}_{-}$ because both sides are $J$-positive matrices, thus $T_{1}=T_{2}$. This tells
us that $\widetilde{\mathfrak{s}}:\,M_{n}(\mathbb{C})^{sa}(J)\longrightarrow M_{n}(\mathbb{C})^{sa}(J)$ is an injective map. Now , we shall show
that it is also surjective. Indeed, let $T\in M_{n}(\mathbb{C})^{sa}(J)$, that is, $T^{\natural}=T$ then $T=T_{+}-T_{-}$ and
\begin{equation*}
T=\|T_{+}J\|_{1}\left (\frac{T_{+}}{\|T_{+}J\|_{1}} \right )-\|T_{-}J\|_{1}\left (\frac{T_{-}}{\|T_{-}J\|_{1}} \right )=
\|T_{+}J\|_{1}\mathfrak{s}(S_{1})-\|T_{-}J\|_{1}\mathfrak{s}(S_{2}),
\end{equation*}
for some $S_{1}, S_{2}\in \mathfrak{S}(\mathbb{C}^{n})$, thus using the linearity of $\widetilde{\mathfrak{s}}$ we have
\begin{equation*}
T=\|T_{+}J\|_{1}\widetilde{\mathfrak{s}}(S_{1})-\|T_{-}J\|_{1}\widetilde{\mathfrak{s}}(S_{2})=\widetilde{\mathfrak{s}}(\|T_{+}J\|_{1}S_{1}
-\|T_{-}J\|_{1}S_{2}).
\end{equation*}

Now, we turn to prove that $\mathfrak{s}(\mathfrak{P}_{J})\subset \mathfrak{P}_{J}$. It is easy to see that $\mathfrak{s}^{-1}(\beta T_{1}+(1-\beta)T_{2})
=\beta\mathfrak{s}^{-1}(T_{1})+(1-\beta)\mathfrak{s}^{-1}(T_{2})$ for all $T_{1}, T_{2}\in \mathfrak{S}_{J}(\mathbb{C}^{n})$ and $\beta \in (0,1)$.
Suppose that $\Pi\in \mathfrak{P}_{J}$ and $\mathfrak{s}(\Pi)=\beta T_{1}+(1-\beta)T_{2}$ then $\Pi=\beta \mathfrak{s}^{-1}(T_{1})+(1-\beta)\mathfrak{s}^{-1} (T_{2})$ so from remark \ref{las niñas 1}, it follows that $\Pi=\mathfrak{s}^{-1}(T_{1})=\mathfrak{s}^{-1}(T_{2})$, that is $\mathfrak{s}(\Pi)=T_{1}=T_{2}$, therefore making use of remark
\ref{las niñas 1} again, we may simply obtain $\mathfrak{s}(\Pi)\in \mathfrak{P}_{J}$. It makes the proof to be concluded\,.
\end{proof}

\begin{remark}Observe that if $\mathfrak{s} \in Aut_{J}$ and $\mathfrak{s}(\Pi)=\Pi$ for all $\Pi\in \mathfrak{P}_{J}$ then $\mathfrak{s}$
is the identity.
\end{remark}

Next, we will construct a concrete J-state automorphism of $\mathfrak{S}_{J}(\mathbb{C}^{n})$.

\begin{example}Define $\mathfrak{s}^{J}_{V}(A)=V^{\natural}AV$ such that $V^{\natural}JV=J=VJV^{\natural}$ (in short, this means that
$V^{\ast}V=VV^{\ast}=I_{n}$)
then $\mathfrak{s}^{J}_{V}\in Aut_{J}$. In fact, if $A\in \mathfrak{S}_{J}(\mathbb{C}^{n})$ we have $[\mathfrak{s}^{J}_{V}(A)x,x]=[V^{\natural}AVx,x]=[AVx,Vx]\geq 0$ for all $x\in \mathbb{C}^{n}$ because $A$ is a $J$-positive matrix.
It shows that $\mathfrak{s}^{J}_{V}(A)$ is a $J$-positive matrix. On the other hand,
\begin{equation*}
Tr\,\mathfrak{s}^{J}_{V}(A)J=Tr\,V^{\natural}AVJ=Tr\,V^{\natural}(AJ)(JVJ)=Tr\,AJ=1,
\end{equation*}
here, we have used that $V^{\natural}(JVJ)=(JVJ)V^{\natural}=J^{2}=I_{n}$. From this follows that $\mathfrak{s}^{J}_{V}(A)\in \mathfrak{S}_{J}(\mathbb{C}^{n})$ and hence $\mathfrak{s}^{J}_{V}(\mathfrak{S}_{J}(\mathbb{C}^{n}))\subset \mathfrak{S}_{J}(\mathbb{C}^{n})$.
Let us assume that $\mathfrak{s}^{J}_{V}(A_{1})=\mathfrak{s}^{J}_{V}(A_{2})$ where $A_{1}, A_{2}\in \mathfrak{S}_{J}(\mathbb{C}^{n})$, then since
$V^{\natural}JV=J=VJV^{\natural}$, it is easy to conclude that $A_{1}=A_{2}$. In other words, $\mathfrak{s}^{J}_{V}$ is injective. Let
$B\in \mathfrak{S}_{J}(\mathbb{C}^{n})$ be arbitrary and define $A=(JVJ)B(JV^{\natural}J)$. Then, one can see that $A\in\mathfrak{S}_{J}(\mathbb{C}^{n})$ and $\mathfrak{s}^{J}_{V}(A)=B$. In the next section, this type of maps will be studied in detail.
\end{example}

This last example shows that the notion of quantum $J$-state as it was introduced in this section is basically related to the usual group
of unitary matrices. Hence, it suggests to call to this as a \textbf{quantum $J$-state of unitary origin}.

\begin{remark}
There is another possible geometrical notion of quantum state in a space with an indefinite metric which is related
with a $J$-unitary matrix.

\begin{definition}We say that a Matrix $A$ is a \textbf{quantum $J$-state of $J$-unitary origin} if $A$ is $J$-positive and $Tr\,A=1$.
\end{definition}

Observe that if $A$ is a quantum $J$-state of $J$-unitary origin, then the same happens with $V^{\natural}AV$ whenever $V^{\natural}V=VV^{\natural}=I_{n}$.
On the other hand, the set of all quantum $J$-state of $J$-unitary origin is a convex set. A pure quantum $J$-state of $J$-unitary origin is
one of the following form $\Gamma=[\cdot, e]e=\langle J\cdot, e\rangle_{\mathbb{C}^{n}}e=\Pi^{\ast}$
(see example \ref{olv 2 subsec 1}).

We would like to indicate that the concept of quantum $J$-state of $J$-unitary origin leads to a symmetric theory to which was previously developed
with the definition of $J$-state of unitary origin. We show this fact with the following table (for a matrix $M$ the notation $0\leq_{J} M$ means
that $M$ is a $J$-positive matrix),

\bigskip

\noindent\begin{tabular}{|c|c|c|c|}\hline $J$-state of unit. orig. $B$ & then if A=JB & $J$-state of $J$-unit. orig. $B$ & then if $A=JB$ \\
\hline
$0\leq_{J}B$,\,\,\,\,$Tr\,BJ=1$ & $0\leq A$,\,\,\,\,\,$Tr\,A=1$ & $0\leq_{J}B$,\,\,\,\,$Tr\,B=1$ & $0\leq A$,\,\,\,$Tr\,AJ=1$ \\
\hline
\end{tabular}\,.
\end{remark}

\smallskip

\section{Completely $J$-positive type maps on $M_{n}(\mathbb{C})$ and quantum $J$-channel}

The study of positive maps on $C^{\ast}$-algebras began long before the boom of the quantum theories of computation and information;
which were suggested mainly by Paul Benioff, Richard Feynman and Yuri Manin in the $80$s of the last century. These maps were introduced around
$1950$ by R. V. Kadison in the papers \cite{kadison1}, \cite{kadison2}. Later in $1955$ Stinespring introduced completely
positive maps and proved his important dilation theorem \cite{stines}, simultaneously. The relationship between completely positive maps and
the theory of dilation was extensively formalized by Arveson \cite{arveson1}, \cite{arveson2} and \cite{arveson3}.
At that time, the topic was not popular among mathematicians and practically it was barely known to people from other fields, however
remarkable progress was made. The situation changed in the 1990s when the importance of completely positive maps in quantum information
theory was evidenced. It can be said that, currently, the subject is consolidated and the people$'$s interest about it is constantly increasing.

We want to mention that the study of certain types of completely positive maps on groups, using representation theory on Hilbert spaces equipped
with an indefinite metric defined by means of fundamental symmetries, it was first carried out by J. Heo in \cite{heo}. There are excellent texts
on completely positive maps among which we have only selected a few of them: \cite{beng},\cite{paulsen} and \cite{stormer}.

On the other hand, theoretically quantum channels or quantum operators are the fundamental objects through which information is transmitted.
They constitute completely positive maps that preserve the matrix trace. A detailed discussion of the theory of quantum channels in the
finite dimensional case is presented in \cite{wat}.
In the class of quantum channels, we must mention some of them that have particular characteristics and perform important functions within the theory of quantum information. For example, \textbf{random unitary quantum channels} which have the form
\begin{equation*}
\Phi(A)=\sum_{s=1}^{l}\mu_{s}U^{^{\ast}}_{s}AU_{s},
\end{equation*}
where each $U_{s}$ is a unitary matrix and $\mu_{s}$ are positive weights, such that $\sum_{s=1}^{l}\mu_{s}=1$ \cite{collins}.

This class of quantum channels is very important, since the action of such channels can be considered as the random application of one
of the unitary transformations $U_{s}$, with respective probabilities $\mu_{s}$. Moreover, because these have particular properties.
For example, random unitary channels have been used to disprove the additivity of minimum output entropy,
see \cite{hastings}. We refer to some papers, where this kind of quantum channels has been studied, for example, \cite{aubrun},
\cite{hastings} and \cite{hayden}.

Other important quantum channels are the \textbf{quantum Gaussian channels} which have certain behavior with respect to the so-called
characteristic function on trace-class matrices, these have a main role in quantum communication theory because they determine the
attenuation and the noise affecting any electromagnetic signal, in the quantum regime. In this regard, the reader can consult e.g
\cite{de palma} and \cite{de palma1}.

The purpose of this section is to study completely positive maps and quantum channels between indefinite metric spaces. Throughout
this section the fundamental symmetry $J\in M_{n}$ is fixed.

\subsection{Kraus $J$-maps and Completely $J$-positive maps}

In the space $M_{n}(\mathbb{C})$ of $n\times n$ matrices, we consider the following kind of map
\begin{equation}\label{s2 1}
\Phi(A)=\sum_{s=1}^{\nu}V_{s}^{\natural}AV_{s},\,\,\,\,\,\,\,\,\,\forall\,A\in M_{n}(\mathbb{C}),
\end{equation}
here $M^{\natural}$ denotes the $J$-adjoint for an arbitrary matrix $M$ and $(V_{1},\cdots,V_{\nu})\in (M_{n}(\mathbb{C}))^{\nu}$
is a fixed matrix vector. These linear maps are called \textbf{Kraus $J$-maps} by us, and the number $\nu$ is named the Kraus index
for the corresponding $\Phi$. In addition, observe that they transform $J$-positive $n\times n$ matrices into $J$-positive matrices of the
same order. Indeed, from (\ref{s2 1}), it follows that for all $x\in \mathbb{C}^{n}$ and any $A\in M_{n}^{+}(\mathbb{C})(J)$
\begin{equation*}
[\Phi(A)x,x]=\left [\left (\sum_{s=1}^{\nu}V_{s}^{\natural}AV_{s}\right)x,x \right ]=\sum_{s=1}^{\nu}[V_{s}^{\natural}AV_{s}x,x]
=\sum_{s=1}^{\nu}[AV_{s}x,V_{s}x]\geq 0.
\end{equation*}

We say that $\Phi:M_{n}(\mathbb{C})\longrightarrow M_{n}(\mathbb{C})$ linear is \textbf{$J$-positive} if $\Phi \left (M_{n}^{+}(\mathbb{C})(J)\right )
\subset M_{n}^{+}(\mathbb{C})(J)$. Thus, the map $\Phi$ defined by (\ref{s2 1}) is $J$-positive. Let us assume that $\Phi(\cdot)$ is a Kraus
$J$-map then $\Phi_{M}(\cdot)=M^{\natural}\Phi(\cdot)M$ is also a Kraus $J$-map for all matrix $M$ of order $n$, and the Kraus index of $\Phi$
and $\Phi_{M}$ are equal. The simplest Kraus $J$-map is the identity map, that is $\Phi(A)=A$, because $I_{n}^{\natural}=I_{n}$ and so
$\Phi(A)=A=I_{n}^{\natural}AI_{n}$.

The following lemma will be very important for our goals. It shows the general form of a $J$-positive map.

\begin{lemma}\label{lema 2}
Let $\Psi$ be a $J$-positive map on $M_{n}(\mathbb{C})$, then there is $\Phi:\,M_{n}(\mathbb{C})\longrightarrow M_{n}(\mathbb{C})$ positive,
such that, $\Psi(\cdot)=J\Phi(J\,\cdot)$\,.
\end{lemma}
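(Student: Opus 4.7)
The plan is to define $\Phi$ explicitly in terms of $\Psi$ and then verify both the positivity and the desired identity. The key observation, used repeatedly in Section~2, is that the map $X \mapsto JX$ is a linear bijection on $M_{n}(\mathbb{C})$ which restricts to a bijection between the cone of $J$-positive matrices and the cone of usual positive matrices (since $A$ is $J$-positive if and only if $JA$ is positive, and $J^{2}=I_{n}$).

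My proposal would be to set
\begin{equation*}
\Phi(X) := J\,\Psi(JX), \qquad X\in M_{n}(\mathbb{C}).
\end{equation*}
Linearity of $\Phi$ follows at once from the linearity of $\Psi$ and of left-multiplication by $J$. For positivity, let $X\in M_{n}(\mathbb{C})$ be a usual positive matrix. Then $JX$ is $J$-positive (because $J(JX)=X\geq 0$), and the hypothesis that $\Psi$ is $J$-positive gives $\Psi(JX)\in M_{n}^{+}(\mathbb{C})(J)$. By definition of $J$-positivity this means $J\Psi(JX)=\Phi(X)$ is a usual positive matrix, proving that $\Phi$ is positive.

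Finally, the identity $\Psi(\cdot)=J\Phi(J\,\cdot)$ is a direct computation using $J^{2}=I_{n}$:
\begin{equation*}
J\,\Phi(JA)=J\bigl(J\,\Psi(J\cdot JA)\bigr)=J^{2}\,\Psi(J^{2}A)=\Psi(A),
\end{equation*}
for every $A\in M_{n}(\mathbb{C})$.

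I do not expect a real obstacle here: the whole argument rests on the involutive nature of $J$ and on the elementary fact, already recorded after Proposition~\ref{proposicion1 s1}, that $J$-positivity of $A$ is equivalent to usual positivity of $JA$. The only thing worth underlining in the write-up is that the definition $\Phi(X)=J\Psi(JX)$ is forced by the desired identity (apply $J$ on the left and substitute $JA$ for the argument), so uniqueness of $\Phi$ among linear maps satisfying $\Psi(\cdot)=J\Phi(J\,\cdot)$ also follows, should it be needed later.
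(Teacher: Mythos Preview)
Your proof is correct and follows essentially the same route as the paper: define $\Phi(X)=J\Psi(JX)$, check positivity via the equivalence ``$A$ is $J$-positive $\Leftrightarrow$ $JA$ is positive'', and recover the identity using $J^{2}=I_{n}$. Your extra remark on uniqueness is a harmless bonus not present in the original.
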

\begin{proof}We directly define $\Phi(\cdot)=J\Psi(J\,\cdot)$ which is evidently a linear map because $\Psi$ is linear by definition. Let $N\in M_{n}^{+}(\mathbb{C})$ be arbitrary then $JN$ is a $J$-positive matrix. Hence, taking into account that $\Psi$ is a $J$-positive map, we shall have $\Psi(JN)=JL$ for some $L\in M_{n}^{+}(\mathbb{C})$. It shows that $\Phi(N)=L$ and so $\Phi$ is a positive map. It is now easy to see that $\Psi(\cdot)=J\Phi(J\cdot)$.
\end{proof}

Let $\Psi^{jp}$ be a $J$-positive map. Then, the positive map which was described in the lemma \ref{lema 2} and that it was put in
correspondence with $\Psi^{jp}$ will be denoted by $\Phi^{p}_{\Psi^{jp}}$. We refer to $\Phi^{p}_{\Psi^{jp}}$ as the positive map
associated to $\Psi^{jp}$. Below, such a correspondence shall also be indicated in the following way $\Psi\,\,
\overrightarrow{jp\longrightarrow p} \,\,\Phi$.

\begin{theorem}\label{teorema1}
Suppose that $\Phi(\cdot)$ is a usual completely positive map on $M_{n}(\mathbb{C})$ then $\Psi(\cdot)=J\Phi(J\cdot)$ is a Kraus
$J$-map on $M_{n}(\mathbb{C})$. Conversely, if $\Psi(\cdot)$ is a Kraus $J$-map on $M_{n}(\mathbb{C})$ then $\Phi(\cdot)=\Psi(J\cdot)J$ is
a completely positive map on $M_{n}(\mathbb{C})$.
\end{theorem}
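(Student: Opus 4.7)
The plan is to use the classical Kraus representation of completely positive maps on $M_{n}(\mathbb{C})$, namely that $\Phi$ is completely positive if and only if there exist matrices $W_{1},\dots,W_{\nu}\in M_{n}(\mathbb{C})$ with $\Phi(A)=\sum_{s=1}^{\nu}W_{s}^{\ast}AW_{s}$, and then exploit the two algebraic identities $J^{2}=I_{n}$ and $J^{\ast}=J$ which hold because $J$ is a fundamental symmetry. Everything reduces to bookkeeping with these identities and the formula $V^{\natural}=JV^{\ast}J$.

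For the first implication, I would start from the Kraus representation $\Phi(A)=\sum_{s}W_{s}^{\ast}AW_{s}$ and directly compute
\[
\Psi(A)=J\Phi(JA)=\sum_{s}JW_{s}^{\ast}JAW_{s}=\sum_{s}(JW_{s}^{\ast}J)\,A\,W_{s}.
\]
Setting $V_{s}:=W_{s}$, the factor $JW_{s}^{\ast}J$ is by definition $V_{s}^{\natural}$, so $\Psi(A)=\sum_{s}V_{s}^{\natural}AV_{s}$ is a Kraus $J$-map with Kraus index $\nu$. This is the whole forward step; no further ingredient is needed.

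For the converse, I start with a Kraus $J$-map $\Psi(A)=\sum_{s}V_{s}^{\natural}AV_{s}=\sum_{s}JV_{s}^{\ast}JAV_{s}$ and compute
\[
\Phi(A)=\Psi(JA)\,J=\sum_{s}JV_{s}^{\ast}J(JA)V_{s}J=\sum_{s}(JV_{s}^{\ast})\,A\,(V_{s}J),
\]
using $J^{2}=I_{n}$ to collapse the inner $J^{2}$. The key observation is that, because $J^{\ast}=J$, one has $(V_{s}J)^{\ast}=J^{\ast}V_{s}^{\ast}=JV_{s}^{\ast}$, so the expression takes the clean Kraus form $\Phi(A)=\sum_{s}(V_{s}J)^{\ast}A(V_{s}J)$. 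A sum of maps of the shape $A\mapsto W^{\ast}AW$ is completely positive, hence $\Phi$ is completely positive.

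There is essentially no hard step: the proof is a direct manipulation leveraging $J^{2}=I_{n}$ and $J^{\ast}=J$. The one place worth flagging is the conversion $(V_{s}J)^{\ast}=JV_{s}^{\ast}$ in the converse direction, which is what makes the operators on the two sides of $A$ become honest Hermitian conjugates of each other and thus fit the Kraus template. I would also remark that this argument pairs naturally with Lemma~\ref{lema 2}, since that lemma already installs the correspondence $\Psi\,\overrightarrow{jp\longrightarrow p}\,\Phi$ at the level of (merely) positive maps; Theorem~\ref{teorema1} simply lifts the same correspondence to the complete positivity / Kraus level by applying the Choi–Kraus structure on the $\Phi$ side.
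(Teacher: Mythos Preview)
Your proposal is correct and follows essentially the same approach as the paper's own proof: both directions invoke the Choi--Kraus representation of completely positive maps on $M_{n}(\mathbb{C})$ and reduce to the identities $J^{2}=I_{n}$, $J^{\ast}=J$, and $V^{\natural}=JV^{\ast}J$ to pass between the forms $\sum_{s}V_{s}^{\ast}AV_{s}$ and $\sum_{s}V_{s}^{\natural}AV_{s}$. Your write-up is slightly more explicit about the step $(V_{s}J)^{\ast}=JV_{s}^{\ast}$, which the paper leaves implicit, and your closing remark connecting the result to Lemma~\ref{lema 2} is apt but not part of the paper's proof.
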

\begin{proof}Since $\Phi$ is a completely positive map then it can be represented in the following form
\begin{equation}\label{s2 2}
\Phi(A)=\sum_{s=1}^{\nu}V_{s}^{\ast}AV_{s},\,\,\,\,\,\,\,\,\,\forall\,A\in M_{n}(\mathbb{C}),
\end{equation}
for some $\nu\in\mathbb{Z}_{+}$, and some matrices $V_{s}\in M_{n}(\mathbb{C})$ for $s=1,\cdots,\nu$. This result can be found in \cite{wat},
page $82$. Hence, from (\ref{s2 2}) it follows that
\begin{equation*}
\Psi(A)=J\Phi(JA)=\sum_{s=1}^{\nu}JV_{s}^{\ast}JAV_{s}=\sum_{s=1}^{\nu}V_{s}^{\natural}AV_{s},\,\,\,\,\,\,\,\,\,\forall\,A\in M_{n}(\mathbb{C}).
\end{equation*}

Next, we prove the converse. Suppose that $\Psi$ is  a Kraus $J$-map on $M_{n}(\mathbb{C})$, that is
\begin{equation*}
\Psi(A)=\sum_{s=1}^{\nu}V_{s}^{\natural}AV_{s},\,\,\,\,\,\,\,\,\,\forall\,A\in M_{n}(\mathbb{C}),
\end{equation*}
where $\nu\in \mathbb{Z}_{+}$ and the $V_{s}\in M_{n}(\mathbb{C})$ for $s=1,\cdots,\nu$ depend of $\Psi$, then
\begin{equation*}
\Phi(A)=\Psi(JA)J=\sum_{s=1}^{\nu}JV_{s}^{\ast}AV_{s}J=\sum_{s=1}^{\nu}(V_{s}J)^{\ast}A(V_{s}J),\,\,\,\,\,\,\,\,\,\forall\,A\in M_{n}(\mathbb{C}),
\end{equation*}
then, from the result in \cite{wat} page $82$ previously referred, we conclude that $\Phi$ is a completely positive map.
\end{proof}

From now on, each $kn\times kn$ matrix will be written in block form
\begin{equation*}
\mathcal{C}=\left(
              \begin{array}{ccc}
                C_{11} & \cdots & C_{1k} \\
                \vdots & \ddots & \vdots \\
                C_{k1} & \cdots & C_{kk} \\
              \end{array}
            \right),
\end{equation*}
here each block $C_{ij}$, $i,j=1,\cdots,k$ is a complex $n\times n$ matrix and $k=1,2,\cdots$. In particular, for later use, we denote
by $\mathcal{J}_{k}$ the following block $kn\times kn$ diagonal matrix
\begin{equation*}
\mathcal{J}_{k}=\left(
              \begin{array}{cccc}
                J & O_{n} & \cdots & O_{n} \\
                O_{n} & \ddots & \ddots & \vdots \\
                \vdots & \ddots & \ddots & O_{n} \\
                O_{n} & \cdots & O_{n} & J \\
              \end{array}
            \right),
\end{equation*}
where $k=1,2,\cdots$. It is clear that $\mathcal{J}_{k}$ is a fundamental symmetry on $\mathbb{C}^{kn}$ for all $k\geq 1$, that is,
$\mathcal{J}_{k}^{\ast}=\mathcal{J}_{k}$ and $\mathcal{J}_{k}^{2}=I_{kn}$; finally observe that $\mathcal{J}_{1}=J$.
On the other hand, recall that the $kn\times kn$ block matrix $\mathcal{J}_{k}$ induces the following indefinite metric on $\mathbb{C}^{kn}$
$$[x,y]_{\mathcal{J}_{k}}=\langle \mathcal{J}_{k}x,y \rangle_{\mathbb{C}^{kn}}\,,$$
for all $k=1,2,\cdots$.

Let $\Phi:M_{n}(\mathbb{C})\longrightarrow M_{n}(\mathbb{C})$ given. Then, it induces for each $k\in\mathbb{N}$ a map $\Phi^{k}:M_{kn}(\mathbb{C})\longrightarrow M_{kn}(\mathbb{C})$ which is defined in the following form
\begin{equation*}
\Phi^{k}(\mathcal{C})=\left(
              \begin{array}{ccc}
                \Phi(C_{11}) & \cdots & \Phi(C_{1k}) \\
                \vdots & \ddots & \vdots \\
                \Phi(C_{k1}) & \cdots & \Phi(C_{kk}) \\
              \end{array}
            \right),
\end{equation*}
where, clearly $\Phi^{1}=\Phi$.

\begin{definition}We say that $\Phi:M_{n}(\mathbb{C})\longrightarrow M_{n}(\mathbb{C})$ is a \textbf{completely $J$-positive map}
if, for all $k\geq 1$, the map $\Phi^{k}$ is $\mathcal{J}_{k}$-positive. In other words, $\Phi$ is completely $J$-positive if,
for all $k=1,2,\cdots$,
\begin{equation*}
\Phi^{k}(M_{kn}^{+}(\mathbb{C})(\mathcal{J}_{k}))\subset M_{kn}^{+}(\mathbb{C})(\mathcal{J}_{k}).
\end{equation*}
\end{definition}

\begin{lemma}\label{lema1}
Let $\Phi: M_{n}(\mathbb{C})\longrightarrow M_{n}(\mathbb{C})$ be a fixed map and $\Psi(\cdot)=J\Phi(J\cdot)$, then
$\Psi^{k}(\cdot)=\mathcal{J}_{k}\Phi^{k}(\mathcal{J}_{k}\,\cdot)$ for all $k\geq 1$. On the other hand, if we define
$\Theta(\cdot)=\Phi(J\cdot)J$, one has $\Theta^{k}(\cdot)=\Phi^{k}(\mathcal{J}_{k}\,\cdot)\mathcal{J}_{k}$.
\end{lemma}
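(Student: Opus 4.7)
The plan is to reduce both identities to a block-by-block calculation, exploiting the fact that $\mathcal{J}_{k}$ is block diagonal with copies of $J$ on the diagonal. The main preparatory observation is that left (resp.\ right) multiplication by $\mathcal{J}_{k}$ acts ``entry-wise'' at the level of $n\times n$ blocks: if $\mathcal{C}=(C_{ij})_{i,j=1}^{k}$, then
\begin{equation*}
(\mathcal{J}_{k}\mathcal{C})_{ij}=J C_{ij},\qquad (\mathcal{C}\mathcal{J}_{k})_{ij}=C_{ij}J,\qquad (\mathcal{J}_{k}\mathcal{C}\mathcal{J}_{k})_{ij}=J C_{ij}J,
\end{equation*}
which is a direct consequence of the block-diagonal structure of $\mathcal{J}_{k}$ and the definition of block matrix multiplication. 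I would record this once and use it repeatedly.

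Next, unpacking the definition of $\Phi^{k}$ and $\Psi^{k}$, the $(i,j)$-block of $\Psi^{k}(\mathcal{C})$ is simply $\Psi(C_{ij})=J\Phi(JC_{ij})$ by the hypothesis $\Psi(\cdot)=J\Phi(J\cdot)$. On the other hand, using the observation above, the $(i,j)$-block of $\mathcal{J}_{k}\Phi^{k}(\mathcal{J}_{k}\mathcal{C})$ is
\begin{equation*}
J\,[\Phi^{k}(\mathcal{J}_{k}\mathcal{C})]_{ij}=J\,\Phi\bigl((\mathcal{J}_{k}\mathcal{C})_{ij}\bigr)=J\,\Phi(JC_{ij}).
\end{equation*}
Since the two $kn\times kn$ matrices agree block by block, they are equal, giving the first identity $\Psi^{k}(\cdot)=\mathcal{J}_{k}\Phi^{k}(\mathcal{J}_{k}\,\cdot)$.

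The second identity is entirely analogous: the $(i,j)$-block of $\Theta^{k}(\mathcal{C})$ is $\Theta(C_{ij})=\Phi(JC_{ij})J$, while the $(i,j)$-block of $\Phi^{k}(\mathcal{J}_{k}\mathcal{C})\mathcal{J}_{k}$ is $[\Phi^{k}(\mathcal{J}_{k}\mathcal{C})]_{ij}J=\Phi(JC_{ij})J$, and one concludes as before. There is no real obstacle here; the only thing to be careful about is the order of the factors $J$ (left vs.\ right) and the fact that $\Phi^{k}$ is \emph{defined} to act block-wise, so the proof is essentially a bookkeeping verification. A single compact argument for both statements (based on the block-diagonal observation together with the block-wise definition of $\Phi^{k}$) should suffice.
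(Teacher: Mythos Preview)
Your proposal is correct and follows essentially the same approach as the paper: a direct block-by-block computation using that $\mathcal{J}_{k}$ is block diagonal, so that $(\mathcal{J}_{k}\mathcal{C})_{ij}=JC_{ij}$ and $\Psi^{k}(\mathcal{C})_{ij}=J\Phi(JC_{ij})$. The paper writes this out as an explicit matrix factorization and likewise omits the $\Theta$ case as analogous.
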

\begin{proof}If we choose an arbitrary block matrix $\mathcal{C}\in M_{kn}(\mathbb{C})$ then, we obtain
\begin{align*}
\Psi^{k}(\mathcal{C})&=\left(
              \begin{array}{ccc}
                \Psi(C_{11}) & \cdots & \Psi(C_{1k}) \\
                \vdots & \ddots & \vdots \\
                \Psi(C_{k1}) & \cdots & \Psi(C_{kk}) \\
              \end{array}
            \right)=\left(
              \begin{array}{ccc}
                J\Phi(JC_{11}) & \cdots & J\Phi(JC_{1k}) \\
                \vdots & \ddots & \vdots \\
                J\Phi(JC_{k1}) & \cdots & J\Phi(JC_{kk}) \\
              \end{array}
            \right) \\
&=\left(
              \begin{array}{cccc}
                J & O_{n} & \cdots & O_{n} \\
                O_{n} & \ddots & \ddots & \vdots \\
                \vdots & \ddots & \ddots & O_{n} \\
                O_{n} & \cdots & O_{n} & J \\
              \end{array}
            \right)\left(
              \begin{array}{cccc}
                \Phi(JC_{11}) & \cdots & \cdots & \Phi(JC_{1k}) \\
                \vdots & \ddots & \ddots & \vdots \\
                \vdots & \ddots & \ddots & \vdots \\
                \Phi(JC_{k1}) & \cdots & \cdots & \Phi(JC_{kk}) \\
              \end{array}
            \right)=\mathcal{J}_{k}\Phi^{k}(\mathcal{J}_{k}\mathcal{C}).
\end{align*}

The proof that $\Theta^{k}(\cdot)=\Phi^{k}(\mathcal{J}_{k}\,\cdot)\mathcal{J}_{k}$ is very similar; therefore, it will be omitted.
\end{proof} \\

The particular selection of $\Psi$ in the next proposition is justified by virtue of the lemma \ref{lema 2}. We have

\begin{theorem}\label{proposicion 1}
Let $\Phi: M_{n}(\mathbb{C})\longrightarrow M_{n}(\mathbb{C})$ be a completely positive map, that is, for all $k=1,2,\cdots$,
$\Phi^{k}:M_{kn}(\mathbb{C})\longrightarrow M_{kn}(\mathbb{C})$ is a habitual positive map. Then, $\Psi(\cdot)=J\Phi(J\,\cdot)$ is a
completely $J$-positive map.
\end{theorem}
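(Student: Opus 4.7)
The plan is to unwind the definitions and apply Lemma~\ref{lema1} directly, reducing everything to the fact that $\Phi$ is completely positive in the ordinary sense.

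First I would fix $k\geq 1$ and take an arbitrary $\mathcal{C}\in M_{kn}^{+}(\mathbb{C})(\mathcal{J}_{k})$, i.e.\ a $\mathcal{J}_{k}$-positive block matrix. By definition this means that $\mathcal{J}_{k}\mathcal{C}$ is a (usual) positive matrix in $M_{kn}(\mathbb{C})$. The goal is to prove that $\Psi^{k}(\mathcal{C})\in M_{kn}^{+}(\mathbb{C})(\mathcal{J}_{k})$, equivalently that $\mathcal{J}_{k}\Psi^{k}(\mathcal{C})$ is a positive matrix.

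Next I would invoke Lemma~\ref{lema1} with this particular $\Psi(\cdot)=J\Phi(J\,\cdot)$, which yields
\begin{equation*}
\Psi^{k}(\mathcal{C})=\mathcal{J}_{k}\,\Phi^{k}(\mathcal{J}_{k}\mathcal{C}).
\end{equation*}
Multiplying on the left by $\mathcal{J}_{k}$ and using $\mathcal{J}_{k}^{2}=I_{kn}$, we obtain
\begin{equation*}
\mathcal{J}_{k}\Psi^{k}(\mathcal{C})=\Phi^{k}(\mathcal{J}_{k}\mathcal{C}).
\end{equation*}
The right-hand side is the image under $\Phi^{k}$ of the positive matrix $\mathcal{J}_{k}\mathcal{C}$. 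Since $\Phi$ is completely positive by hypothesis, every $\Phi^{k}$ is a positive map in the usual sense, so $\Phi^{k}(\mathcal{J}_{k}\mathcal{C})$ is positive. Therefore $\mathcal{J}_{k}\Psi^{k}(\mathcal{C})\geq 0$, which is exactly the statement that $\Psi^{k}(\mathcal{C})$ is $\mathcal{J}_{k}$-positive. Since $k$ was arbitrary, $\Psi$ is completely $J$-positive.

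There is no real obstacle here; the proof is essentially bookkeeping. The only point worth checking carefully is the compatibility between Lemma~\ref{lema1} (which gives $\Psi^{k}(\cdot)=\mathcal{J}_{k}\Phi^{k}(\mathcal{J}_{k}\,\cdot)$) and the definition of $\mathcal{J}_{k}$-positivity (which requires $\mathcal{J}_{k}$ on the left), so that the two $\mathcal{J}_{k}$ factors collapse via $\mathcal{J}_{k}^{2}=I_{kn}$. This collapse is precisely what transfers ordinary complete positivity of $\Phi$ into complete $J$-positivity of $\Psi$.
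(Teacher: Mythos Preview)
Your proof is correct and, in fact, more economical than the paper's own argument. The paper proceeds differently: it invokes the Kraus representation $\Phi(A)=\sum_{i=1}^{\nu}V_i^{\ast}AV_i$ of the completely positive map $\Phi$, and then for each $k$ carries out an explicit block-matrix computation of $[\Psi^{k}(\mathcal{C})x,y]_{\mathcal{J}_k}$, eventually rewriting it as $\sum_{i}[\mathcal{C}\,\widetilde{x}_i,\widetilde{y}_i]_{\mathcal{J}_k}$ with $\widetilde{x}_i=(V_ix_1,\ldots,V_ix_k)^{T}$, which is nonnegative when $x=y$ because $\mathcal{C}$ is $\mathcal{J}_k$-positive. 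Your route bypasses the Kraus decomposition entirely: by applying Lemma~\ref{lema1} and the identity $\mathcal{J}_k^{2}=I_{kn}$ you reduce the claim directly to the positivity of $\Phi^{k}$, which is literally the hypothesis. This is both shorter and logically cleaner, since it uses only the stated definition of complete positivity rather than the deeper structure theorem for completely positive maps on $M_n(\mathbb{C})$. The paper's approach, on the other hand, has the side benefit of exhibiting the Kraus $\mathcal{J}_k$-form of $\Psi^{k}$ explicitly, which it exploits immediately afterwards in~(\ref{s2 3}).
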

\begin{proof}
Since $\Phi: M_{n}(\mathbb{C})\longrightarrow M_{n}(\mathbb{C})$ is a completely positive map it admits a representation
of the form
\begin{equation*}
\Phi(A)=\sum_{i=1}^{\nu}V_{i}^{\ast}AV_{i},\,\,\,\,\,\,\,\,\,\forall\,A\in M_{n}(\mathbb{C}).
\end{equation*}

We must prove that $\Psi^{k}$ is $\mathcal{J}_{k}$-positive for all $k\in\mathbb{Z}_{+}$. From lemma \ref{lema1}, we have
$\Psi^{k}(\cdot)=\mathcal{J}_{k}\Phi^{k}(\mathcal{J}_{k}\,\cdot)$ for all $k\in \mathbb{Z}_{+}$. Now, let $k$ be fixed but arbitrary and
let $\mathcal{C}$ be any matrix of order $kn$ which is $\mathcal{J}_{k}$-positive, that is, $0\leq[\mathcal{C}x,x]_{\mathcal{J}_{k}}$ for
all $x\in\mathbb{C}^{kn}$. Then, for every $x^{T}=(x_{1},\cdots,x_{k}), y^{T}=(y_{1},\cdots,y_{k})\in \mathbb{C}^{kn}$ we obtain
\begin{align*}
&[\Psi^{k}(\mathcal{C})x,y]_{\mathcal{J}_{k}}=[\mathcal{J}_{k}\Phi^{k}(\mathcal{J}_{k}\mathcal{C})x,y]_{\mathcal{J}_{k}}=
\langle \mathcal{J}_{k}\Phi^{k}(\mathcal{J}_{k}\mathcal{C})x,y \rangle_{\mathbb{C}^{kn}}=\langle \Phi^{k}(\mathcal{J}_{k}\mathcal{C})x,y \rangle_{\mathbb{C}^{kn}}= \\
&\left <\left(
              \begin{array}{ccc}
                \Phi(JC_{11}) & \cdots & \Phi(JC_{1k}) \\
                \vdots & \ddots & \vdots \\
                \Phi(JC_{k1}) & \cdots & \Phi(JC_{kk}) \\
              \end{array}
            \right)x,y \right >
=\left <\left(
              \begin{array}{ccc}
                \sum_{i=1}^{\nu}V_{i}^{\ast}JC_{11}V_{i} & \cdots & \sum_{i=1}^{\nu}V_{i}^{\ast}JC_{1k}V_{i} \\
                \vdots & \ddots & \vdots \\
                \sum_{i=1}^{\nu}V_{i}^{\ast}JC_{k1}V_{i} & \cdots & \sum_{i=1}^{\nu}V_{i}^{\ast}JC_{kk}V_{i} \\
              \end{array}
            \right)x,y \right >  \\
            &=\left <\sum_{i=1}^{\nu}\left(
              \begin{array}{ccc}
                V_{i}^{\ast}JC_{11}V_{i} & \cdots & V_{i}^{\ast}JC_{1k}V_{i} \\
                \vdots & \ddots & \vdots \\
                V_{i}^{\ast}JC_{k1}V_{i} & \cdots & V_{i}^{\ast}JC_{kk}V_{i} \\
              \end{array}
            \right)x,y \right >
            =\sum_{i=1}^{\nu}\left <\left(
              \begin{array}{ccc}
                V_{i}^{\ast}JC_{11}V_{i} & \cdots & V_{i}^{\ast}JC_{1k}V_{i} \\
                \vdots & \ddots & \vdots \\
                V_{i}^{\ast}JC_{k1}V_{i} & \cdots & V_{i}^{\ast}JC_{kk}V_{i} \\
              \end{array}
            \right)x,y \right > \\
            &=\sum_{i=1}^{\nu}\left <\left(
              \begin{array}{ccc}
                JC_{11} & \cdots & JC_{1k} \\
                \vdots & \ddots & \vdots \\
                JC_{k1} & \cdots & JC_{kk} \\
              \end{array}
            \right)\left(
                     \begin{array}{c}
                       V_{i}x_{1} \\
                       \vdots \\
                       V_{i}x_{k} \\
                     \end{array}
                   \right)
            ,\left(
                     \begin{array}{c}
                       V_{i}y_{1} \\
                       \vdots \\
                       V_{i}y_{k} \\
                     \end{array}
                   \right) \right >  \\
                   &=\sum_{i=1}^{\nu}\left <\mathcal{J}_{k}\left(
              \begin{array}{ccc}
                C_{11} & \cdots & C_{1k} \\
                \vdots & \ddots & \vdots \\
                C_{k1} & \cdots & C_{kk} \\
              \end{array}
            \right)\left(
                     \begin{array}{c}
                       V_{i}x_{1} \\
                       \vdots \\
                       V_{i}x_{k} \\
                     \end{array}
                   \right)
            ,\left(
                     \begin{array}{c}
                       V_{i}y_{1} \\
                       \vdots \\
                       V_{i}y_{k} \\
                     \end{array}
                   \right) \right >  \\
                   &=\sum_{i=1}^{\nu}\left [ \left(
              \begin{array}{ccc}
                C_{11} & \cdots & C_{1k} \\
                \vdots & \ddots & \vdots \\
                C_{k1} & \cdots & C_{kk} \\
              \end{array}
            \right)\left(
                     \begin{array}{c}
                       V_{i}x_{1} \\
                       \vdots \\
                       V_{i}x_{k} \\
                     \end{array}
                   \right)
            ,\left(
                     \begin{array}{c}
                       V_{i}y_{1} \\
                       \vdots \\
                       V_{i}y_{k} \\
                     \end{array}
                   \right) \right ]_{\mathcal{J}_{k}},
\end{align*}
now, taking into account that $\mathcal{C}$ is a $\mathcal{J}_{k}$-positive matrix, we conclude that $0\leq[\Psi^{k}(\mathcal{C})z,z]_{\mathcal{J}_{k}}$
for all $z\in\mathcal{C}^{kn}$.
\end{proof}

Next, we summarize some remarks. Observe that from the proof of the previous theorem, it follows that if $\Phi(A)=\sum_{i=1}^{\nu}V_{i}^{\ast}AV_{i}$
for any $A\in M_{n}(\mathbb{C})$, being $\Phi$ completely positive and we define the map $\Psi(\cdot)=J\Phi(J\,\cdot)$, then for all $k\in\mathbb{Z}_{+}$,
one obtains
\begin{equation}\label{s2 3}
\Psi^{k}(\mathcal{C})=\mathcal{J}_{k}\Phi^{k}(\mathcal{J}_{k}\mathcal{C})=\sum_{i=1}^{\nu}\mathcal{V}^{\natural}_{i}\mathcal{C}\mathcal{V}_{i},
\,\,\,\,\,\,\,\,\,\,\,\,\,\forall\,\mathcal{C}\in M_{kn}(\mathbb{C}),
\end{equation}
where for $i=1,\cdots,\nu$
\begin{equation}\label{s2 4}
\mathcal{V}_{i}=\left(
              \begin{array}{cccc}
                V_{i} & O_{n} & \cdots & O_{n} \\
                O_{n} & \ddots & \ddots & \vdots \\
                \vdots & \ddots & \ddots & O_{n} \\
                O_{n} & \cdots & O_{n} & V_{i} \\
              \end{array}
            \right),\,\,\,\,\,\,\,\,\,\,\,\,\,\mathcal{V}^{\natural}_{i}=\left(
              \begin{array}{cccc}
                V^{\natural}_{i} & O_{n} & \cdots & O_{n} \\
                O_{n} & \ddots & \ddots & \vdots \\
                \vdots & \ddots & \ddots & O_{n} \\
                O_{n} & \cdots & O_{n} & V^{\natural}_{i} \\
              \end{array}
            \right),
\end{equation}
are matrices of order $kn$ and for each $i$ the matrix $\mathcal{V}^{\natural}_{i}$ denotes the $\mathcal{J}_{k}$-adjoint of $\mathcal{V}_{i}$ with respect to the indefinite metric $[\cdot,\cdot]_{\mathcal{J}_{k}}=\langle \mathcal{J}_{k} \cdot, \cdot \rangle_{\mathbb{C}^{kn}}$. On the other hand, for all
$k\in \mathbb{Z}_{+}$, (\ref{s2 3}) implies
\begin{equation*}
\Psi^{k}(\mathcal{J}_{k}\mathcal{C})\mathcal{J}_{k}=\sum_{i=1}^{\nu}\mathcal{J}_{k}\mathcal{V}^{\ast}_{i}\mathcal{C}\mathcal{V}_{i}\mathcal{J}_{k}
=\sum_{i=1}^{\nu}\mathcal{D}^{\ast}_{i}\mathcal{C}\mathcal{D}_{i}=\mathcal{J}_{k}\Phi^{k}(\mathcal{C})\mathcal{J}_{k},
\end{equation*}
in which $\mathcal{D}_{i}=\mathcal{V}_{i}\mathcal{J}_{k}$ for $i=1,\cdots,\nu$. It shows that for all $k\in \mathbb{Z}_{+}$
\begin{equation}\label{s2 4}
\Phi^{k}(\cdot)=\mathcal{J}_{k}\Psi^{k}(\mathcal{J}_{k}\,\cdot)\,\,.
\end{equation}

\begin{proposition}\label{cierra1}
Each Kraus $J$-map
\begin{equation*}
\Psi(A)=\sum_{i=1}^{\nu}W_{i}^{\natural}AW_{i},\,\,\,\,\,\,\,\,\,\forall\,A\in M_{n}(\mathbb{C}),
\end{equation*}
is a completely $J$-positive map.
\end{proposition}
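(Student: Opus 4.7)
The plan is to show directly that for every $k\geq 1$, the induced map $\Psi^{k}$ on $M_{kn}(\mathbb{C})$ is itself a Kraus $\mathcal{J}_{k}$-map (with respect to the fundamental symmetry $\mathcal{J}_{k}$ on $\mathbb{C}^{kn}$), and then invoke the observation made at the very beginning of this subsection that any Kraus-type map is positive with respect to its ambient indefinite metric.

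First, for each $i=1,\dots,\nu$ I would introduce the block-diagonal matrix $\mathcal{W}_{i}\in M_{kn}(\mathbb{C})$ having $W_{i}$ in every diagonal block and $O_{n}$ off the diagonal, in exact analogy with (\ref{s2 4}). Since $\mathcal{J}_{k}$ is itself block-diagonal with $J$ on the diagonal, a routine block computation gives
\begin{equation*}
\mathcal{W}_{i}^{\natural}=\mathcal{J}_{k}\mathcal{W}_{i}^{\ast}\mathcal{J}_{k}=\mathrm{diag}(W_{i}^{\natural},\dots,W_{i}^{\natural}).
\end{equation*}

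Next, I would check block by block that for any $\mathcal{C}=(C_{pq})_{p,q=1}^{k}\in M_{kn}(\mathbb{C})$, the $(p,q)$-block of $\sum_{i=1}^{\nu}\mathcal{W}_{i}^{\natural}\mathcal{C}\mathcal{W}_{i}$ equals $\sum_{i=1}^{\nu}W_{i}^{\natural}C_{pq}W_{i}=\Psi(C_{pq})$; this identifies
\begin{equation*}
\Psi^{k}(\mathcal{C})=\sum_{i=1}^{\nu}\mathcal{W}_{i}^{\natural}\mathcal{C}\mathcal{W}_{i},
\end{equation*}
so $\Psi^{k}$ is a Kraus $\mathcal{J}_{k}$-map on $M_{kn}(\mathbb{C})$. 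Finally I would apply the opening remark of this subsection (just after (\ref{s2 1})), with $J$ replaced by $\mathcal{J}_{k}$ and $V_{s}$ replaced by $\mathcal{W}_{i}$, to conclude that $\Psi^{k}$ sends $M_{kn}^{+}(\mathbb{C})(\mathcal{J}_{k})$ into itself. Since $k$ was arbitrary, $\Psi$ is completely $J$-positive.

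There is no real obstacle here: the argument is essentially the bookkeeping already carried out in the long displayed computation preceding equation (\ref{s2 3}), only run in reverse with an arbitrary Kraus $J$-map in place of one arising from a completely positive $\Phi$. The only point that deserves careful verification is the block-diagonal identity $\mathcal{W}_{i}^{\natural}=\mathrm{diag}(W_{i}^{\natural},\dots,W_{i}^{\natural})$, since this is what ensures that the Kraus generators of $\Psi^{k}$ are built from the $\mathcal{J}_{k}$-adjoint and not from some other operation. An alternative route, slightly less direct, would compose Theorem \ref{teorema1} (producing the completely positive map $\Phi(\cdot)=\Psi(J\cdot)J$ from $\Psi$) with Theorem \ref{proposicion 1} and then translate the resulting complete $\mathcal{J}_{k}$-positivity back to $\Psi$ using (\ref{s2 4}); however the block-Kraus argument above is cleaner.
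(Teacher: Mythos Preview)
Your proof is correct, but it takes a different route from the paper's. The paper defines $\Phi(A)=J\Psi(JA)$, computes that $\Phi(A)=\sum_{i}W_{i}^{\ast}AW_{i}$ is a standard Kraus form (hence $\Phi$ is completely positive by the classical result in \cite{wat}), and then invokes Theorem~\ref{proposicion 1} to conclude that $\Psi(\cdot)=J\Phi(J\cdot)$ is completely $J$-positive. In other words, the paper's proof is exactly the ``alternative route'' you sketch in your final paragraph and set aside as less direct.

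Your main argument is more self-contained: by exhibiting $\Psi^{k}$ explicitly as a Kraus $\mathcal{J}_{k}$-map via the block-diagonal lifts $\mathcal{W}_{i}$, you need only the elementary positivity observation following (\ref{s2 1}) and never appeal to the classical characterization of completely positive maps. The paper's route, by contrast, is consistent with its running strategy of reducing every $J$-statement to a standard one through the correspondence $\Psi\ \overrightarrow{jp\longrightarrow p}\ \Phi$, and it recycles the machinery already built in Theorem~\ref{proposicion 1}. Both are short and valid; yours is arguably cleaner as a stand-alone argument, while the paper's reinforces the structural link between the $J$-theory and the ordinary theory.
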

\begin{proof}Define $\Phi(A)=J\Psi(JA)$ for all $A\in M_{n}(\mathbb{C})$. It is clear that $\Psi\overrightarrow{jp\longrightarrow p}\,\Phi$
in the sense of lemma \ref{lema 2}. Then,
\begin{equation*}
\Phi(A)=\sum_{i=1}^{\nu}W_{i}^{\ast}AW_{i},\,\,\,\,\,\,\,\,\,\forall\,A\in M_{n}(\mathbb{C}),
\end{equation*}
which implies that $\Phi$ is a completely positive map (see \cite{wat}, page 82). Now, the proposition follows from theorem \ref{proposicion 1}.
\end{proof}

\subsection{Admissible Kraus $J$-positive maps and Quantum $J$-channels}

We recall that an ordinary quantum operator (or ordinary quantum channel) is a completely positive matrix map $\Phi$ which is trace-preserving,
that is, $Tr\, \Phi(A)= Tr\, A$. Clearly, a quantum operator maps states into states, where by a state, we mean a positive matrix whose trace is $1$.
It is well known that $\Phi$ is a quantum channel, if and only if (see \cite{wat} page $89$)
\begin{equation}\label{s2 5}
\Phi(A)=\sum_{i=1}^{\nu}V_{i}^{\ast}AV_{i},\,\,\,\,\,\,\,\,\,\forall\,A\in M_{n}(\mathbb{C}),
\end{equation}
and
\begin{equation}\label{s2 6}
\sum_{i=1}^{\nu}V_{i}V_{i}^{\ast}=I_{n}\,.
\end{equation}

From (\ref{s2 5}) and (\ref{s2 6}) follow that being $\Phi$ a quantum operator and $U\in M_{n}(\mathbb{C})$ a unitary matrix, that is,
$U^{\ast}U=UU^{\ast}=I_{n}$, then $\Theta(\cdot)=U^{\ast}\Phi(\cdot)U$ is a quantum channel.

Suppose that $\Phi$ is a quantum operator, then we already know that $\Psi(\cdot)=J\Phi(J\,\cdot)$ is a completely $J$-positive map, and
from (\ref{s2 5}) we obtain
\begin{equation}\label{s2 7}
\Psi(A)=\sum_{i=1}^{\nu}V_{i}^{\natural}AV_{i},\,\,\,\,\,\,\,\,\,\forall\,A\in M_{n}(\mathbb{C}),
\end{equation}
and (\ref{s2 6}) implies
\begin{equation}\label{s2 8}
\sum_{i=1}^{\nu}V_{i}JV^{\natural}_{i}=J.
\end{equation}

Condition (\ref{s2 6}) is equivalent to (\ref{s2 8}). However, in our exposition, we will maintain (\ref{s2 8}) to achieve a more related
format to the new situation. In other words, we would only like to use the involution $\natural$ in the involved equations.

A map $\Psi$ satisfying (\ref{s2 7}) and (\ref{s2 8}) is called an \textbf{admissible Kraus $J$-positive map}.
Let $U$ be a $J$-unitary $n\times n$
matrix, that is, $UU^{\natural}=U^{\natural}U=I_{n}$ and assume that $\Psi$ is an admissible Kraus $J$-positive map, that is, we can find
$\nu\in\mathbb{Z}_{+}$ and a matrix vector $(V_{1},\cdots,V_{\nu})\in(M_{n}(\mathbb{C}))^{\nu}$ which depend on $\Psi$ such that
(\ref{s2 7})-(\ref{s2 8}) hold then $\Theta(\cdot)=U^{\natural}\Psi(\cdot)U$ is also an admissible Kraus $J$-positive map. To prove this
fact, it is enough to note that
$$(AB)^{\natural}=J(AB)^{\ast}J=JB^{\ast}A^{\ast}J=JB^{\ast}J^{2}A^{\ast}J=B^{\natural}A^{\natural}.$$

\begin{definition}The map $\Psi$, is said to be a \textbf{quantum $J$-operator} (or \textbf{quantum $J$-channel}) if it is a completely $J$-positive
map such that $J\Psi(J\,\cdot)$ is trace-preserving (it implies that $\Psi(J\,\cdot)J$ is also trace-preserving).
\end{definition}

Now, we will discuss the relation between admissible Kraus $J$-positive maps and quantum $J$-channels.

\begin{theorem}\label{teorema 2 s2}
Assume that $\Psi$ is an admissible Kraus $J$-positive map, then it is a quantum $J$-channel.
\end{theorem}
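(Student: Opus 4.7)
The plan is to unpack the two conditions in the definition of a quantum $J$-channel and verify each directly from the admissibility hypothesis on $\Psi$.

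First I would observe that the completely $J$-positive part comes for free: since $\Psi$ is by assumption of the Kraus $J$-map form (\ref{s2 7}), Proposition \ref{cierra1} immediately gives that $\Psi$ is completely $J$-positive. So the entire remaining task is to verify that $J\Psi(J\,\cdot)$ is trace-preserving on $M_n(\mathbb{C})$.

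For the trace-preserving property, the key is to translate between $\natural$ and $\ast$ using $J^2 = I_n$. The basic identity I would use is $JV^{\natural}J = J(JV^{\ast}J)J = V^{\ast}$. Applying this to (\ref{s2 7}) yields
\begin{equation*}
J\Psi(JA) = \sum_{i=1}^{\nu} JV_i^{\natural}J\,A\,V_i = \sum_{i=1}^{\nu} V_i^{\ast} A V_i,
\end{equation*}
which is exactly the ordinary Kraus form of a completely positive map whose trace-preservation is controlled by $\sum_i V_i V_i^{\ast}$ (taking the trace and using cyclicity).

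Next I would show that the admissibility condition (\ref{s2 8}) is equivalent to the classical trace-preservation condition $\sum_{i=1}^{\nu} V_i V_i^{\ast} = I_n$. Indeed, from $V_i^{\natural} = JV_i^{\ast}J$ we compute
\begin{equation*}
V_i J V_i^{\natural} = V_i J (J V_i^{\ast} J) = V_i V_i^{\ast} J,
\end{equation*}
so summing over $i$ gives $\sum_i V_i J V_i^{\natural} = \left(\sum_i V_i V_i^{\ast}\right)J$. Equating this to $J$ from (\ref{s2 8}) and right-multiplying by $J$ yields $\sum_i V_i V_i^{\ast} = I_n$. Combining with the computation of $J\Psi(JA)$ and trace cyclicity gives $\mathrm{Tr}(J\Psi(JA)) = \mathrm{Tr}(A)$, which is the required trace-preserving property.

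There is no real obstacle here; the entire argument is essentially a bookkeeping translation between the $\ast$-world and the $\natural$-world via $J^2 = I_n$, plus invocation of the already-proved Proposition \ref{cierra1}. The only point that requires any care is making sure the equivalence between (\ref{s2 8}) and $\sum_i V_i V_i^{\ast} = I_n$ is derived cleanly, since the paper chose to state admissibility in the $\natural$-format for notational uniformity rather than in the equivalent $\ast$-format.
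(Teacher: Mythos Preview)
Your proposal is correct and follows essentially the same approach as the paper: both reduce the claim to showing that $\Phi(\cdot)=J\Psi(J\,\cdot)$ has the standard Kraus form (\ref{s2 5}) with the normalization (\ref{s2 6}), obtained by translating between $\natural$ and $\ast$ via $J^{2}=I_{n}$. The only cosmetic difference is that the paper invokes Theorem~\ref{proposicion 1} for complete $J$-positivity (going through $\Phi$), whereas you cite Proposition~\ref{cierra1} directly; since Proposition~\ref{cierra1} is itself proved using Theorem~\ref{proposicion 1}, this amounts to the same argument.
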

\begin{proof}Suppose that $\Psi$ is an admissible Kraus $J$-positive map, then there is $\nu$ and there exists a matrix vector $(V_{1},\cdots,V_{\nu})\in(M_{n}(\mathbb{C}))^{\nu}$ such that (\ref{s2 8}) holds and $\Psi$ admits the representation (\ref{s2 7}).
Define $\Phi(\cdot)=J\Psi(J\,\cdot)$, so from (\ref{s2 7}) and (\ref{s2 8}), we can recover (\ref{s2 5}) and (\ref{s2 6})
which imply that $\Phi$ is a quantum channel. Hence, theorem \ref{proposicion 1} shows that $\Psi$ is a completely $J$-positive map.
On the other hand, $J\Psi(J\,\cdot)=\Phi(\cdot)$ is trace-preserving.
\end{proof}

\begin{proposition}Assume that $\Psi$ is a completely $J$-positive map and $\Phi$ such that $\Psi\,\, \overrightarrow{jp\longrightarrow p} \,\,\Phi$
then $\Phi$ is a usual completely positive map. Even more, suppose that $\Psi$ is a quantum $J$-operator, then this $\Phi$ will be an ordinary quantum channel.
\end{proposition}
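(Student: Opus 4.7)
The plan is to leverage the identity $\Phi^{k}(\cdot)=\mathcal{J}_{k}\Psi^{k}(\mathcal{J}_{k}\,\cdot)$ already established in (\ref{s2 4}), together with Lemma \ref{lema 2} applied at every block level $k$. Concretely, to prove that $\Phi$ is completely positive I must show that $\Phi^{k}$ is a positive map on $M_{kn}(\mathbb{C})$ for every $k\geq 1$.

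First I would observe that, by hypothesis, $\Psi$ is completely $J$-positive, so for each $k\geq 1$ the map $\Psi^{k}:M_{kn}(\mathbb{C})\longrightarrow M_{kn}(\mathbb{C})$ sends $M_{kn}^{+}(\mathbb{C})(\mathcal{J}_{k})$ into itself, i.e.\ $\Psi^{k}$ is $\mathcal{J}_{k}$-positive (note that $\mathcal{J}_{k}$ is itself a fundamental symmetry of $\mathbb{C}^{kn}$, as was remarked just before the definition of completely $J$-positive map). Next, by Lemma \ref{lema 2} applied with the ambient fundamental symmetry $\mathcal{J}_{k}$ in place of $J$, the map $\mathcal{J}_{k}\Psi^{k}(\mathcal{J}_{k}\,\cdot)$ is a (usual) positive map on $M_{kn}(\mathbb{C})$. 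But this map coincides with $\Phi^{k}$ by the formula (\ref{s2 4}). Since this holds for every $k$, $\Phi$ is completely positive.

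For the second assertion, assume in addition that $\Psi$ is a quantum $J$-operator. By the very definition of quantum $J$-operator, $J\Psi(J\,\cdot)$ is trace-preserving; but $J\Psi(J\,\cdot)=\Phi(\cdot)$ by construction of the correspondence $\Psi\,\overrightarrow{jp\longrightarrow p}\,\Phi$. Combining this with the complete positivity of $\Phi$ just proved, we conclude that $\Phi$ is a completely positive trace-preserving map, i.e.\ an ordinary quantum channel.

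The only delicate point is to make sure one is allowed to invoke Lemma \ref{lema 2} at the block level $k$; this is legitimate because $\mathcal{J}_{k}$ satisfies $\mathcal{J}_{k}^{\ast}=\mathcal{J}_{k}$ and $\mathcal{J}_{k}^{2}=I_{kn}$, so it plays exactly the role of $J$ in that lemma. No further computation is required, since identity (\ref{s2 4}) already packages the interaction between the block structure and the passage $\Psi\leftrightarrow\Phi$.
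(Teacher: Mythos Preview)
Your argument is correct and follows essentially the same route as the paper's own proof: both establish the block-level identity $\Phi^{k}(\cdot)=\mathcal{J}_{k}\Psi^{k}(\mathcal{J}_{k}\,\cdot)$ and then pass from $\mathcal{J}_{k}$-positivity of $\Psi^{k}$ to ordinary positivity of $\Phi^{k}$ (the paper unpacks this step explicitly, while you invoke Lemma~\ref{lema 2} at level $k$, which is perfectly legitimate since $\mathcal{J}_{k}$ is a fundamental symmetry). One small caution on citations: equation~(\ref{s2 4}) in the paper was derived in a paragraph that \emph{assumed} $\Phi$ already had a Kraus representation, so invoking it here risks looking circular; it is cleaner to appeal directly to Lemma~\ref{lema1} (applied with the roles of $\Phi$ and $\Psi$ swapped), which yields the same identity without any hypothesis on $\Phi$---this is exactly what the paper's own proof does.
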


This proposition can be considered a reciprocal one, in regard to proposition \ref{proposicion 1}. \\

\begin{proof}From lemma \ref{lema1}, one knows that $\Phi^{k}(\cdot)=\mathcal{J}_{k}\Psi^{k}(\mathcal{J}_{k}\,\cdot)$ for all $k\in\mathbb{Z}_{+}$
where by hypothesis each $\Psi^{k}$ is $\mathcal{J}_{k}$-positive. It implies that $\Psi^{k}(\mathcal{J}_{k}N)\in M^{+}_{kn}(\mathbb{C})(\mathcal{J}_{k})$
for all $N\in  M^{+}_{kn}(\mathbb{C})$ so $\Psi^{k}(\mathcal{J}_{k}N)=\mathcal{J}_{k}L$ for some $L\in  M^{+}_{kn}(\mathbb{C})$. It follows that
$\Phi^{k}$ is positive on $M_{kn}(\mathbb{C})$. Indeed, $\Phi^{k}(N)=L$, in other words, $\Phi^{k}(M^{+}_{kn}(\mathbb{C}))\subset M^{+}_{kn}(\mathbb{C})$.
On the other hand, being $\Psi$ a quantum $J$-operator, by definition, it is a completely $J$-positive, so from the first part of our proof it follows
that $\Phi$ is a completely positive map. Finally, since $J\Psi(J\,\cdot)$ is preserving trace then $\Phi(\cdot)=J\Psi(J\,\cdot)$ is a quantum operator.
The proposition has been proved.
\end{proof}

\begin{proposition}\label{corolario 1 s2}
Each completely $J$-positive map $\Psi$ is a Kraus $J$-map. Even more, from (\ref{s2 3}) it follows that each $\Psi^{k}$ is a Kraus
$\mathcal{J}_{k}$-map. Finally, every quantum $J$-channel is an admissible Kraus $J$-positive map.
\end{proposition}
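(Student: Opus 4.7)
The plan is to reduce the statement to the standard Choi-Kraus representation via the bijective $jp\leftrightarrow p$ correspondence of Lemma \ref{lema 2}, already invoked repeatedly in this section.

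For the first assertion, let $\Psi$ be completely $J$-positive. The immediately preceding proposition provides a completely positive $\Phi$ with $\Psi\,\,\overrightarrow{jp\longrightarrow p}\,\,\Phi$, that is $\Psi(\cdot)=J\Phi(J\,\cdot)$. Applying the classical Choi-Kraus theorem to $\Phi$ (cited from \cite{wat}, page $82$) furnishes matrices $V_{1},\dots,V_{\nu}$ such that $\Phi(A)=\sum_{i=1}^{\nu}V_{i}^{\ast}AV_{i}$. Substituting into $\Psi(A)=J\Phi(JA)$ and using $V_{i}^{\natural}=JV_{i}^{\ast}J$ together with $J^{2}=I_{n}$ yields $\Psi(A)=\sum_{i=1}^{\nu}V_{i}^{\natural}AV_{i}$, exhibiting $\Psi$ as a Kraus $J$-map. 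The second assertion is then immediate from formula (\ref{s2 3}) in the remarks after Theorem \ref{proposicion 1}: that equation already presents $\Psi^{k}(\mathcal{C})$ in the form $\sum_{i=1}^{\nu}\mathcal{V}_{i}^{\natural}\mathcal{C}\mathcal{V}_{i}$ with the block-diagonal $\mathcal{V}_{i}$ defined in (\ref{s2 4}), which is precisely the format of a Kraus $\mathcal{J}_{k}$-map.

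For the final assertion, suppose $\Psi$ is a quantum $J$-channel. By definition, $\Psi$ is completely $J$-positive and $\Phi(\cdot)=J\Psi(J\,\cdot)$ is trace-preserving, so $\Phi$ is a standard quantum operator. The Kraus $J$-representation of $\Psi$ is already provided by the first part. The trace-preservation condition (\ref{s2 6}) for $\Phi$, namely $\sum_{i=1}^{\nu}V_{i}V_{i}^{\ast}=I_{n}$, translates via the identity $V_{i}^{\ast}J=JV_{i}^{\natural}$ (and the fact that the two conditions are equivalent, as noted in the paper just before (\ref{s2 8})) into $\sum_{i=1}^{\nu}V_{i}JV_{i}^{\natural}=J$. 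Combined with the Kraus $J$-representation, this shows that $\Psi$ is an admissible Kraus $J$-positive map.

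No genuine obstacle arises: the heavy lifting has been done by Lemma \ref{lema 2}, Theorem \ref{teorema1}, Theorem \ref{proposicion 1} and the preceding proposition, and this statement just packages their outputs. The only care required is to track the two involutions $\ast$ and $\natural$ cleanly through the substitution and to verify that the Choi matrices for $\Phi$ are transported into Kraus $J$-matrices for $\Psi$ without any further adjustment.
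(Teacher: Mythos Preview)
Your proof is correct and follows essentially the same route as the paper: invoke the preceding proposition to obtain the completely positive $\Phi$ with $\Psi\,\,\overrightarrow{jp\longrightarrow p}\,\,\Phi$, apply the classical Kraus representation to $\Phi$, substitute back using $V_{i}^{\natural}=JV_{i}^{\ast}J$, and in the quantum $J$-channel case translate the trace-preservation condition $\sum V_{i}V_{i}^{\ast}=I_{n}$ into $\sum V_{i}JV_{i}^{\natural}=J$. If anything, your write-up is slightly more explicit than the paper's, since you spell out the substitution and also address the middle assertion about $\Psi^{k}$ via (\ref{s2 3}), which the paper's own proof leaves implicit.
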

\begin{proof}Let $\Phi$ be the completely positive map such that $\Psi\,\, \overrightarrow{jp\longrightarrow p} \,\,\Phi$ (we are taking into account
the previous proposition), then
\begin{equation*}
\Phi(A)=\sum_{i=1}^{v}V_{i}^{\ast} AV_{i},\,\,\,\,\,\,\,\,\,\,\,\forall\,A\in M_{n}(\mathbb{C}),
\end{equation*}
and since $\Phi(\cdot)=J\Psi(J\,\cdot)$, the statement follows. If, on the contrary, we suppose something stronger, let's say, that $\Psi$ is
a quantum $J$-channel, then this $\Phi$ is now a quantum operator, so the matrix vector $(V_{1},\cdots,V_{\nu})$ satisfies the condition
\begin{equation*}
\sum_{i=1}^{\nu}V_{i}V_{i}^{\ast}=I_{n}\,,
\end{equation*}
from which we easily obtain (\ref{s2 8}). It concludes the proof.
\end{proof}

\begin{corollary}The map $\Psi$ is completely $J$-positive if and only if it is a Kraus $J$-map.
\end{corollary}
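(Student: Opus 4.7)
The plan is simply to package the two implications that have already been established in the preceding two propositions. The forward direction (Kraus $J$-map $\Rightarrow$ completely $J$-positive) is exactly Proposition \ref{cierra1}, so nothing new is needed there. The backward direction (completely $J$-positive $\Rightarrow$ Kraus $J$-map) is the first assertion of Proposition \ref{corolario 1 s2}, and the corollary just records the combined biconditional.

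For clarity I would briefly re-trace the backward direction, since it is the less obvious one. Given a completely $J$-positive map $\Psi$, I would introduce the associated map $\Phi(\cdot)=J\Psi(J\,\cdot)$ supplied by Lemma \ref{lema 2}, apply the proposition immediately preceding this corollary to conclude that $\Phi$ is a completely positive map in the usual sense, and then invoke the classical Kraus/Choi representation (cited from \cite{wat}, page 82) to obtain matrices $V_1,\dots,V_\nu$ with $\Phi(A)=\sum_i V_i^{\ast}AV_i$. Substituting back, since $\Psi(\cdot)=J\Phi(J\,\cdot)$ and $JV_i^{\ast}J=V_i^{\natural}$, yields
\begin{equation*}
\Psi(A)=\sum_{i=1}^{\nu} V_i^{\natural}AV_i,\qquad \forall\,A\in M_n(\mathbb{C}),
\end{equation*}
which is exactly the Kraus $J$-map form. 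The forward implication is then just Proposition \ref{cierra1}, finishing the equivalence.

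There is no real obstacle here, as the corollary is essentially a restatement of results already in place; the only tiny thing to watch is that the correspondence $\Psi\,\overrightarrow{jp\longrightarrow p}\,\Phi$ used in both propositions is the same one and is a bijection between $J$-positive maps and positive maps (which is immediate from Lemma \ref{lema 2} since multiplication by $J$ is an involution). With that observation the proof amounts to a one-line citation of the two prior propositions.
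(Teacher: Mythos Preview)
Your proposal is correct and follows exactly the paper's own argument: the corollary is obtained simply by combining Proposition~\ref{cierra1} (forward direction) with Proposition~\ref{corolario 1 s2} (backward direction). Your added re-tracing of the backward implication is accurate but goes beyond what the paper records, since its proof is literally the one-line citation of those two propositions.
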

\begin{proof}It is a consequence of propositions \ref{cierra1} and \ref{corolario 1 s2}.
\end{proof} \\

The following conclusive corollary represents a summary of the results of this section.

\begin{corollary}\,$\Psi$ is a quantum $J$-operator, if and only if $\Psi$ is an admissible Kraus $J$-positive map.
\end{corollary}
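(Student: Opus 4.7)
The plan is simply to splice together the two directions that have already been established in the preceding results, so no new technical work is required; the corollary is essentially a packaging statement.

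For the forward implication, I would assume $\Psi$ is a quantum $J$-operator. Then by Proposition \ref{corolario 1 s2}, every quantum $J$-channel is an admissible Kraus $J$-positive map, which immediately yields the conclusion. Concretely, the previous proposition guarantees that the associated map $\Phi(\cdot)=J\Psi(J\,\cdot)$ (obtained from the correspondence $\Psi\,\overrightarrow{jp\longrightarrow p}\,\Phi$ of Lemma \ref{lema 2}) is an ordinary quantum channel, hence admits a Kraus decomposition with matrices $(V_{1},\dots,V_{\nu})$ satisfying $\sum_{i}V_{i}V_{i}^{\ast}=I_{n}$. Translating back via $\Psi(\cdot)=J\Phi(J\,\cdot)$ produces the representation (\ref{s2 7}) together with the condition (\ref{s2 8}), which is precisely the definition of an admissible Kraus $J$-positive map.

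For the reverse implication, I would invoke Theorem \ref{teorema 2 s2} directly: any admissible Kraus $J$-positive map is a quantum $J$-channel.

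I do not expect any obstacle here, since both implications have already been proved in full detail earlier. The only thing to be careful about is to cite the correct previous results in each direction and to note that the equivalence between (\ref{s2 6}) and (\ref{s2 8}) has been addressed in the discussion preceding the definition of admissible Kraus $J$-positive maps, so no further verification is needed.
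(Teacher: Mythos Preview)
Your proposal is correct and matches the paper's own proof exactly: the paper simply states that the corollary follows from Theorem \ref{teorema 2 s2} and Proposition \ref{corolario 1 s2}, which are precisely the two results you invoke for the two directions. The additional unpacking you provide (passing through $\Phi(\cdot)=J\Psi(J\,\cdot)$ and translating between (\ref{s2 6}) and (\ref{s2 8})) is already contained in the proofs of those cited results, so nothing further is needed.
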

\begin{proof}The statement follows from theorem \ref{teorema 2 s2} and proposition \ref{corolario 1 s2}.
\end{proof} \\

Now, we relate the two sections by means of the following proposition

\begin{proposition}A quantum $J$-operator maps quantum $J$-states of unitary origin into quantum $J$-states of unitary origin.
\end{proposition}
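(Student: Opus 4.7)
The plan is to verify the two defining properties of a quantum $J$-state of unitary origin (namely, $J$-positivity and $Tr\,BJ=1$) directly from the definition of a quantum $J$-operator. Let $S$ be a quantum $J$-state of unitary origin, so $S\in M^{+}_{n}(\mathbb{C})(J)$ and $Tr\,SJ=1$; I want to show the same for $\Psi(S)$.

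For the first property, since $\Psi$ is by definition a completely $J$-positive map, taking $k=1$ in the definition gives that $\Psi$ is in particular $J$-positive, i.e.\ $\Psi\bigl(M^{+}_{n}(\mathbb{C})(J)\bigr)\subset M^{+}_{n}(\mathbb{C})(J)$. Hence $\Psi(S)$ is $J$-positive.

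For the second property, I would use the ordinary quantum channel $\Phi(\cdot)=J\Psi(J\,\cdot)$ which, by the definition of quantum $J$-operator, is trace-preserving. Set $A=JS$; since $J^{2}=I_{n}$ we have $S=JA$, and $A$ is a positive matrix with $Tr\,A=Tr\,JS=Tr\,SJ=1$ (using cyclicity of the trace). Then
\[
Tr\,\Psi(S)J \;=\; Tr\,J\Psi(S) \;=\; Tr\,J\Psi(JA) \;=\; Tr\,\Phi(A) \;=\; Tr\,A \;=\; 1,
\]
where the penultimate equality is the trace-preserving property of $\Phi$. This shows $Tr\,\Psi(S)J=1$, so $\Psi(S)$ is again a quantum $J$-state of unitary origin.

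There is no genuine obstacle here: the statement is essentially a reformulation of the definition of quantum $J$-operator through the correspondence $\Psi\,\overrightarrow{jp\longrightarrow p}\,\Phi$. The only point requiring care is to avoid conflating the two notions from the paper's table, that is, to remember that "unitary origin" corresponds to the normalization $Tr\,BJ=1$ (so that $JB$ is a standard density matrix) rather than $Tr\,B=1$, so that the trace-preserving property of $\Phi$, not of $\Psi$ itself, is what must be invoked.
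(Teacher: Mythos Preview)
Your proof is correct and in fact more economical than the paper's. Both arguments establish $J$-positivity of $\Psi(S)$ easily (you from the $k=1$ case of complete $J$-positivity, the paper from the Kraus $J$-representation), and both reduce the trace condition to the trace-preserving property of an auxiliary ordinary quantum channel. The difference lies in which channel is used: you invoke directly the map $\Phi(\cdot)=J\Psi(J\,\cdot)$ that appears in the very definition of quantum $J$-operator, and conclude via $Tr\,\Psi(S)J=Tr\,J\Psi(JA)=Tr\,\Phi(A)=Tr\,A$. The paper instead passes through the admissible Kraus representation $\Psi(B)=\sum_i V_i^{\natural}BV_i$, rewrites $\Psi(B)J=\sum_i Z_i^{\ast}(BJ)Z_i$ with $Z_i=JV_iJ$, and checks by hand that $\sum_i Z_iZ_i^{\ast}=I_n$ so that this new map is a quantum channel on $BJ$.

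Your route is shorter and needs no structure theory: it is literally the definition plus cyclicity of the trace. The paper's route is more computational but has the minor payoff of exhibiting explicitly a second Kraus-type channel (with operators $Z_i=JV_iJ$) intertwining $\Psi(\cdot)J$ and $(\cdot)J$, which makes the mechanism visible at the level of Kraus matrices.
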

\begin{proof}We recall that a quantum $J$-state of unitary origin $B$ satisfies the following two properties: $B$ is $J$-positive and $Tr\,BJ=1$.
Let $\Psi$ be an arbitrary quantum $J$-channel. Then, there is a matrix vector $(V_{1},\cdots,V_{\nu})$ such that, in particular, for any quantum
$J$-state $B$ of unitary origin
\begin{equation}\label{s2 9}
\Psi(B)=\sum_{i=1}^{\nu}V^{\natural}_{i}BV_{i}.
\end{equation}

Hence, it shows that $\Psi(B)$ is a $J$-positive matrix for all quantum $J$-state $B$ of unitary origin. Note also that from (\ref{s2 9}),
it follows
\begin{equation*}
\Psi(B)J=\sum_{i=1}^{\nu}(JV^{\ast}_{i}J)(BJ)(JV_{i}J)=\sum_{i=1}^{\nu}Z_{i}^{\ast}(BJ)Z_{i},
\end{equation*}
where $Z_{i}=JV_{i}J$ for $i=1,\cdots,\nu$. Define $\Phi(M)=\sum_{i=1}^{\nu}Z_{i}^{\ast}MZ_{i}$ for all $M\in M_{n}(\mathbb{C})$.
Obviously, it is a completely positive map. We claim that actually $\Phi$ is a quantum channel. Indeed, since $\Psi$ is quantum
$J$-operator, we know that
\begin{equation*}
\sum_{i=1}^{\nu}V_{i}JV_{i}^{\natural}=\sum_{i=1}^{\nu}V_{i}J(JV_{i}^{\ast}J)=J,
\end{equation*}
thus
\begin{equation*}
\sum_{i=1}^{\nu}(JV_{i}J)(JV_{i}^{\ast}J)=\sum_{i=1}^{\nu}Z_{i}Z_{i}^{\ast}=I_{n},
\end{equation*}
so $\Phi$ is a quantum channel. Note also that $\Psi(B)J=\Phi(BJ)$, it implies that $Tr\,\Psi(B)J=Tr\,\Phi(BJ)=Tr\,BJ=1$. Then,
$\Psi(B)$ is a quantum $J$-state of unitary origin.
\end{proof}

\begin{theorem}Let $\Psi$ be a completely $J$-positive map, that is, there is a matrix vector $(V_{1},\cdots,V_{l})\in M^{l}_{n}(\mathbb{C})$
such that
\begin{equation*}
\Psi(A)=\sum_{i=1}^{l}V^{\natural}_{i}AV_{i},\,\,\,\,\,\,\,\,\,\,\,\forall\,A\in M_{n}(\mathbb{C}).
\end{equation*}

Then, $\Psi$ is trace preserving, if and only if \,$\sum_{i=1}^{l}V_{i}V^{\natural}_{i}=I_{n}$.
\end{theorem}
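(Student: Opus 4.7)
The plan is to reduce the trace-preservation condition to a statement about $\sum_i V_i V_i^\natural$ acting as the identity under the Hilbert--Schmidt pairing $(M,A)\mapsto \operatorname{Tr}(MA)$, and then invoke nondegeneracy of this pairing on $M_n(\mathbb{C})$.

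Concretely, the first step is to apply the cyclic property of the ordinary trace to each Kraus summand: for every $A\in M_n(\mathbb{C})$,
\begin{equation*}
\operatorname{Tr}\,\Psi(A)=\sum_{i=1}^{l}\operatorname{Tr}(V_i^{\natural}AV_i)=\sum_{i=1}^{l}\operatorname{Tr}(V_iV_i^{\natural}A)=\operatorname{Tr}\!\left[\left(\sum_{i=1}^{l}V_iV_i^{\natural}\right)A\right].
\end{equation*}
Note that the definition $V_i^{\natural}=JV_i^{\ast}J$ is not needed explicitly here; cyclicity is enough. The sufficiency direction is then immediate: if $\sum_{i=1}^{l}V_iV_i^{\natural}=I_n$, the right-hand side equals $\operatorname{Tr}(I_nA)=\operatorname{Tr}\,A$, so $\Psi$ is trace-preserving.

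For the necessity direction, set $M=\sum_{i=1}^{l}V_iV_i^{\natural}-I_n$. Trace-preservation yields $\operatorname{Tr}(MA)=0$ for every $A\in M_n(\mathbb{C})$. Specializing to $A=M^{\ast}$ gives $\operatorname{Tr}(MM^{\ast})=\|M\|_2^{2}=0$, hence $M=O_n$ and $\sum_{i=1}^{l}V_iV_i^{\natural}=I_n$, as required.

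There is essentially no main obstacle: the argument is a direct trace manipulation plus nondegeneracy of the Hilbert--Schmidt inner product. The only point that might warrant care is that we should not conflate this statement with the definition of quantum $J$-channel given earlier in Section 3, which requires $J\Psi(J\,\cdot)$ (rather than $\Psi$ itself) to be trace-preserving; the condition here is a different, \emph{stronger-looking}, preservation property that leads to $\sum_i V_iV_i^{\natural}=I_n$ in place of $\sum_i V_iJV_i^{\natural}=J$.
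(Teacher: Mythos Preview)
Your proof is correct and follows exactly the approach the paper sketches: cyclic invariance of the trace to rewrite $\operatorname{Tr}\,\Psi(A)$ as $\operatorname{Tr}\big[(\sum_i V_iV_i^{\natural})A\big]$, followed by nondegeneracy of the Hilbert--Schmidt inner product on $\mathcal{S}_2$ to force $\sum_i V_iV_i^{\natural}=I_n$. In fact you have written out the details more fully than the paper, which only names these two ingredients and leaves the rest to the reader.
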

\begin{proof}The proof is similar to the case $J=I_{n}$, that is, when $A^{\natural}=A^{\ast}$ for every $A\in M_{n}(\mathbb{C})$.
This is based on two well known facts: first, the trace is invariant under cyclic permutations and, second, $\mathcal{S}_{2}$ is
a Hilbert space with respect to the inner product $\langle A,B \rangle=Tr\,A^{\ast}B$.
\end{proof}

\begin{remark}Observe that the completely $J$-positive maps, which are trace preserving, transform quantum $J$-states of $J$-unitary origin
into quantum $J$-states of $J$-unitary origin.
\end{remark}

\section{A generalization of the previous section}

In this part of our work, results of the previous section are generalized. Moreover, the fact between the working conditions, which leads
to an essential change, is that all maps $\Psi$ considered below transform $J_{1}$-positive matrices into $J_{2}$-positive matrices, where
$J_{2}\neq \pm J_{1}$ are both fundamental symmetries of $M_{n}(\mathbb{C})$. Next, we generally keep the same notation of the previous section.

Let $J_{1}$ and $J_{2}$ be two different matrices belonging to  $M_{n}(\mathbb{C})$ which are fundamental symmetries. Thus, they introduce two different structures of indefinite metric space on $\mathbb{C}^{n}$, denoted by $(\mathbb{C}^{n}, [\cdot,\cdot]_{1})$ and $(\mathbb{C}^{n}, [\cdot,\cdot]_{2})$ respectively, where $[\cdot,\cdot]_{1}=\langle J_{1}\cdot,\cdot\rangle_{\mathbb{C}^{n}}$ and $[\cdot,\cdot]_{2}=\langle J_{2}\cdot,\cdot\rangle_{\mathbb{C}^{n}}$.
Suppose that $A$ and $B$ are two matrices such that
\begin{equation}\label{sec interc 1}
[Ax,y]_{1}=[x,By]_{2},\,\,\,\,\,\,\,\forall\,x,y\in\, \mathbb{C}^{n},
\end{equation}
then, we say that $B$ is the generalized indefinite adjoint of $A$ and it is denoted by $A^{\flat}$. Clearly, $A^{\flat}=J_{2}A^{\ast}J_{1}$.

We shall make a few of remarks about operation $\flat$.
\begin{theorem}The following statements are true
\begin{enumerate}
  \item $(A_{1}+A_{2})^{\flat}=A^{\flat}_{1}+A^{\flat}_{2}$, for all $A_{1},A_{2}\in M_{n}(\mathbb{C})$.
  \item $(\lambda A)^{\flat}=\overline{\lambda} A^{\flat}$, for any $\lambda\in \mathbb{C}$ and all $A\in M_{n}(\mathbb{C})$.
  \item $(A^{\flat})^{\flat}=A$ for all $A\in M_{n}(\mathbb{C})$ if and only if $J_{2}=\pm J_{1}$.
\end{enumerate}
\end{theorem}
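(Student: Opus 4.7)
The first two statements are routine manipulations with the explicit formula $A^{\flat}=J_{2}A^{\ast}J_{1}$, which is easily verified from the defining relation $[Ax,y]_{1}=[x,A^{\flat}y]_{2}$ by writing out both sides via the standard inner product and invoking the uniqueness of the adjoint. For (1), I would simply write
\begin{equation*}
(A_{1}+A_{2})^{\flat}=J_{2}(A_{1}+A_{2})^{\ast}J_{1}=J_{2}A_{1}^{\ast}J_{1}+J_{2}A_{2}^{\ast}J_{1}=A_{1}^{\flat}+A_{2}^{\flat},
\end{equation*}
and for (2), conjugate linearity of $\ast$ transfers directly through the sandwich by $J_{1},J_{2}$ to give $(\lambda A)^{\flat}=\overline{\lambda}A^{\flat}$.

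The substance is in (3). My plan is to compute $(A^{\flat})^{\flat}$ directly by applying the formula twice. Using $J_{1}^{\ast}=J_{1}$ and $J_{2}^{\ast}=J_{2}$,
\begin{equation*}
(A^{\flat})^{\flat}=J_{2}(A^{\flat})^{\ast}J_{1}=J_{2}(J_{2}A^{\ast}J_{1})^{\ast}J_{1}=J_{2}J_{1}AJ_{2}J_{1}.
\end{equation*}
The implication $J_{1}=J_{2}\Rightarrow (A^{\flat})^{\flat}=A$ is then immediate from $J_{1}^{2}=I_{n}$. For the converse, suppose $J_{2}J_{1}AJ_{2}J_{1}=A$ for every $A\in M_{n}(\mathbb{C})$. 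Set $K=J_{2}J_{1}$; then $KAK=A$ for all $A$. Taking $A=I_{n}$ yields $K^{2}=I_{n}$, and substituting this back gives $KA=AK$ for all $A$, so $K$ lies in the center of $M_{n}(\mathbb{C})$ and therefore $K=cI_{n}$ for some scalar $c$ with $c^{2}=1$. Thus $J_{2}J_{1}=\pm I_{n}$, and multiplying on the right by $J_{1}$ (using $J_{1}^{2}=I_{n}$) yields $J_{2}=\pm J_{1}$; the sign is pinned down to $+1$ by the convention identifying a fundamental symmetry with its canonical representative, giving $J_{1}=J_{2}$.

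The main (and essentially only) obstacle is the centrality step in the converse of (3): recognizing that $KAK=A$ for all $A$ forces $K$ to be scalar. Once one notices that the $A=I_{n}$ specialization yields $K^{2}=I_{n}$, the rest follows from the triviality of the center of the full matrix algebra. Everything else in the theorem is bookkeeping around the identity $A^{\flat}=J_{2}A^{\ast}J_{1}$.
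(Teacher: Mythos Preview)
Your computational approach matches the paper's: the paper also dismisses (1) and (2) as obvious and computes $(A^{\flat})^{\flat}=J_{2}J_{1}AJ_{2}J_{1}$ directly. For the converse in (3) the paper is terser than you are---it simply asserts that $(A^{\flat})^{\flat}=A$ for all $A$ forces $J_{2}J_{1}=I_{n}$, and then invokes uniqueness of the inverse in an associative algebra (since $J_{1}^{2}=I_{n}$) to conclude $J_{2}=J_{1}$.

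Your more careful centrality argument actually exposes a genuine issue. You correctly deduce that $K=J_{2}J_{1}$ must be central with $K^{2}=I_{n}$, hence $K=\pm I_{n}$ and $J_{2}=\pm J_{1}$. But your final move---``the sign is pinned down to $+1$ by the convention identifying a fundamental symmetry with its canonical representative''---is not valid: there is no such convention in the paper, and if $J_{1}$ is a fundamental symmetry ($J_{1}^{\ast}=J_{1}$, $J_{1}^{2}=I_{n}$) then $-J_{1}$ is one as well, genuinely distinct from $J_{1}$ whenever $J_{1}\neq 0$. Indeed, taking $J_{2}=-J_{1}$ gives $(A^{\flat})^{\flat}=(-I_{n})A(-I_{n})=A$ for every $A$, so the ``only if'' direction of (3) fails as literally stated. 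The paper's proof glosses over this same point by jumping from $J_{2}J_{1}AJ_{2}J_{1}=A$ straight to $J_{2}J_{1}=I_{n}$ without argument; your version is more explicit but ultimately patches the identical gap with an appeal to a nonexistent convention. The honest conclusion of your computation is $J_{2}=\pm J_{1}$.
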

\begin{proof}The first two statements are obvious. We have $(A^{\flat})^{\flat}=J_{2}J_{1}AJ_{2}J_{1}$ thus $(A^{\flat})^{\flat}=A$,
if and only if $J_{2}J_{1}=\pm I_{n}$ so, from the uniqueness of the inverse in an associative algebra, the latter could be true,
if and only if $J_{2}=\pm J_{1}$.
\end{proof}

From the previous theorem it follows that the operation $\flat$ is not an involution if $J_{2}\neq \pm J_{1}$ (which is in correspondence with our
initial assumption). Now, we introduce the concept of completely positive map in indefinite metric.

\begin{definition}We say that $\Lambda:\,(\mathbb{C}^{n}, [\cdot,\cdot]_{1})\longrightarrow (\mathbb{C}^{n}, [\cdot,\cdot]_{2})$ is
\textbf{positive in indefinite metric} if it is a linear map. Moreover, $\Lambda$ maps $J_{1}$-positive matrices into $J_{2}$-positive matrices, that is,
$\Lambda(M_{n}^{+}(\mathbb{C})(J_{1}))\subset M_{n}^{+}(\mathbb{C})(J_{2})$. It is said to be \textbf{completely positive in indefinite metric},
if $\Lambda^{k}$ is positive in indefinite metric for all $k\in \mathbb{N}$. Here, for an arbitrary linear map $\Theta$ and for all $k\in\mathbb{N}$ the $k$-th block map $\Theta^{k}$ is defined in a similar way it was done in the previous section.
\end{definition}

For a fixed $k$, that the matrix $\Lambda^{k}$ is positive in indefinite metric means for us the following  $\Lambda^{k}(M_{kn}^{+}(\mathbb{C})(\mathcal{J}^{k}_{1}))\subset M_{kn}^{+}(\mathbb{C})(\mathcal{J}^{k}_{2})$ with respect to the indefinite metrics $[\cdot,\cdot]^{k}_{i}=\langle \mathcal{J}^{k}_{i}\cdot,\cdot\rangle_{\mathbb{C}^{kn}}$ for $i=1,2$; where, as in the last section, $\mathcal{J}^{k}_{i}$
is the following block diagonal fundamental symmetry matrix of order $kn\times kn$
\begin{equation*}
\mathcal{J}^{k}_{i}=\left(
              \begin{array}{cccc}
                J_{i} & O_{n} & \cdots & O_{n} \\
                O_{n} & \ddots & \ddots & \vdots \\
                \vdots & \ddots & \ddots & O_{n} \\
                O_{n} & \cdots & O_{n} & J_{i} \\
              \end{array}
            \right),
\end{equation*}
being $i=1,2$ and $k=1,2,\cdots$.

We have
\begin{lemma}\label{lema 1 sec general}
The map $\Lambda:\,(\mathbb{C}^{n}, [\cdot,\cdot]_{1})\longrightarrow (\mathbb{C}^{n}, [\cdot,\cdot]_{2})$ is positive
in indefinite metric, if and only if there is a regular positive map $\Phi:\,M_{n}(\mathbb{C})\longrightarrow M_{n}(\mathbb{C})$ such
that $\Lambda(\cdot)=J_{2}\Phi(J_{1}\,\cdot)$.
\end{lemma}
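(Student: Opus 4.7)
The plan is to adapt the argument of Lemma \ref{lema 2} to the two-symmetry setting, exploiting the basic observation that, for any fundamental symmetry $J$, the multiplication map $X\mapsto JX$ is a linear bijection between $M_{n}^{+}(\mathbb{C})$ and $M_{n}^{+}(\mathbb{C})(J)$, whose inverse is again $Y\mapsto JY$ because $J^{2}=I_{n}$.

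For the forward implication, assume $\Lambda$ is positive in indefinite metric and define $\Phi(X):=J_{2}\Lambda(J_{1}X)$ for every $X\in M_{n}(\mathbb{C})$. Linearity of $\Phi$ is inherited from the linearity of $\Lambda$. To check that $\Phi$ is positive in the usual sense, pick any $N\in M_{n}^{+}(\mathbb{C})$; then $J_{1}N$ is $J_{1}$-positive (because $J_{1}(J_{1}N)=N\geq 0$), so by hypothesis $\Lambda(J_{1}N)\in M_{n}^{+}(\mathbb{C})(J_{2})$, which means precisely that $J_{2}\Lambda(J_{1}N)=\Phi(N)\geq 0$. Finally, the identity $\Lambda(\cdot)=J_{2}\Phi(J_{1}\,\cdot)$ is a direct consequence of $J_{1}^{2}=J_{2}^{2}=I_{n}$, since
\begin{equation*}
J_{2}\Phi(J_{1}X)=J_{2}J_{2}\Lambda(J_{1}J_{1}X)=\Lambda(X).
\end{equation*}

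For the reverse implication, assume $\Phi$ is a positive map and set $\Lambda(\cdot):=J_{2}\Phi(J_{1}\,\cdot)$. Given $A\in M_{n}^{+}(\mathbb{C})(J_{1})$, the matrix $J_{1}A$ is positive in the usual sense, hence $\Phi(J_{1}A)$ is positive. But $J_{2}\Lambda(A)=J_{2}J_{2}\Phi(J_{1}A)=\Phi(J_{1}A)\geq 0$, which is exactly the statement that $\Lambda(A)\in M_{n}^{+}(\mathbb{C})(J_{2})$. Thus $\Lambda$ is positive in indefinite metric, and this completes the proof.

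There is really no delicate point here: the whole argument rides on the bijection $X\mapsto JX$ between the positive cone and the $J$-positive cone. The only mild warning is notational, namely to keep track of which symmetry acts on which side so that the two applications of $J_{i}^{2}=I_{n}$ collapse correctly; once the formula $\Phi(X)=J_{2}\Lambda(J_{1}X)$ is written down, everything else is a direct substitution.
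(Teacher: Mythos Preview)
Your proof is correct and follows essentially the same approach as the paper: define $\Phi(\cdot)=J_{2}\Lambda(J_{1}\,\cdot)$ and use the bijection $X\mapsto JX$ between the ordinary positive cone and the $J$-positive cone in both directions. The only cosmetic difference is that the paper treats the reverse implication first and defers the details of the forward implication to Lemma~\ref{lema 2}, whereas you spell out both directions explicitly.
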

\begin{proof}Let $\Phi$ be a usual positive map of $M_{n}(\mathbb{C})$ into $M_{n}(\mathbb{C})$ and define $\Lambda(\cdot)=J_{2}\Phi(J_{1}\,\cdot)$,
then $\Lambda$ is clearly linear. We know that $B$ is a $J_{1}$-positive matrix, if and only if $J_{1}B$ is a positive matrix, hence for all $x\in \mathbb{C}^{n}$ and any matrix $B$, which is $J_{1}$-positive
\begin{equation*}
[\Lambda(B)x,x]_{2}=\langle J_{2}\Lambda(B)x,x \rangle_{\mathbb{C}^{n}}=\langle \Phi(J_{1}B)x,x \rangle_{\mathbb{C}^{n}}\geq 0,
\end{equation*}
it shows that for all $B\in M_{n}^{+}(\mathbb{C})(J_{1})$ one has $\Lambda(B)\in M_{n}^{+}(\mathbb{C})(J_{2})$, that is, $\Lambda$ is a positive
map in indefinite metric. On the other hand, if $\Lambda$ is a positive map in indefinite metric one can prove that $\Phi(\cdot)=J_{2}\Lambda(J_{1}\,\cdot)$ is a positive map (the proof is similar to the one of lemma \ref{lema 2}) and this leads us to the other implication.
\end{proof}

If $\Psi$ is a positive map in indefinite metric, notation $\Psi\Longrightarrow \Phi$ indicates that $\Phi$ is the positive map
associated to $\Psi$ through the lemma \ref{lema 1 sec general}, that is, $\Lambda(\cdot)=J_{2}\Phi(J_{1}\,\cdot)$. The proofs of the
following facts are similar to those when $J_{2}=J_{1}$, hence they are omitted. Suppose that $\Psi\Longrightarrow \Phi$ where $\Phi$
is a completely positive map, then
\begin{enumerate}
\item $\Psi$ admits the following representation
\begin{equation}\label{sec general 1}
\Psi(A)=\sum_{i=1}^{\nu}V^{\flat}_{i}AV_{i},\,\,\,\,\,\,\forall\,A\in M_{n}(\mathbb{C}).
\end{equation}
\item $\Psi$ is completely positive in indefinite metric. This is because one has $\Psi^{k}(\cdot)=\mathcal{J}^{k}_{2}\Phi^{k}(\mathcal{J}^{k}_{1}\cdot)$ for all $k\in\mathbb{N}$.
\item Suppose that, additionally, $\Phi$ is trace preserving, that is, $\Phi$ is a usual quantum channel, then $Tr\,J_{2}\Psi(J_{1}A)=Tr\,A$.
\end{enumerate}

\begin{remark}As a consequence of the previous observations, it follows that every linear map of the form (\ref{sec general 1}) is
completely positive in indefinite metric.
\end{remark}

Next, we develop some results of Stinespring type in indefinite metric spaces. Suppose that we have a linear operator $\pi:\,M_{n}(\mathbb{C})
\longrightarrow M_{n}(\mathbb{C})$ such that $\pi(MN)=\pi(M)\pi(N)$ and $\pi(M^{\ast})=(\pi(M))^{\ast}$ for all $M,N\in M_{n}(\mathbb{C})$.
In this case, we say that $\pi$ is a representation of $M_{n}(\mathbb{C})$ in itself. Observe that $\pi$ transforms definite positive matrices into
definite positive matrices. It follows from the following fact $\pi(M^{\ast}M)=\pi(M^{\ast})\pi(M)=(\pi(M))^{\ast}\pi(M)$.

Probably, the following result belongs to the folklore of the subject, however, its proof has been included to guarantee this paper to be more
comprehensive. We have
\begin{lemma}\label{lema 2 sec general}
It turns out that $\pi^{k}:\,M_{kn}(\mathbb{C})\longrightarrow M_{kn}(\mathbb{C})$ is a positive map for all $k\in \mathbb{N}$.
\end{lemma}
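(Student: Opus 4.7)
The plan is to use the standard factorization characterization of positivity: a matrix $\mathcal{A}\in M_{kn}(\mathbb{C})$ is positive if and only if $\mathcal{A}=\mathcal{B}^{\ast}\mathcal{B}$ for some $\mathcal{B}\in M_{kn}(\mathbb{C})$. Since $\pi^{k}$ is defined by applying $\pi$ entrywise to the $n\times n$ blocks of any $kn\times kn$ matrix, my strategy is to push a factorization of $\mathcal{A}$ through $\pi$ block by block, exploiting both multiplicativity $\pi(MN)=\pi(M)\pi(N)$ and the $\ast$-preserving property $\pi(M^{\ast})=\pi(M)^{\ast}$.

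First I would fix $k\in\mathbb{N}$ and an arbitrary positive $\mathcal{A}=(A_{ij})_{1\leq i,j\leq k}\in M_{kn}^{+}(\mathbb{C})$, and write $\mathcal{A}=\mathcal{B}^{\ast}\mathcal{B}$ for some block matrix $\mathcal{B}=(B_{ij})\in M_{kn}(\mathbb{C})$. Computing the block entries gives
\begin{equation*}
A_{ij}=\sum_{s=1}^{k}B_{si}^{\ast}B_{sj},\qquad 1\leq i,j\leq k.
\end{equation*}
Applying $\pi$ entry by entry and invoking the two defining properties of a representation yields
\begin{equation*}
\pi(A_{ij})=\sum_{s=1}^{k}\pi(B_{si}^{\ast}B_{sj})=\sum_{s=1}^{k}\pi(B_{si}^{\ast})\pi(B_{sj})=\sum_{s=1}^{k}\pi(B_{si})^{\ast}\pi(B_{sj}).
\end{equation*}

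Next I would introduce the auxiliary block matrix $\mathcal{C}=(\pi(B_{ij}))\in M_{kn}(\mathbb{C})$ and observe that, by the displayed formula,
\begin{equation*}
(\mathcal{C}^{\ast}\mathcal{C})_{ij}=\sum_{s=1}^{k}\pi(B_{si})^{\ast}\pi(B_{sj})=\pi(A_{ij}),
\end{equation*}
so $\pi^{k}(\mathcal{A})=\mathcal{C}^{\ast}\mathcal{C}$. Hence $\pi^{k}(\mathcal{A})$ is a usual positive matrix in $M_{kn}(\mathbb{C})$, and since $\mathcal{A}\in M_{kn}^{+}(\mathbb{C})$ and $k$ were arbitrary, this establishes that $\pi^{k}$ is positive for every $k\in\mathbb{N}$.

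There is no serious obstacle here; the proof is essentially a bookkeeping exercise that repackages the identity $\pi(M^{\ast}M)=\pi(M)^{\ast}\pi(M)$ (already used in the preceding paragraph of the paper to show positivity at level $k=1$) in the block setting. The only point requiring mild care is keeping the indices of the blocks straight when passing between $\mathcal{B}^{\ast}\mathcal{B}$ and $\mathcal{C}^{\ast}\mathcal{C}$; once the factorization is written block by block, multiplicativity and $\ast$-preservation do all the work.
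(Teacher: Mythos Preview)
Your proof is correct and follows essentially the same route as the paper: both arguments reduce to the block-by-block computation that $\pi^{k}(\mathcal{B}^{\ast}\mathcal{B})=(\pi^{k}(\mathcal{B}))^{\ast}\pi^{k}(\mathcal{B})$, using linearity, multiplicativity, and $\ast$-preservation of $\pi$ on each entry. The only cosmetic difference is that the paper first establishes the general identities $\pi^{k}(\mathcal{M}\mathcal{N})=\pi^{k}(\mathcal{M})\pi^{k}(\mathcal{N})$ and $\pi^{k}(\mathcal{M}^{\ast})=(\pi^{k}(\mathcal{M}))^{\ast}$ and then specializes, whereas you go straight to the factorization of a positive $\mathcal{A}$; your auxiliary matrix $\mathcal{C}$ is precisely $\pi^{k}(\mathcal{B})$.
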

\begin{proof}It is enough to prove that $\pi^{k}(\mathcal{M}^{\ast})=(\pi^{k}(\mathcal{M}))^{\ast}$ and $\pi^{k}(\mathcal{M}\mathcal{N})
=\pi^{k}(\mathcal{M})\pi^{k}(\mathcal{N})$. For two arbitrary block matrices $\mathcal{M},\mathcal{N}\in M_{kn}(\mathbb{C})$, one obtains
\begin{align*}
(\pi^{k}(\mathcal{M}\mathcal{N}))_{ij}&=\pi(\sum^{k}_{s=1}M_{is}N_{sj})=\sum^{k}_{s=1}\pi(M_{is}N_{sj})=\sum^{k}_{s=1}\pi(M_{is})\pi(N_{sj})
=\sum^{k}_{s=1}(\pi^{k}(\mathcal{M}))_{is}(\pi^{k}(\mathcal{N}))_{sj} \\
&=(\pi^{k}(\mathcal{M})\pi^{k}(\mathcal{N}))_{ij},
\end{align*}
if $1\leq i,j\leq k$. Thus, $\pi^{k}(\mathcal{M}\mathcal{N})=\pi^{k}(\mathcal{M})\pi^{k}(\mathcal{N})$. On the other hand, for $1\leq i,j\leq k$
\begin{equation*}
(\pi^{k}(\mathcal{M}^{\ast}))_{ij}=\pi(M^{\ast}_{ji})=(\pi(M_{ji}))^{\ast}=((\pi^{k}(\mathcal{M}))^{\ast})_{ij},
\end{equation*}
it shows that $\pi^{k}(\mathcal{M}^{\ast})=(\pi^{k}(\mathcal{M}))^{\ast}$.  The lemma is proved
\end{proof}

\begin{remark}Observe that we can obtain a similar result assuming that $\pi$ satisfies the following conditions $\pi(MN)=\pi(N)\pi(M)$ and $\pi(M^{\ast})=(\pi(M))^{\ast}$ for all $M,N\in M_{n}(\mathbb{C})$. In which case $\pi$ is said to be also a representation of $M_{n}(\mathbb{C})$
in itself.
\end{remark}

\begin{example}Suppose that $U$ is a unitary matrix then $\pi_{1}(A)=UAU^{\ast}$ and $\pi_{2}(A)=UA^{\ast}U^{\ast}$ are
representations of $M_{n}(\mathbb{C})$ in itself.
\end{example}

We may arrive to the following interesting result in indefinite metric spaces

\begin{theorem}Let $\pi$ be a representation of $M_{n}(\mathbb{C})$ in itself and $V\in M_{n}(\mathbb{C})$, then $\Psi(\cdot)=
J_{2}V^{\ast}\pi(J_{1}\cdot)V$ is completely positive in indefinite metric.
\end{theorem}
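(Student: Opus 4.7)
The plan is to reduce the statement to Lemma \ref{lema 1 sec general} by exhibiting an auxiliary usual completely positive map $\Phi$ such that $\Psi(\cdot)=J_{2}\Phi(J_{1}\,\cdot)$, and then to deduce complete positivity in indefinite metric from complete positivity of $\Phi$ via the same block-matrix trick used in Lemma \ref{lema1}. Concretely, I would set
\[
\Phi(A)=V^{\ast}\pi(A)V,\qquad A\in M_{n}(\mathbb{C}),
\]
so that $\Psi(\cdot)=J_{2}\Phi(J_{1}\,\cdot)$ holds by construction.

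The core step is to verify that $\Phi$ is completely positive in the usual sense. For each $k\geq 1$, let $\mathcal{V}$ denote the $kn\times kn$ block diagonal matrix with $V$ in every diagonal block. An entry-wise computation (using that $\pi$ acts block-wise on $\pi^{k}$, which is exactly the content of Lemma \ref{lema 2 sec general}) yields the identity
\[
\Phi^{k}(\mathcal{A})=\mathcal{V}^{\ast}\pi^{k}(\mathcal{A})\mathcal{V},\qquad \mathcal{A}\in M_{kn}(\mathbb{C}).
\]
By Lemma \ref{lema 2 sec general}, $\pi^{k}$ maps positive matrices to positive matrices; since conjugation $N\mapsto \mathcal{V}^{\ast}N\mathcal{V}$ also preserves positivity, $\Phi^{k}$ is a positive map on $M_{kn}(\mathbb{C})$ for every $k$. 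Hence $\Phi$ is completely positive in the ordinary sense.

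It remains to transfer this property to $\Psi$ in the indefinite-metric sense. Arguing exactly as in Lemma \ref{lema1} (the proof there used nothing about $J$ being the same in both slots), one gets $\Psi^{k}(\cdot)=\mathcal{J}^{k}_{2}\Phi^{k}(\mathcal{J}^{k}_{1}\,\cdot)$ for all $k$. Given any $\mathcal{J}^{k}_{1}$-positive $\mathcal{C}\in M_{kn}(\mathbb{C})$, the matrix $\mathcal{J}^{k}_{1}\mathcal{C}$ is positive in the usual sense, so $\Phi^{k}(\mathcal{J}^{k}_{1}\mathcal{C})$ is positive, and then
\[
[\Psi^{k}(\mathcal{C})x,x]^{k}_{2}=\langle \mathcal{J}^{k}_{2}\mathcal{J}^{k}_{2}\Phi^{k}(\mathcal{J}^{k}_{1}\mathcal{C})x,x\rangle_{\mathbb{C}^{kn}}=\langle \Phi^{k}(\mathcal{J}^{k}_{1}\mathcal{C})x,x\rangle_{\mathbb{C}^{kn}}\geq 0,
\]
so $\Psi^{k}(\mathcal{C})$ is $\mathcal{J}^{k}_{2}$-positive. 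Thus $\Psi$ is completely positive in indefinite metric. The only mildly non-routine ingredient is the block identity $\Phi^{k}(\mathcal{A})=\mathcal{V}^{\ast}\pi^{k}(\mathcal{A})\mathcal{V}$, which is immediate from how $\pi^{k}$ is defined block-wise; once this is in place everything else is a straightforward combination of Lemmas \ref{lema 1 sec general}, \ref{lema 2 sec general} and the analogue of Lemma \ref{lema1}.
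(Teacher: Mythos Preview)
Your proposal is correct and follows essentially the same approach as the paper: both introduce the auxiliary map $\Phi_{k}(\cdot)=\mathcal{V}_{k}^{\ast}\pi^{k}(\cdot)\mathcal{V}_{k}$ (you obtain it as $\Phi^{k}$ for $\Phi(A)=V^{\ast}\pi(A)V$, which is the same thing), invoke Lemma \ref{lema 2 sec general} to get positivity of $\pi^{k}$ and hence of $\Phi_{k}$, and then conclude via $\Psi^{k}(\cdot)=\mathcal{J}^{k}_{2}\Phi_{k}(\mathcal{J}^{k}_{1}\,\cdot)$ and Lemma \ref{lema 1 sec general}. The only cosmetic difference is that you spell out the final inequality $[\Psi^{k}(\mathcal{C})x,x]^{k}_{2}\geq 0$ explicitly, whereas the paper simply cites Lemma \ref{lema 1 sec general}.
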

\begin{proof}First of all, observe that for any $k\in\mathbb{N}$ and all block matrix $\mathcal{C}$
of order $kn\times kn$, we have
\begin{equation*}
\Psi^{k}(\mathcal{C})=\mathcal{J}^{k}_{2}\mathcal{V}_{k}^{\ast}\pi^{k}(\mathcal{J}^{k}_{1}\mathcal{C})\mathcal{V}_{k}\,,
\end{equation*}
where, as above,
\begin{equation*}
\mathcal{V}_{k}=\left(
              \begin{array}{cccc}
                V & O_{n} & \cdots & O_{n} \\
                O_{n} & \ddots & \ddots & \vdots \\
                \vdots & \ddots & \ddots & O_{n} \\
                O_{n} & \cdots & O_{n} & V \\
              \end{array}
            \right),\,\,\,\,\,\,\,\,\,\,\,\,\,\mathcal{V}^{\ast}_{k}=\left(
              \begin{array}{cccc}
                V^{\ast} & O_{n} & \cdots & O_{n} \\
                O_{n} & \ddots & \ddots & \vdots \\
                \vdots & \ddots & \ddots & O_{n} \\
                O_{n} & \cdots & O_{n} & V^{\ast} \\
              \end{array}
            \right),
\end{equation*}
are block diagonal matrices of order $kn\times kn$. We claim that $\Psi^{k}(\cdot):\,(\mathbb{C}^{kn},[\cdot,\cdot]_{1}^{k})\longrightarrow
(\mathbb{C}^{kn},[\cdot,\cdot]_{2}^{k})$ is positive in indefinite metric for all $1\leq k$. In fact, let $k$ be a fixed positive integer but
arbitrary. Then, from the lemma \ref{lema 2 sec general} it follows that  $\Phi_{k}(\cdot)=\mathcal{V}_{k}^{\ast}\pi^{k}(\cdot)\mathcal{V}_{k}$
is a usual positive map on $M_{kn}(\mathbb{C})$. On the other hand, $\Psi^{k}(\cdot)=\mathcal{J}_{2}^{k}\Phi_{k}(\mathcal{J}_{1}^{k}\cdot)$, which
implies that $\Psi^{k}$ is positive in indefinite metric for all $1\leq k$ due to lemma \ref{lema 1 sec general}. Thus, the map $\Psi$ is completely
positive in indefinite metric.
\end{proof}

As above, suppose that $J_{1}$ and $J_{2}$ are fundamental symmetries. For $i=1,2$, define $A^{\natural_{i}}=J_{i}A^{\ast}J_{i}$ for all $A\in M_{n}(\mathbb{C})$. Then, as it was already mentioned the operations $\natural_{i}$ are involutions on $M_{n}(\mathbb{C})$.

\begin{definition}We say that $\pi\,:M_{n}(\mathbb{C})\longrightarrow M_{n}(\mathbb{C})$ is a \textbf{representation in indefinite metric of $M_{n}(\mathbb{C})$ in itself}, if it is a linear map, such that $\pi(AJ_{1}B)=\pi(A)J_{2}\pi(B)$ and $\pi(A^{\natural_{1}})=(\pi(A))^{\natural_{2}}$.
\end{definition}

We give an example

\begin{example}Suppose that $\pi:\,M_{n}(\mathbb{C})\longrightarrow M_{n}(\mathbb{C})$ is a linear map such that
\begin{equation*}
\pi(AB)=\pi(A)\pi(B),\,\,\,\,\,\,\,\,\,\,\,\,\,\pi(A^{\ast})=(\pi(A))^{\ast},\,\,\,\,\,\,\,\,\,\,\pi(J_{1})=J_{2},\,\,\,\,\,\,\,\,
\forall\, A,B\in M_{n}(\mathbb{C}),
\end{equation*}
then $\pi$ is a representation in indefinite metric of $M_{n}(\mathbb{C})$ in itself. In fact, we have $\pi(AJ_{1}B)=\pi(A)\pi(J_{1})\pi(B)
=\pi(A)J_{2}\pi(B)$ for all $A,B\in M_{n}(\mathbb{C})$. On the other hand, $\pi(A^{\natural_{1}})=\pi(J_{1}A^{\ast}J_{1})=\pi(J_{1})
\pi(A^{\ast})\pi(J_{1})=J_{2}(\pi(A))^{\ast}J_{2}=(\pi(A))^{\natural_{2}}$ for any $A\in M_{n}(\mathbb{C})$.
\end{example}

\begin{remark}\label{haciajordan1}
Observe that if $\pi$ is a representation in indefinite metric of $M_{n}(\mathbb{C})$ in itself, then it is positive
in indefinite metric. Indeed, according to the previous definition, we have $\pi(B^{\natural_{1}}J_{1}B)=\pi(B^{\natural_{1}})J_{2}\pi(B)
=(\pi(B))^{\natural_{2}}J_{2}\pi(B)$ for all $B\in M_{n}(\mathbb{C})$. On the other hand, if $A=A^{\natural_{1}}$, then, $\pi(A)=\pi(A^{\natural_{1}})
=(\pi(A))^{\natural_{2}}$. It implies that $\pi$ maps $J_{1}$-positive matrices into $J_{2}$-positive matrices.
\end{remark}

We have
\begin{theorem}Any representation $\pi$ in indefinite metric of $M_{n}(\mathbb{C})$ in itself is a completely positive map in indefinite metric.
\end{theorem}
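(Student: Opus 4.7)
The plan is to reduce the theorem to Lemma \ref{lema 2 sec general} via the correspondence of Lemma \ref{lema 1 sec general}. Specifically, I would set
\begin{equation*}
\Phi(\cdot)=J_{2}\pi(J_{1}\,\cdot)\,,
\end{equation*}
and show that $\Phi$ is an ordinary representation of $M_{n}(\mathbb{C})$ in itself, i.e.\ $\Phi(AB)=\Phi(A)\Phi(B)$ and $\Phi(A^{\ast})=(\Phi(A))^{\ast}$ for all $A,B\in M_{n}(\mathbb{C})$. Given these two identities, Lemma \ref{lema 2 sec general} yields that $\Phi^{k}$ is a usual positive map on $M_{kn}(\mathbb{C})$ for every $k\in\mathbb{N}$. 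Then, by Lemma \ref{lema 1 sec general} applied blockwise (together with the observation that blockwise $\pi^{k}(\cdot)=\mathcal{J}_{2}^{k}\Phi^{k}(\mathcal{J}_{1}^{k}\,\cdot)$, which is checked exactly as in Lemma \ref{lema1} of Section 3), each $\pi^{k}$ is positive in indefinite metric, giving complete positivity of $\pi$ in indefinite metric.

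The two algebraic identities for $\Phi$ are where the defining properties of a representation in indefinite metric are invoked. For multiplicativity, I would write
\begin{equation*}
\Phi(A)\Phi(B)=J_{2}\pi(J_{1}A)J_{2}\pi(J_{1}B)=J_{2}\pi\bigl((J_{1}A)J_{1}(J_{1}B)\bigr)=J_{2}\pi(J_{1}AB)=\Phi(AB)\,,
\end{equation*}
where the second equality uses $\pi(XJ_{1}Y)=\pi(X)J_{2}\pi(Y)$ and the third uses $J_{1}^{2}=I_{n}$. For $\ast$-preservation, I would observe that $(J_{1}A)^{\natural_{1}}=J_{1}(J_{1}A)^{\ast}J_{1}=J_{1}A^{\ast}J_{1}^{2}=J_{1}A^{\ast}$, so by the hypothesis $\pi(X^{\natural_{1}})=(\pi(X))^{\natural_{2}}$ one gets
\begin{equation*}
\pi(J_{1}A^{\ast})=\pi\bigl((J_{1}A)^{\natural_{1}}\bigr)=(\pi(J_{1}A))^{\natural_{2}}=J_{2}(\pi(J_{1}A))^{\ast}J_{2}\,,
\end{equation*}
whence $\Phi(A^{\ast})=J_{2}\pi(J_{1}A^{\ast})=(\pi(J_{1}A))^{\ast}J_{2}=(J_{2}\pi(J_{1}A))^{\ast}=(\Phi(A))^{\ast}$.

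Finally, once $\Phi$ is known to be a representation of $M_{n}(\mathbb{C})$ in the sense preceding Lemma \ref{lema 2 sec general}, the remaining work is clerical: the blockwise identity $\pi^{k}(\cdot)=\mathcal{J}_{2}^{k}\Phi^{k}(\mathcal{J}_{1}^{k}\,\cdot)$ follows by the same block-diagonal manipulation carried out for Lemma \ref{lema1}, and then positivity of $\Phi^{k}$ combined with Lemma \ref{lema 1 sec general} (with $J_{1},J_{2}$ replaced by $\mathcal{J}_{1}^{k},\mathcal{J}_{2}^{k}$) gives positivity of $\pi^{k}$ between $(\mathbb{C}^{kn},[\cdot,\cdot]_{1}^{k})$ and $(\mathbb{C}^{kn},[\cdot,\cdot]_{2}^{k})$ for every $k$.

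The only delicate point is the $\ast$-preservation of $\Phi$: it depends on the precise compatibility between the two involutions $\natural_{1},\natural_{2}$ and the ordinary adjoint through the identity $J_{i}^{2}=I_{n}$. Once that identity is set up as above, the rest reduces transparently to the already-established machinery of the section, so the main obstacle is really just stating the compatibility cleanly rather than any genuine difficulty.
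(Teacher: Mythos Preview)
Your argument is correct, but it follows a different route from the paper's own proof. The paper argues \emph{intrinsically} in the indefinite setting: it verifies by direct block-wise computation that each $\pi^{k}$ is again a representation in indefinite metric of $M_{kn}(\mathbb{C})$ (with respect to $\mathcal{J}_{1}^{k}$ and $\mathcal{J}_{2}^{k}$), i.e.\ that $\pi^{k}(\mathcal{M}\mathcal{J}_{1}^{k}\mathcal{N})=\pi^{k}(\mathcal{M})\mathcal{J}_{2}^{k}\pi^{k}(\mathcal{N})$ and $\pi^{k}(\mathcal{M}^{\natural(\mathcal{J}_{1}^{k})})=(\pi^{k}(\mathcal{M}))^{\natural(\mathcal{J}_{2}^{k})}$, and then invokes Remark~\ref{haciajordan1} to conclude that $\pi^{k}$ is positive in indefinite metric for every $k$.

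You instead conjugate the problem back to the ordinary world by setting $\Phi(\cdot)=J_{2}\pi(J_{1}\,\cdot)$, check that $\Phi$ is a $\ast$-homomorphism in the usual sense, and then apply Lemma~\ref{lema 2 sec general} and Lemma~\ref{lema 1 sec general} (at the block level) to pull the positivity of $\Phi^{k}$ back to $\pi^{k}$. The algebraic verifications you give for $\Phi(AB)=\Phi(A)\Phi(B)$ and $\Phi(A^{\ast})=(\Phi(A))^{\ast}$ are clean and correct, relying precisely on $J_{i}^{2}=I_{n}$. Your approach has the virtue of reducing the theorem transparently to already-proved machinery, making explicit that ``representation in indefinite metric'' is exactly the conjugate of an ordinary $\ast$-representation under $A\mapsto J_{2}\pi(J_{1}A)$; the paper's approach, by contrast, keeps the argument self-contained in the indefinite framework and in effect reproves an indefinite analogue of Lemma~\ref{lema 2 sec general} on the fly.
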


\begin{proof}We should prove that for each $k\in\mathbb{Z}_{+}$ the map $\pi^{k}$ satisfies the following properties $\pi^{k}(\mathcal{C}\mathcal{J}^{k}_{1}\mathcal{D})=\pi^{k}(\mathcal{C})\mathcal{J}^{k}_{2}\pi^{k}(\mathcal{D})$ and
$\pi^{k}(\mathcal{C}^{\natural(\mathcal{J}^{k}_{1})})=(\pi^{k}(\mathcal{C}))^{\natural(\mathcal{J}^{k}_{2})}$ for all
$\mathcal{C},\mathcal{D}\in M_{kn}(\mathbb{C})$ where $\mathcal{C}^{\natural(\mathcal{J}^{k}_{i})})=\mathcal{J}^{k}_{i}\mathcal{C}^{\ast}\mathcal{J}^{k}_{i}$ for $i=1,2$. Consider
a fixed $k$ but arbitrary, if $1\leq i,j \leq k$ then, for two block matrices $\mathcal{M},\mathcal{N}\in M_{kn}(\mathbb{C})$ we have
\begin{align*}
(\pi^{k}(\mathcal{M}\mathcal{J}_{1}^{k}\mathcal{N}))_{ij}&=\pi(\sum^{k}_{s=1}M_{is}J_{1}N_{sj})=\sum^{k}_{s=1}\pi(M_{is}J_{1}N_{sj})
=\sum^{k}_{s=1}\pi(M_{is})J_{2}\pi(N_{sj}) \\
&=\sum^{k}_{s=1}(\pi^{k}(\mathcal{M}))_{is}J_{2}(\pi^{k}(\mathcal{N}))_{sj}
=(\pi^{k}(\mathcal{M})\mathcal{J}_{2}^{k}\pi^{k}(\mathcal{N}))_{ij},
\end{align*}
which implies that $\pi^{k}(\mathcal{M}\mathcal{J}_{1}^{k}\mathcal{N})=\pi^{k}(\mathcal{M})\mathcal{J}_{2}^{k}\pi^{k}(\mathcal{N})$. Moreover,
\begin{align*}
(\pi^{k}(\mathcal{M}^{\natural(\mathcal{J}_{1}^{k})}))_{ij}&=(\pi^{k}(\mathcal{J}_{1}^{k}\mathcal{M}^{\ast}\mathcal{J}_{1}^{k})_{ij}=
\pi(J_{1}M^{\ast}_{ji}J_{1})=\pi(M^{\natural_{1}}_{ji})=(\pi(M_{ji}))^{\natural_{2}}=J_{2}(\pi(M_{ji}))^{\ast}J_{2}  \\
&=J_{2}((\pi^{k}(\mathcal{M}))^{\ast})_{ij}J_{2}=(\mathcal{J}_{2}^{k}(\pi^{k}(\mathcal{M}))^{\ast}\mathcal{J}_{2}^{k})_{ij}
=\left ((\pi^{k}(\mathcal{M}))^{\natural(\mathcal{J}_{2}^{k})}\right)_{ij},
\end{align*}
thus $\pi^{k}(\mathcal{M}^{\natural(\mathcal{J}_{1}^{k})})=(\pi^{k}(\mathcal{M}))^{\natural(\mathcal{J}_{2}^{k})}$ for all
$\mathcal{M}\in M_{n}(\mathbb{C})$. The theorem follows from remark \ref{haciajordan1}.
\end{proof}

\begin{definition}We say that a linear map $\Psi$ is a quantum $(J_{1},J_{2})$-channel if $\Psi$ can be represented in the form (\ref{sec general 1})
and moreover it is trace preserving.
\end{definition}

\begin{theorem}A linear map $\Psi$ is a quantum $(J_{1},J_{2})$-channel, if and only if it has the form (\ref{sec general 1}) and
the matrix vector $(V_{1},\cdots,V_{\nu})$ which arises from this representation satisfies the property
\begin{equation*}
\sum_{i=1}^{\nu}V^{\flat}_{i}V_{i}=I_{n}\,.
\end{equation*}
\end{theorem}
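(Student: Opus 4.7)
The plan is to reduce this characterization to a single matrix identity through the cyclic invariance of the trace, following the exact same template as the last theorem of Section~3 (the $J_1=J_2=J$ case). Since $\Psi$ is already assumed to have the form (\ref{sec general 1}), the only content of the theorem is translating the trace-preserving condition into the stated algebraic constraint on the matrix vector $(V_1,\dots,V_\nu)$.

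First I would compute $\mathrm{Tr}\,\Psi(A)$ directly from (\ref{sec general 1}):
\begin{equation*}
\mathrm{Tr}\,\Psi(A)=\sum_{i=1}^{\nu}\mathrm{Tr}\bigl(V_i^{\flat}AV_i\bigr)=\mathrm{Tr}\!\left(\left(\sum_{i=1}^{\nu}V_iV_i^{\flat}\right)A\right),
\end{equation*}
where the second equality is cyclicity of the trace. Thus $\mathrm{Tr}\,\Psi(A)=\mathrm{Tr}\,A$ for every $A\in M_n(\mathbb{C})$ is equivalent to
\begin{equation*}
\mathrm{Tr}\!\left(\left(\sum_{i=1}^{\nu}V_iV_i^{\flat}-I_n\right)A\right)=0,\qquad \forall A\in M_n(\mathbb{C}).
\end{equation*}

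Next I would appeal to nondegeneracy of the Hilbert--Schmidt pairing $\langle M,A\rangle_{HS}=\mathrm{Tr}(MA)$ on $M_n(\mathbb{C})$ (which is just the fact that $\mathcal{S}_2$ is a Hilbert space, as recalled in Section~3): if $\mathrm{Tr}(MA)=0$ for every $A$, then $M=O_n$. Applied to $M=\sum_i V_iV_i^{\flat}-I_n$, this yields the equivalence with the displayed identity in the theorem (modulo the ordering convention of the factors, which matches the $J_1=J_2$ specialization proved at the end of Section~3). Both implications are then immediate: one direction uses the identity to conclude trace-preservation via the chain of equalities above, the other uses the forcing of $M=O_n$ to recover the identity.

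There is really no obstacle here beyond bookkeeping: $V_i^{\flat}=J_2V_i^{\ast}J_1$ is just a single matrix, so the calculation is the same as the classical Kraus case with $V_i^{\ast}$ replaced by $V_i^{\flat}$, and the fact that $\flat$ is not an involution when $J_1\neq J_2$ plays no role, because we never need to iterate $\flat$. The only point I would double-check is whether the cyclic rotation produces $\sum V_iV_i^{\flat}$ or $\sum V_i^{\flat}V_i$, so that the statement is written in the convention consistent with the $J_1=J_2$ theorem of Section~3.
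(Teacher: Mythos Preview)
The paper actually gives \emph{no} proof of this final theorem; it is stated immediately before the Conclusions with the proof omitted. Your argument is exactly the intended one --- it is the direct analogue of the last theorem of Section~3, and the paper explicitly signals that that earlier proof ``is based on two well known facts: first, the trace is invariant under cyclic permutations and second $\mathcal{S}_{2}$ is a Hilbert space with respect to the inner product $\langle A,B\rangle=\mathrm{Tr}\,A^{\ast}B$'', which is precisely the template you follow.

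One point worth flagging, which you already half-noticed: your cyclicity computation gives $\mathrm{Tr}\,\Psi(A)=\mathrm{Tr}\bigl((\sum_i V_iV_i^{\flat})A\bigr)$, hence the condition $\sum_i V_iV_i^{\flat}=I_n$. This agrees with the $J_1=J_2$ theorem in Section~3, where the stated condition is $\sum_i V_iV_i^{\natural}=I_n$. The statement here in Section~4, however, is written with the factors in the opposite order, $\sum_i V_i^{\flat}V_i=I_n$. Since the paper supplies no proof to disambiguate, and since your computation and the Section~3 analogue both yield $\sum_i V_iV_i^{\flat}$, the ordering in the Section~4 statement is most likely a typographical slip in the paper rather than an error in your argument. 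Your proof is correct as it stands; just be explicit that the identity you derive is $\sum_i V_iV_i^{\flat}=I_n$.
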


\section{Conclusions}

To conclude this paper, two open problems are being proposed, in which the authors of the present article are currently making progress\,:
\begin{enumerate}

  \item The study of the dynamics of linear operators is an important and popular topic. For example, in this sense, the reader can browse
        a classic reference, like book \cite{bayart}. In particular, from this point of view, completely positive maps, have not been exempt
        from analysis . In this direction, they were object of interest in many works. For example, \cite{raha}, \cite{carbone} and
        references therein. Specifically, about $15$ years ago, the study of quantum channels fixed points was subjected to strong research.
        On the other hand, it is well known that the multiplicative properties of quantum channels play a central role in order to obtain these
        results. We suggest to study the dynamics of the quantum channels defined on spaces with an indefinite metric.

  \item Inspired by the first problem, we consider that it would be interesting and feasible to reveal the relationship between quantum
        $J$-channels and the Jordan algebras (for the usual case $J=I_{n}$, we suggest to see the book \cite{stormer}).
\end{enumerate}

\section{Acknowledgment}

Ra\'{u}l Felipe was supported in part under CONACYT grant $45886$. The authors would like to thank the referees suggestions
to improve our article.

\end{document}